\documentclass{article}
\usepackage[utf8]{inputenc}
\usepackage{amsmath,amssymb,mathtools,amsthm}
\usepackage[margin=1.0in]{geometry}
\usepackage{enumerate}
\usepackage{graphicx}
\usepackage{fancyhdr}
\pagestyle{fancy}
\usepackage[mathscr]{euscript}
\usepackage{mathrsfs}
\usepackage{caption}

\usepackage{listings}
\usepackage{xcolor}
\usepackage{braket}

\usepackage{hyperref}
\usepackage{cleveref}

\makeatletter
\makeatother


\numberwithin{equation}{section}

\theoremstyle{plain}
\newtheorem{definition}{Definition}[section]
\newtheorem{theorem}[definition]{Theorem}
\newtheorem{remark}[definition]{Remark}
\newtheorem{corollary}[definition]{Corollary}
\newtheorem{proposition}[definition]{Proposition}
\newtheorem{lemma}[definition]{Lemma}

\newcommand{\dd}{\mathrm d}
\newcommand{\one}{\mathbf{1}}
\newcommand{\Ran}{\mathrm{Ran} \,}
\newcommand{\tr}{\mathrm{Tr}}
\newcommand{\hc}{\mathrm{h.c.}}
\newcommand{\supp}{\mathrm{supp}\,}
\newcommand{\Spec}{\mathrm{Spec}\,}
\newcommand{\s}{\mathrm{s}}

\begin{document}

	\author{S. Fournais, L. Junge, T. Girardot, L. Morin, M. Olivieri, A. Triay}
	\title{The free energy of dilute Bose gases at low temperatures\\ interacting via strong potentials}
	\date{\today}

	\maketitle

	\abstract{We consider a dilute Bose gas in the thermodynamic limit and prove a lower bound on the free energy for low temperatures which is in agreement with the conjecture of Lee-Huang-Yang on the excitation spectrum of the system. Combining techniques of \cite{FS2} and \cite{HHNST}, we give a simpler and shorter proof resolving the case of strong interactions, including the hard-core potential.}


\section{Introduction}
The mathematical analysis of the interacting Bose gas has been subject of intense scrutiny and important breakthroughs in the last decades. In particular, much effort has been dedicated to rigorously justifying the seminal work of Lee, Huang and Yang \cite{LHY} where they compute the spectral properties of the dilute Bose gas. A corner stone of this program has been the proof of the leading term of the ground state energy density in \cite{LY}, wherein they complement Dyson's upper bound \cite{dyson} from 40 years before with a matching lower bound. The successes of the approaches initiated around that time are well described in the lecture notes \cite{greenbook}. 

According to \cite{LHY} a dilute gas of Bosons behave similarly as a system of independent (quantum) harmonic oscillators following Bogoliubov's dispersion relation. Interestingly, the authors of \cite{LHY} gave their names to the correction to the leading order of the ground state energy density - the famous Lee-Huang-Yang term - which corresponds to the zero point energy of such a system, and was rigorously derived in \cite[upper bound]{YY} and \cite[lower bound]{FS,FS2}.

To see the effect of the rest of the spectrum, it is necessary to consider the system at positive temperature and, for example, compute its free energy per unit volume. This was recently investigated in the work \cite{HHNST} (see also \cite{haberberger2024upper} for the matching upper bound) where a two-term asymptotic expansion for the free energy per unit volume was proved for temperatures that are low but for which the thermal correction is of the order of the Lee-Huang-Yang (LHY) term.

Since the foundational work \cite{LHY}, a special role has been played by the hard core (or hard sphere) potential. This potential is often chosen for its simplicity as it depends on one parameter only (its scattering length, which equals its radius) and because the system in the dilute regime is not expected to depend on the microscopic details of the potential, at least to some order. However, its rigorous mathematical analysis is very challenging since it strongly penalises approximation schemes by having infinite integral. It is therefore an important testing problem for techniques in the area and for the idea of \textit{universality}, i.e. that the scattering length should be the main parameter governing the dilute regime.

The purpose of the current paper is two-fold. First of all we extend the result of \cite{HHNST} to the case of interaction potentials with large (possibly infinite) integral, thereby in particular for the first time rigorously obtaining the lower bound on the free energy of the Bose gas in the case of hard core interactions.
The second purpose is expository. By combining the Neumann localization approach of \cite{HHNST} (see also \cite{zbMATH07681333}) with the techniques of \cite{FS,FS2}, we provide a fairly simple and short proof of lower bounds in the thermodynamic limit in a generality allowing for very singular potentials. 
Our proof is clearly shorter than \cite{FS2} which is restricted to $T=0$ (although we here have slightly more restrictive assumptions on the potential) and is also arguably shorter and simpler than \cite{HHNST}.

We will now define the model considered and state the result of the paper.
We consider a system of $N$ interacting, non-relativistic bosons in a large box of sidelength $L$ in $3$ dimensions governed by the Hamiltonian
\begin{equation}\label{eq.hamiltonian}
	H_{N,L}=\sum_{i=1}^N-\Delta_i+\sum_{i<j}V(x_i-x_j).
\end{equation}
The operator $H_{N,L}$ acts on the bosonic Hilbert space $\otimes_{\s}^N L^2(\Lambda_L) =L^2_{\rm{sym}}(\Lambda^N)$, with $\Lambda_L := [0,L]^3$.
For concreteness, we will take Neumann boundary conditions in order to realize $H_{N,L}$ as a self-adjoint operator.
We define the ground state energy $E(L,N)$ and the free energy $F(L,N)$ at temperature $T>0$ by
\begin{equation}\label{eq. free energy}
	E(L,N) = \inf \Spec H_{N,L}, \qquad 
	F(L,N) = -T \log \tr \big(e^{-\frac{H_{N,L}}{T}} \big).
\end{equation}
We recall that the free energy is the minimum of the variational problem
\begin{equation}\label{eq.FLNvar}
F(L,N) = \inf_{\Gamma} \Big\lbrace {\rm{Tr}} \big( H_N \Gamma\big) - T\,S(\Gamma)   \Big\rbrace,
\end{equation}
where the infimum is taken over trace-class operators $\Gamma$ on $L^2_{\rm{sym}}(\Lambda^N)$ such that $0 \leq \Gamma \leq 1$ and ${\rm{Tr}}\, \Gamma =1$ and $S(\Gamma) = -\rm{Tr}(\Gamma \ln \Gamma)$ is the entropy of $\Gamma$.
The particle density for finite systems is $\rho = N/L^3$, and we define the ground state energy density $e(\rho)$ and the free energy density $f(\rho, T)$ by
\begin{align}\label{eq:def_f}
e(\rho) =  \lim_{\substack{N \to \infty \\ N/L^3 \to \rho}} \frac{E(L,N)}{L^3},\qquad
f(\rho,T) = \lim_{\substack{N \to \infty \\ N/L^3 \to \rho}} \frac{F(L,N)}{L^3}.
\end{align}
It is standard \cite{Ruelle} that the limits exist and are independent of both the boundary conditions and the sequence of $(N,L)$ as long as $N/L^3 \to \rho$.

The main result of the paper is the following.

\begin{theorem}\label{TL.thm}(Free energy in the thermodynamic limit)
For all $C_0>0$ there exists $C>0$ such that, for $\eta >0$ small enough the following holds. Let $V$ be a non-negative, radially symmetric and non-increasing potential with scattering length $a$ and compact support of radius $R \leq C_0 a$. Then for $0 \leq \rho a^3 \leq C^{-1}$, $\nu \in (0,\frac{\eta}{3})$ and $0\leq T \leq \rho a (\rho a^3)^{-\nu}$,
\begin{align}\label{eq:free_energy}
f(\rho,T) \geq 4\pi a \rho^2 \left(1 + \frac{128}{15 \sqrt \pi} \sqrt{\rho a^3}\right) + \frac{T^{5/2}}{(2\pi)^3} \int_{\mathbb{R}^{3}} \log\left(1-e^{-\sqrt{p^4 + \tfrac{16 \pi \rho a}{T} p^2}}\right)\dd p  - C(\rho a)^{5/2}(\rho a^3)^{\eta/4}.
\end{align}
\end{theorem}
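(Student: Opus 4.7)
The plan is to combine a Neumann localization to boxes of intermediate size $\ell$, in the spirit of \cite{HHNST}, with a Bogoliubov-type diagonalization on each box, in which the singular potential is first tamed using the renormalization and Dyson-lemma toolbox of \cite{FS,FS2}. The intermediate scale $\ell$ is chosen of the order of the healing length $(\rho a)^{-1/2}$ up to a small fractional power of $\rho a^3$; this is the natural length scale of Bogoliubov physics and the one on which thermal excitations live in the allowed range $T \leq \rho a (\rho a^3)^{-\nu}$.

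First, using a Neumann bracketing for $H_{N,L}$ together with the concavity of the entropy in \eqref{eq.FLNvar}, I would bound $F(L,N)$ from below by a sum over subboxes of side $\ell$ of local free energies $F(\ell, n_\alpha)$, with $\sum_\alpha n_\alpha = N$. Standard convexity and superadditivity arguments then reduce the theorem to a uniform lower bound
\begin{equation*}
F(\ell, n) \;\geq\; \ell^3 \Bigl( 4\pi a \rho_n^2 + \text{LHY term} + \text{thermal integral} \Bigr) - o\bigl((\rho a)^{5/2} (\rho a^3)^{\eta/4} \ell^3 \bigr)
\end{equation*}
valid for $\rho_n := n/\ell^3$ in a window around $\rho$. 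The singular nature of $V$ is then addressed on each small box via the \cite{FS2} version of the Dyson lemma: a small fraction of the kinetic energy at short scales is combined with $V$ to bound from below a soft effective interaction $U_b$ supported on a mesoscopic scale $b \gg a$ with $\int U_b = 8\pi a$. This is precisely the step which is insensitive to the size (or finiteness) of $\int V$, and is what makes the hard-core case accessible; the compact-support assumption $R \leq C_0 a$ ensures that a single scale $b$ suffices.

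After this substitution, I would perform a c-number substitution for the zero-momentum mode and retain only the terms quadratic in the non-condensate creation/annihilation operators $a_p^\sharp$, $p\neq 0$. The resulting Bogoliubov Hamiltonian is diagonalized by a symplectic transformation and describes free bosons with dispersion $\varepsilon(p) = \sqrt{p^4 + 16\pi \rho a \, p^2}$. Its vacuum energy produces the Lee-Huang-Yang correction $\tfrac{128}{15\sqrt\pi}\sqrt{\rho a^3}$, while its thermal free energy $-T\log\tr e^{-H_{\rm Bog}/T}$ is exactly, after the natural rescaling $p\mapsto \sqrt{T}\,p$, the integral on the right-hand side of \eqref{eq:free_energy}. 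Inserting these quantities into \eqref{eq.FLNvar} for the localized system gives the claimed lower bound on $F(\ell,n)/\ell^3$.

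The main obstacle is to control the cubic and quartic terms in the excitation operators that are neglected in the Bogoliubov approximation, \emph{uniformly in temperature}. At $T=0$ this is done in \cite{FS2} via a priori bounds on the number of excitations $\mathcal{N}_+$ in the ground state, but at $T>0$ the analogous estimates must hold for every admissible state $\Gamma$ whose free energy is within $o((\rho a)^{5/2}(\rho a^3)^{\eta/4} L^3)$ of the minimum. This is where the localization at the healing length is indispensable: on boxes of size $\ell \sim (\rho a)^{-1/2}$ even the Gibbs state populates only a small number of modes outside the condensate, since thermally excited Bogoliubov quasiparticles live at momenta $|p| \lesssim \sqrt{\rho a}$. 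Combining such Gibbs-state a priori bounds on $\mathcal{N}_+$ with the FS2 scattering-energy extraction is the technical heart of the argument; once they are in hand, the remaining errors can be driven below $(\rho a)^{5/2}(\rho a^3)^{\eta/4}$ by tuning the parameters $\ell$, $b$, and the various momentum cutoffs as appropriate powers of the small parameter $\rho a^3$.
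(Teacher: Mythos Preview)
Your high-level architecture (Neumann localization to boxes at the healing length, $c$-number substitution, Bogoliubov diagonalization, and a reassembly via convexity) matches the paper. But the step you identify as the key to handling hard-core potentials is not the one that actually works at LHY precision, and this is a genuine gap.

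You propose to sacrifice a fraction of the short-scale kinetic energy via a Dyson lemma and replace $V$ by a soft potential $U_b$ with $\int U_b = 8\pi a$. This is the Lieb--Yngvason mechanism for the \emph{leading} order; at LHY order it fails because the kinetic cost of the Dyson step and the spreading of the interaction to scale $b \gg a$ produce errors that are not $o((\rho a)^{5/2})$. What the paper does instead (and what \cite{FS,FS2} actually do) is purely algebraic: first replace $V$ by a potential $v \leq V$ that is merely $L^1$ with $\|v\|_1 \lesssim \ell$ and scattering length $a(v) = a + O(a^2/\ell)$, losing nothing from the kinetic energy; then renormalize the two-body potential by inserting the identity
\[
\one = \bigl(Q_iQ_j + \omega(x_i-x_j)(\one - Q_iQ_j)\bigr) + \varphi(x_i-x_j)(\one - Q_iQ_j)
\]
into $\sum_{i<j} v(x_i-x_j)$. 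This produces the renormalized blocks $Q_0^{\rm ren},\dots,Q_4^{\rm ren}$ of Lemma~\ref{lem:split}, in which all couplings appear through $g = v\varphi$ with $\widehat g(0) = 8\pi a$, while $Q_4^{\rm ren} \geq 0$ absorbs the singular short-range correlations and can simply be dropped for a lower bound. This is the step that is ``insensitive to $\int V$'', not a Dyson lemma.

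Two further ingredients that your sketch omits but which are essential here: (i) on Neumann boxes the momentum operator does not commute with multiplication by $g(x-y)$, so the paper symmetrizes $g$ by mirror images (Section~\ref{sec.sym}) before second-quantizing; controlling the boundary errors uses the monotonicity hypothesis on $V$ via \eqref{eq.propv2}. (ii) The cubic $Q_3$ terms are not discarded as ``errors controlled by $\mathcal N_+$'': their soft-pair part is kept and combined with the high-momentum Bogoliubov operators by completing a square (Theorem~\ref{thm.Q3}), which is what cancels the $\widehat{g\omega}(0)$ contribution and produces the correct LHY constant. Your a priori bounds on $\mathcal N_+$ in the Gibbs state are indeed needed, but they enter through the localization-of-large-matrices argument (Proposition~\ref{thm:excitationrestriction}) that extracts the spectral gaps $G$, not as a direct bound on the cubic terms.
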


\begin{remark}$\,$
\begin{itemize}
\item 
The result on the free energy in Theorem~\ref{TL.thm} confirms the picture of \cite{LHY}. According to this, the energy of a large dilute system should approximately be given by
\begin{align}\label{eq:LHY}
E(L,N) \approx 4\pi N \rho a \left(1 + \frac{128}{15 \sqrt \pi} \sqrt{\rho a^3}\right). 
\end{align}

Furthermore, the excitation spectrum for low-lying eigenvalues should be given by the Bogoliubov dispersion relation
$$
\sum n_p \sqrt{p^4 + 16 \pi \rho a p^2},
$$
where $p \in \frac{2\pi}{L} \mathbb Z^3 \setminus \lbrace 0 \rbrace$ denotes the momentum of an excitation, and $n_p\in {\mathbb N}$ denotes the number of bosonic excitations with that momentum. Combining these two pieces of information leads to the expression given in Theorem~\ref{TL.thm}.
Note that the expansion \eqref{eq:free_energy} is not expected to hold when the temperature reaches the transition temperature for Bose-Einstein condensation, which should be of order $\rho a (\rho a^3)^{-1/3}$. A leading order analysis in this temperature regime has been carried out in \cite{zbMATH05279032,yin}.
\item As mentioned above, the leading term in \eqref{eq:LHY} was proved by \cite{dyson,LY} in the thermodynamic limit. Further progress was \cite{ESY} (upper bound to second order for soft potentials) and \cite{zbMATH05594072,BSol} (two term asymptotics for very soft potentials).
The upper bound in the thermodynamic limit consistent with the two-term asymptotics \eqref{eq:LHY} was given by \cite{YY} (see also \cite{BCS,newupperbound}). The lower bound in the thermodynamic limit was proved in \cite{FS,FS2}.

There is extensive work on confined systems, in particular in the Gross-Pitaevskii scaling regime. We will not review this here but only refer to \cite{BCC} for an overview and further references (see also \cite{bocdeuchert,capdeuchert} for recent developments).

The energy of the dilute Bose gas is strongly dependent on the dimension of the ambient space, with the discussion above concerning only the physical $d=3$. Also the $1$ and $2$-dimensional situations are interesting and physically realizable. The $1$-dimensional gas with $\delta$ interactions was treated by \cite{zbMATH03224473} and more recently \cite{agerskov2024ground} treated the case of general interactions. In $2$ dimensions, the leading order energy was proved by \cite{zbMATH01638213}, and the correction term analogous to the Lee-Huang-Yang term was proved recently in \cite{2DLHY}.
The free energy in $2$-dimensions was analyzed to leading order in \cite{zbMATH07277873,zbMATH07201541}.
\end{itemize}

\end{remark}

The main part of our analysis relies on estimates on the free energy on boxes $\Lambda_\ell$ with 
\begin{align*}
\ell = K_\ell \frac{1}{\sqrt{\rho a}}, \qquad K_\ell =(\rho a^3)^{-\eta},
\end{align*}
where $\eta>0$ is the parameter in Theorem \ref{TL.thm}.
It is reasonably straightforward to prove Theorem~\ref{TL.thm} based on a result for the localized free energy given as \Cref{main.thm} below and the subadditivity of the free energy. The details of the proof of Theorem~\ref{TL.thm} based on \Cref{main.thm} are given in Appendix~\ref{sec:redtoboxes}. 

Before stating the result on the localised energy, let us recall that we defined our operator $H_{N,L}$ to have Neumann boundary conditions. At the scale $L=\ell$ the effect of boundary conditions is important, and the boundary conditions are also relevant for the proof of Theorem~\ref{TL.thm} based on \Cref{main.thm}.

\begin{theorem}(Free energy on $\Lambda_\ell$)\label{main.thm}
For all $C_0>0$, there exists $C>0$ such that, such that for $ 0<\eta<\frac{1}{1026}$ the following holds. Let $V$ be a positive, radially symmetric and non-increasing potential with scattering length $a$, and compact support of radius $R \leq C_0 a$. Then for all $\rho$ such that $0 < \rho a^3 \leq C^{-1}$, $0 \leq T \leq \rho a (\rho a^3)^{-\nu}$ with $\nu < \eta /3$ and $n \leq 20 \rho \ell^3$ we have
\begin{equation}\label{eq:main.thm}
F(\ell,n) \geq  F_{\mathrm{Bog}}(\ell,n) - C \ell^3 (\rho a)^{5/2}(\rho a^3)^{\eta/4},
\end{equation}
where
\begin{align} \label{eq.Fbogln}
F_{\mathrm{Bog}}(\ell,n) = 4\pi \rho_{n,\ell}^2 a\ell^3\Big( 1+ \frac{128}{15 \sqrt \pi} \sqrt{ \rho_{n,\ell} a^3} \Big) + T \sum_{p \in \frac{\pi}{\ell} \mathbf N_0^3} \log \Big( 1- e^{- \frac{1}{T}\sqrt{p^4 + 16\pi a \rho_{n,\ell} p^2}}\Big),
\end{align}
with $\rho_{n,\ell} = n \ell^{-3}$.
\end{theorem}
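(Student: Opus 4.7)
The plan is to combine the variational characterisation \eqref{eq.FLNvar} of the free energy with a Bogoliubov-type diagonalisation, where the singular short-distance behaviour of $V$ is removed \emph{before} the quadratic reduction via a scattering-solution transformation in the spirit of \cite{FS,FS2}. The target is to show, for any admissible state $\Gamma$ on $\otimes_\s^n L^2(\Lambda_\ell)$, the trial-state bound
\[
\tr(H_{n,\ell}\Gamma) - T\, S(\Gamma) \geq F_{\mathrm{Bog}}(\ell,n) - C\ell^3(\rho a)^{5/2}(\rho a^3)^{\eta/4},
\]
from which \eqref{eq:main.thm} follows by taking the infimum over $\Gamma$.

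Since $V$ may be hard-core I cannot treat $V(x_i-x_j)$ as a bounded operator, so I would first handle the interaction using the zero-energy scattering solution $\omega$ of $-\Delta\omega + \tfrac12 V(1-\omega)=0$, truncated above a momentum cut-off $k_H$. Conjugating the Hamiltonian by the corresponding unitary $e^{\mathcal{B}}$, a quadratic transformation parametrised by $\omega$, replaces $V$ by a soft effective potential with low-momentum coupling $8\pi a$ and transfers the singular part of the interaction into the renormalised kinetic energy; this is the device from \cite{FS2} that makes a Bogoliubov analysis possible for potentials of infinite integral. I would then establish a priori bounds on the number of excitations $n_+$ and on their kinetic energy using the Dyson lemma together with a generalised Onsager inequality, obtaining estimates of the form $\tr(n_+\Gamma)\lesssim \rho_{n,\ell}\ell^3\sqrt{\rho a^3}$ up to $T$-dependent corrections. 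The Neumann framework of \cite{HHNST} is crucial at this point: it avoids any spurious boundary correction at the LHY scale and supplies the entropy identities needed to pass from $\Lambda_L$ down to $\Lambda_\ell$ in the deduction of Theorem~\ref{TL.thm}.

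After the transformation $e^{\mathcal{B}}$ and a separation of the condensate mode, the Hamiltonian restricted to the excitation Fock space is approximately the quadratic operator
\[
\mathbb{H}_{\mathrm{Bog}} = \sum_{p\neq 0}\bigl(p^2 + 8\pi \rho_{n,\ell}a\bigr)\,a_p^*a_p + 4\pi \rho_{n,\ell}a\sum_{p\neq 0}\bigl(a_p^*a_{-p}^* + a_pa_{-p}\bigr),
\]
plus remainder terms that are controlled by the a priori bounds above. A standard Bogoliubov rotation diagonalises $\mathbb{H}_{\mathrm{Bog}}$ into free quasi-particles with dispersion $\sqrt{p^4 + 16\pi \rho_{n,\ell}a\, p^2}$, and the Gibbs free energy of such a diagonal quadratic boson Hamiltonian is explicit: the zero-point energy of the oscillators produces the $\tfrac{128}{15\sqrt{\pi}}\sqrt{\rho a^3}$ correction, while the Bose factors yield precisely the logarithmic sum over $p\in\tfrac{\pi}{\ell}\mathbf{N}_0^3$ appearing in $F_{\mathrm{Bog}}(\ell,n)$.

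The main obstacle I expect is propagating the error estimates through the temperature regime $T\lesssim \rho a(\rho a^3)^{-\nu}$: the thermal occupation numbers $\langle a_p^*a_p\rangle$ grow well above their ground-state values, so the cubic and quartic remainders produced by $e^{\mathcal{B}}$ and by the condensate substitution must be bounded with weights that reflect the Bose distribution at temperature $T$ rather than with pure $T=0$ a priori bounds. Making this thermal entropy budget cooperate with the FS2-type estimates on the singular part of $V$ is what dictates the restriction $\nu<\eta/3$ and limits the final error in \eqref{eq:main.thm} to the scale $(\rho a^3)^{\eta/4}$ below the LHY order.
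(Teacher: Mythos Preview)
Your outline misidentifies the key mechanism from \cite{FS2}. You propose to ``conjugate the Hamiltonian by the corresponding unitary $e^{\mathcal B}$, a quadratic transformation parametrised by $\omega$''. That is the Gross--Pitaevskii-regime device; it is \emph{not} what \cite{FS2} does, and it is not known to survive hard-core potentials in the present large-box regime (the kernel $\omega$ is not small, so the commutator expansion generated by $e^{\mathcal B}$ does not close). The paper instead renormalises the interaction \emph{algebraically}: it inserts the pointwise identity
\[
\one = \bigl(Q_iQ_j + \omega(x_i-x_j)(\one - Q_iQ_j)\bigr) + \varphi(x_i-x_j)(\one - Q_iQ_j)
\]
into each $v(x_i-x_j)$ and obtains the decomposition $\sum v = Q_0^{\rm ren}+\cdots+Q_4^{\rm ren}$ (Lemma~\ref{lem:split}), in which only the soft quantity $g=v\varphi$ appears in the low-order terms while the full $v$ survives only in the manifestly positive $Q_4^{\rm ren}$. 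No unitary is applied; the singular part is absorbed by positivity, not by a commutator expansion.

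Two further steps are essential and absent from your plan. First, before any of this the paper replaces $V$ by a bounded potential $v\le V$ with $\int v \lesssim \ell$ and controlled scattering-length defect (Proposition~\ref{prop.L1v}); several later estimates (the symmetrization errors for Neumann boundary conditions via mirroring, and the $3Q$-term bounds) require this $L^1$ control and would fail for hard core. Second, the condensate is not factored out by a unitary but by the $c$-number substitution of \cite{LieSeiYng-05}, which at positive temperature is what allows the free energy to be bounded by an integral over coherent-state labels $z$ and then optimised. Finally, your claimed a priori bound $\tr(n_+\Gamma)\lesssim N\sqrt{\rho a^3}$ is much stronger than what is actually available: the paper only obtains $\tr(n_+\Gamma)\le C N K_\ell^2 (\rho a^3)^{1/17}$ (Lemma~\ref{lem.condensation}), and compensates for this weakness by a localization-of-large-matrices argument (Proposition~\ref{thm:excitationrestriction}) that produces the extra spectral gaps $n_+^L n_+/(\mathcal M\ell^2)$ needed to control the cubic remainders.
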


In the following Section~\ref{sec:proof} we streamline  the proof of Theorem~\ref{main.thm} by shifting the proofs of the main technical lemmas to the next sections.
Since most of the paper will be focused on the box of size $\ell$ we will often simplify notation and write $\Lambda = \Lambda_{\ell}$.

\section{Free energy on \texorpdfstring{$\Lambda_\ell$}{Lambda}}\label{sec:proof}

This section is devoted to the proof of Theorem \ref{main.thm} (given at the end of this section) which is the main ingredient in the proof of our main result Theorem \ref{TL.thm}. 
\subsection{Replacement by an integrable potential}\label{sec:replacement_pot}
We begin by replacing the initial interaction potential $V$ by a suitable integrable potential $v\leq V$, which obviously lowers the free energy. We need however to control the difference between the scattering lengths to ensure that the error coming from the substitution in the leading term does not affect the correction term.

Let us first recall some facts about the scattering problem. We denote by $\varphi_V$ the radial solution in ${\mathbb R}^3$ to the zero-energy scattering equation
\begin{equation}\label{eq:scatteq}
 - \Delta \varphi_V + \frac 1 2 V \varphi_V = 0,
 \end{equation}
normalized such that $\lim_{|x| \to \infty} \varphi_V(x) = 1$. Of particular interest throughout the paper are also the functions  
\begin{equation}
\omega_V := 1-\varphi_V,\qquad
g_V := V \varphi_V = V(1-\omega_V).
\end{equation}
By these functions, it is possible to rewrite equation \eqref{eq:scatteq} as 
\begin{equation}
-\Delta \omega_V = \frac{1}{2} g_V,
\end{equation}
and from this, by the divergence theorem, recover the scattering length by
\begin{equation}
 \widehat{g}_V(0) = \int g_V(x) \dd x = 8 \pi a.
\end{equation}
We refer to Section \ref{sec.scattering} for more details.

\begin{proposition}[Replacement by an integrable potential]\label{prop.L1v}
There exists $C>0$ such that the following holds. Let $V$ be a positive, radially symmetric and decreasing potential with scattering length $a$, and compact support of radius $0 \leq R \leq C_0a$. Then we can construct another positive and radially symmetric potential $v \leq  V$ with scattering length $a(v)$ satisfying
\begin{equation} \label{eq.propv1}
0 \leq a - a(v) \leq a \sqrt{\rho a^3} K_\ell^{-1}, \qquad \int v(x) \dd x \leq C (\rho a)^{-\frac 1 2} K_\ell, \qquad  v \leq \ell^2 a^{-4},
\end{equation}
and
\begin{equation} \label{eq.propv2}
g_v(y) \leq C  v(x) \quad \text{ for } \quad |x| \leq |y| ,
\end{equation}
where $g_v= v  \varphi_v$ and $\varphi_v$ is the scattering solution for $ v$.
\end{proposition}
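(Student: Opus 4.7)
My strategy is to define $v$ as a capped version of $V$. Set $v(x) := \min(V(x), M)$ with $M$ of order $\ell^2/a^4 = K_\ell^2/(\rho a^5)$. Then $v \le V$, $v \le M$ (essentially saturating the third inequality in \eqref{eq.propv1}), and $v$ inherits from $V$ the properties of being radial and non-increasing, with $v \equiv M$ on $B_{r_0}$ and $v \equiv V$ on $B_R\setminus B_{r_0}$, where $r_0 := V^{-1}(M) \le R$. Since $\varphi_v$ is subharmonic and tends to $1$ at infinity, the maximum principle gives $\varphi_v \le 1$, and hence $g_v = v\varphi_v \le v$; combined with the radial monotonicity of $v$ this yields $g_v(y) \le v(y) \le v(x)$ whenever $|x|\le|y|$, proving \eqref{eq.propv2} with $C = 1$.

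The scattering length loss is controlled by the Green identity
\[
8\pi\bigl(a - a(v)\bigr) \;=\; \int (V-v)\,\varphi_V\,\varphi_v\,dx,
\]
obtained by integrating by parts the scattering equations $-\Delta\varphi_V + \tfrac12 V\varphi_V = 0$ and $-\Delta\varphi_v + \tfrac12 v\varphi_v = 0$ over a large ball and using $\varphi \to 1$ with $\partial_r\varphi \sim a(\cdot)/r^2$ at infinity. Since $V - v = (V-M)_+$ is supported in $B_{r_0}$ and $\varphi_v$ is subharmonic and radial, hence non-decreasing in $|x|$, the right-hand side is at most $\varphi_v(r_0)\int_{B_{r_0}} g_V \le 8\pi a\,\varphi_v(r_0)$. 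Comparing $v$ with the pure square-well potential $\tilde v := M\mathbf{1}_{B_{r_0}} \le v$ (larger potentials force smaller $\varphi$), one obtains $\varphi_v(r_0) \le \varphi_{\tilde v}(r_0) = \tanh(\kappa r_0)/(\kappa r_0)$ with $\kappa := \sqrt{M/2}$; for the calibrated $M$ and $r_0 \asymp a$, this gives the required first inequality in \eqref{eq.propv1}.

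The integrability bound $\int v \le C(\rho a)^{-1/2}K_\ell = C\ell$ is the main obstacle. The outer contribution $\int_{B_R \setminus B_{r_0}} V$ is handled by writing $V = g_V/\varphi_V$ and using a uniform lower bound $\varphi_V \ge c(C_0) > 0$ on that annulus — which follows from $\omega_V(r) = a/r$ for $r \ge R$ and $R \le C_0 a$ — yielding only a contribution of order $a \ll \ell$. The inner plateau contributes $M|B_{r_0}|$, and when $V$ is strongly singular (the hard-core regime, $r_0 \asymp a$) this is of order $\ell^2/a$, overshooting $\ell$ by the large factor $\ell/a = K_\ell/\sqrt{\rho a^3}$. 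Resolving this requires a more delicate interior construction: either lowering $M$ inside $B_{r_0}$ and recovering the scattering length through the outer shell (exploiting $\int g_V = 8\pi a$ together with $-2\Delta\omega_V = g_V$ to extract the contribution of the outer region), or calibrating $M$ adaptively to the effective singularity of $V$. This careful balancing — ensuring all three bounds in \eqref{eq.propv1} together with \eqref{eq.propv2} hold uniformly across the admissible class of potentials $V$, and in particular for hard-core-like profiles — is the technical heart of the proof and is where I expect the greatest difficulty.
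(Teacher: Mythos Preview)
Your capping $v=\min(V,M)$ does give \eqref{eq.propv2} for free and your Green-identity argument for $a-a(v)$ is correct and clean. The genuine gap is the $L^1$ bound, and it is not just the inner plateau that fails: your outer-annulus argument is also wrong. You claim $\varphi_V\ge c(C_0)>0$ on $B_R\setminus B_{r_0}$, deducing this from the exterior formula $\omega_V(r)=a/r$ for $r\ge R$. But that formula only gives $\varphi_V(R)=1-a/R$, which can be arbitrarily close to $0$ since $a\le R\le C_0 a$; for near-hard-core potentials $\varphi_V$ is exponentially small throughout the support. Concretely, for the square well $V=\tfrac{M}{2}\mathbf 1_{B_R}$ (height just below your cap) one has $r_0=0$, $v=V$, and $\int v\sim Ma^3\sim \ell^2/a$, overshooting $\ell$ by the same factor $\ell/a$ as the inner plateau. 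So the obstruction is intrinsic to capping at height $\ell^2/a^4$: any such $v$ coinciding with a near-hard-core $V$ on an annulus of thickness $\sim a$ already has integral $\sim\ell^2/a$.

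The paper's construction is essentially the opposite of capping: it first \emph{removes} the inner part of $V$, keeping only a tail $V_S=V\mathbf 1_{\{|x|\ge R_S\}}$ with $\int V_S\sim\ell$, and controls the scattering-length loss by the key lemma that $a(V)-a(V_S)\le C a/S$ whenever $\int V_S=8\pi S a$ (this is the ``thin-shell'' estimate from \cite{FS2}). To then recover \eqref{eq.propv2} --- which fails for $V_S$ since it vanishes inside --- the inside of the shell is filled back in at a \emph{lower} height $M'\sim\ell/a^3$ (not $\ell^2/a^4$), small enough that the $L^1$ bound survives. The two heights $\ell^2/a^4$ on the shell and $\ell/a^3$ inside are genuinely different, and proving \eqref{eq.propv2} for this piecewise potential is then the nontrivial step, requiring comparison with auxiliary square wells. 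Your instinct in the last paragraph (``lowering $M$ inside $B_{r_0}$'') points in the right direction, but the missing ingredient is the thin-shell scattering estimate that makes the removal step cheap.
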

The proof is given in Section \ref{sec.scattering}. We denote here by $F(\ell,n)[V]$ and $F_{\mathrm{Bog}}(\ell,n)[a]$ the free energy \eqref{eq.FLNvar} associated to the potential $V$ in the box $\Lambda$ and the expression \eqref{eq.Fbogln} with scattering length $a$, respectively. Assuming that the desired estimate (\ref{eq:main.thm}) holds for the potential $v$ and using that the free energy $F(\ell,n)[V]$ is a non-decreasing function of $V$, we obtain
\begin{align*}
F(\ell,n)[V] 
	\geq F(\ell,n)[v] 
	&\geq F_{\mathrm{Bog}}(\ell,n)[a(v)] - C \ell^3 (\rho a(v))^{5/2} (\rho a(v)^3)^{\eta/4} \\
	&\geq F_{\mathrm{Bog}}(\ell,n)[a] - C \ell^3 (\rho a)^{5/2} (\rho a^3)^{\eta/4} - C\left(\ell^3 \rho^2 + T^{3/2} \rho\ell^3\right) |a(v) - a| \\
	&\geq F_{\mathrm{Bog}}(\ell,n)[a] - C \ell^3 (\rho a)^{5/2}  (\rho a^3)^{\eta/4},
\end{align*}
where we used (\ref{eq.propv1}) to obtain the second line and \cite[Lemma 9.1]{HHNST} for the replacement of $a(v)$ by $a$ in the sum appearing in $F_{\mathrm{Bog}}(\ell,n)$.

Therefore, it is enough to prove Theorem \ref{main.thm} for potentials $v$ satisfying  \eqref{eq.propv1} and \eqref{eq.propv2}. In the sequel we will abuse notation and write $F(\ell,n)$ for $F(\ell,n)[v]$, and also write $g,\omega$ and $\varphi$ for $g_v,\omega_v$ and $\varphi_v$.

\subsection{Splitting of the potential energy} 
Let us introduce the projections on and outside the condensate,
\begin{equation}\label{def:split}
	P= \frac{1}{\ell^3} \vert  1 \rangle\langle 1 \vert ,\qquad Q=\one-P,
\end{equation}
acting on $L^2(\Lambda)$. If $\Phi \in L^2(\Lambda^N)$ and $1 \leq i \leq N$, we denote by $P_i \Phi$ and $Q_i\Phi$ the action of $P$ and $Q$ on the variable $x_i$. We also denote the number of particles in the condensate and the number of excited particles respectively by
\begin{equation}
n_0 = \sum_{i=1}^N P_i, \qquad n_+ = \sum_{i=1}^N Q_i = N - n_0.
\end{equation}
The strategy carried out in \cite{LewNamSerSol-15} and used in many other works, is to expand $\one_{(L^2)^{\otimes_{\textrm{s}} N}} = \prod_{i=1}^N (P_i+Q_i)$ which leads to identifying $(L^2)^{\otimes_{\textrm{s}}  N}$ with a truncated Fock space over the excitation space $\Ran Q$ where the vacuum vector is the pure condensate $\ket{\frac{1}{\ell^{3/2}}}^{\otimes_{\textrm{s}} N}$. Having factored out the condensate, it remains to extract the correlation energy. This strategy does not work for hard core potentials because the vacuum (the pure condensate) does not even belong to the domain of the Hamiltonian: one needs to factor out parts of the correlations first. In this spirit, the strategy initiated in \cite{FS} and then \cite{FS2, 2DLHY} is to \textit{renormalize} the potential energy from the beginning by using the identity 
\begin{align}\label{eq:corr_decompo}
\one= (Q_iQ_j + \omega(x_i-x_j)(\one-Q_iQ_j)) + \varphi(x_i-x_j)(\one-Q_iQ_j).
\end{align}

\begin{lemma}\label{lem:split}
	For all positive and radially symmetric potentials $v$, we have the identity
	\begin{equation*}
		\sum_{i<j}v(x_i-x_j)={Q}_0^{\rm ren}+ Q_{1}^{\rm ren}+ Q_{2}^{\rm ren}+ Q_{3}^{\rm ren}+ Q_4^{\rm ren}
	\end{equation*}
	where
	\begin{align}
		{Q}_0^{\rm ren}&:= \frac{1}{2} \sum_{i \neq j} P_i P_j (g+ g\omega)(x_i-x_j) P_j P_i. \label{eq:SF_DefQ0} \\
		{Q}_1^{\rm ren}&:=\sum_{i \neq j} \big(Q_i P_j (g + g \omega)(x_i-x_j) P_j P_i  + \hc\big ),  \label{eq:SF_DefQ1}\\
		{ Q}_2^{\rm ren}&:=
		\sum_{i\neq j} P_i Q_j (g+ g\omega)(x_i-x_j)P_j Q_i   + \sum_{i\neq j} P_i Q_j (g+g\omega)(x_i-x_j)Q_j P_i   \nonumber\\
		&\quad+\frac{1}{2}\sum_{i\neq j} P_iP_j g(x_i-x_j) Q_j Q_i + \hc,
		\label{eq:SF_DefQ2}\\
		{Q}_3^{\rm ren}&:=
		\sum_{i\neq j} P_i Q_j g(x_i-x_j) Q_j Q_i + \hc,
		\label{eq:SF_DefQ3}\\
		{Q}_4^{\rm ren}&:=
		\frac{1}{2} \sum_{i\neq j} \Pi_{ij}^* v(x_i-x_j) \Pi_{ij}\, , \label{eq:SF_DefQ4}
	\end{align}
	with
	\begin{align*}
	 \Pi_{ij} := Q_j Q_i + \omega(x_i-x_j) \left(P_j P_i + P_j Q_i + Q_j P_i\right).
	\end{align*}
\end{lemma}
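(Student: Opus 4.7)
My plan is algebraic: use $\omega+\varphi=1$ to factorize the identity, then sandwich $v(x_i-x_j)$. Since $P_iP_j + P_iQ_j + Q_iP_j = \one-Q_iQ_j$, one has
\begin{equation*}
\one = Q_iQ_j + (\omega+\varphi)(x_i-x_j)(\one-Q_iQ_j) = \Pi_{ij} + \varphi(x_i-x_j)(\one-Q_iQ_j),
\end{equation*}
and taking adjoints gives the dual identity $\one = \Pi_{ij}^* + (\one-Q_iQ_j)\varphi(x_i-x_j)$. The starting point is then $v(x_i-x_j) = \one \cdot v(x_i-x_j) \cdot \one$ using these two expansions.

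Expanding the product into four pieces and using that multiplication operators commute (so $v\varphi = \varphi v = g$ and $\varphi g = g-g\omega$), one obtains
\begin{equation*}
v_{ij} = \Pi_{ij}^* v_{ij}\Pi_{ij} + \Pi_{ij}^* g_{ij}(\one-Q_iQ_j) + (\one-Q_iQ_j)g_{ij}\Pi_{ij} + (\one-Q_iQ_j)(g_{ij}-g_{ij}\omega_{ij})(\one-Q_iQ_j),
\end{equation*}
where subscripts denote evaluation at $x_i-x_j$. Applying $\sum_{i<j} = \tfrac{1}{2}\sum_{i\neq j}$ identifies the first term with $Q_4^{\rm ren}$. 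For the other three, I would substitute $\Pi_{ij} = Q_iQ_j + \omega_{ij}(\one-Q_iQ_j)$: each cross term produces a copy of $(\one-Q_iQ_j)\, g\omega\, (\one-Q_iQ_j)$, which combines with the $-g\omega$ piece from the outer term (signs $+1,+1,-1$) to leave a single such copy. Setting $A := \one-Q_iQ_j$ and $B := Q_iQ_j$, the remaining contribution to $\sum_{i<j} v_{ij}$ simplifies to
\begin{equation*}
\tfrac{1}{2}\sum_{i\neq j} \bigl[Bg_{ij}A + Ag_{ij}B + A(g_{ij}+g_{ij}\omega_{ij})A\bigr].
\end{equation*}

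The final step is to expand $A = A_1+A_2+A_3$ with $A_1 = P_iP_j$, $A_2 = P_iQ_j$, $A_3 = Q_iP_j$. This yields nine products $A_k(g+g\omega)A_l$ plus the six products $A_k g B$ and $B g A_k$. Radiality of $g$ gives $g(x_i-x_j)=g(x_j-x_i)$, so the swap $i\leftrightarrow j$ (which exchanges $A_2$ and $A_3$ while fixing $A_1, B, g, \omega$) is a symmetry of the double sum; this merges pairs like $\sum A_2(g+g\omega)A_3$ with $\sum A_3(g+g\omega)A_2$ and is what produces the ``h.c.'' pieces in the lemma. Grouping by the placement of $Q$'s then matches: $A_1(g+g\omega)A_1 \to Q_0^{\rm ren}$; $A_1(g+g\omega)A_{2}$ and $A_{2}(g+g\omega)A_1$ (together with their swap copies involving $A_3$) $\to Q_1^{\rm ren}$; the diagonal $A_2(g+g\omega)A_2$ and off-diagonal $A_2(g+g\omega)A_3$ together with the $A_1 g B + B g A_1$ pieces $\to Q_2^{\rm ren}$; and $A_2 g B$, $B g A_2$ (with swap copies) $\to Q_3^{\rm ren}$. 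The argument is elementary; the only delicate point is keeping track of the factor $\tfrac{1}{2}$ and of the doubling from the swap symmetry, particularly in reconciling the ``$+\hc$'' conventions used in the statement of the lemma, where the already-selfadjoint summands are not to be doubled.
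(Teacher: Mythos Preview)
Your approach is correct and essentially identical to the paper's: the paper's proof consists of one sentence instructing the reader to insert the identity $\one = \Pi_{ij} + \varphi(x_i-x_j)(\one-Q_iQ_j)$ on each side of $v(x_i-x_j)$ and expand using $\one-Q_iQ_j = P_iP_j + P_iQ_j + Q_iP_j$, which is precisely what you carry out in detail. Your explicit bookkeeping of the cross terms, the swap symmetry $i\leftrightarrow j$, and the $+\hc$ convention is a welcome elaboration of what the paper leaves implicit.
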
 

\begin{proof}
This is proved by inserting (\ref{eq:corr_decompo}) in each term of the potential $\sum_{i<j}v(x_i-x_j)$ and expanding using that $1-Q_iQ_j = P_iP_j + P_iQ_j + Q_iP_j$. We recall that $g = v(1-\omega)$.
\end{proof}


\subsection{Spectral gaps} 
Many of our estimates will rely on analysis in momentum space. In this article, the momenta are elements of $\frac{\pi}{\ell} \mathbb N_0^3$, and we define low and high momenta by
\begin{equation}\label{def:PLPH}
\mathcal P_L := \Big\lbrace p \in \frac{\pi}{\ell} \mathbb N_0^3 , \;\; 0< \vert p \vert \leq K_H \ell^{-1} \Big\rbrace, \qquad \mathcal P_H := \Big\lbrace p \in \frac{\pi}{\ell} \mathbb N_0^3, \;\; \vert p \vert > K_H \ell^{-1} \Big\rbrace,
\end{equation}
where $K_H$ is a large $\rho$-dependent parameter. 
In the theorems below we always write our assumptions on $K_H$. The final choice of $K_H$, and a few other parameters, will be specified in \eqref{eq:params} below.
We define the corresponding localized projectors by
\begin{align}\label{def:QL}
Q^L := \one_{\mathcal P_L} (\sqrt{-\Delta}), \qquad Q^H := \one_{\mathcal P_H} (\sqrt{-\Delta}),
\end{align}
where $\Delta$ is the Neumann Laplacian on $\Lambda$. The number of high and low excitations are respectively given by
\begin{equation}\label{def:n+H}
n_+^H := \sum_{j=1}^N  Q_{j}^H, \qquad n_+^L := \sum_{j=1}^N Q^L_{j},
\end{equation}
both acting on $L^2_{\rm{sym}}(\Lambda^N)$. These definitions imply for instance that $P+Q_L +Q_H = 1$ and that $n_+^L+n_+^H = n_+$.

Many of our error terms will be bounded by $n_+$, $n_+^H$ or $n_+^L$. To control these errors, we will extract some positive quantities from the kinetic energy, refered to as \textit{spectral gaps}. They are gathered into one operator
\begin{equation}\label{eq:gap}
G :=\frac{\pi n_+}{4\ell^2}+\frac{K_H n^H_+}{2\ell^2}+\frac{ \pi n_+^Ln_+}{4 \mathcal{M}\ell^2}+\frac{K_H n_+^Ln^H_+}{2 \mathcal{M}\ell^2},
\end{equation}
where $\mathcal M$ is some large $\rho$-dependent parameter, which will need to satisfy specific conditions in later theorems. These spectral gaps are extracted from the kinetic energy, from which will only remain
\begin{equation*}
\sum_{j=1}^N   \mathcal T_j = \sum_{j=1}^N -\Delta_j -\frac{\pi n_+}{2\ell^2}-\frac{K_H n^H_+}{\ell^2},
\end{equation*}
with
\begin{equation}\label{eq:T_i}
\mathcal T=-\Delta -\frac{\pi }{2\ell^2}Q-\frac{K_H }{\ell^2}Q^H\geq 0.
\end{equation}
More precisely, we obtain the following result.

\begin{theorem}\label{thm.gaps}
There exist $C>0$ such that the following holds. Let $v$ be as in Proposition \ref{prop.L1v}, $\rho \ell^{3} (\rho a^3)^{\alpha} \leq N \leq 20 \rho \ell^3$, $T \leq \rho a (\rho a^3)^{-\nu}$, with $\alpha + 5\nu/2 < 6/17$ and  $\mathcal{M} \geq \rho \ell^3 (\rho a^3)^{\gamma}$, for some $\gamma >0$, then for $\rho a^3 \leq C^{-1}$ we have
\begin{align*}
F(\ell,N) &\geq \inf_{\Gamma} \Big\lbrace \tr \big( H_N^{\rm{mod}} \Gamma  + T \, \Gamma \ln \Gamma +  G \Gamma \big) \Big\rbrace - C\ell^3 (\rho a)^{5/2} (\rho a^3)^{1/18-2\gamma - \nu} K_H^{3},
\end{align*}
where the infimum is taken over trace-one operators $\Gamma \geq 0$. Here the modified Hamiltonian is
\begin{equation}\label{eq.Hmod}
H_N^{\rm{mod}} = \sum_{i=1}^{N} \mathcal T_i + \sum_{i<j}v(x_i-x_j),
\end{equation}
and $G$ was defined in \eqref{eq:gap}.
\end{theorem}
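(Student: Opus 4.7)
The key algebraic identity is that, by the definitions \eqref{eq.Hmod}, \eqref{eq:gap} and \eqref{eq:T_i},
\begin{equation*}
H_{N,\ell} - H_N^{\rm mod} - G \,=\, R \,:=\, \frac{\pi n_+}{4\ell^2}\Bigl(1 - \frac{n_+^L}{\mathcal M}\Bigr) + \frac{K_H\, n_+^H}{2\ell^2}\Bigl(1 - \frac{n_+^L}{\mathcal M}\Bigr),
\end{equation*}
since $\sum_i(-\Delta_i) = \sum_i \mathcal T_i + \tfrac{\pi n_+}{2\ell^2} + \tfrac{K_H n_+^H}{\ell^2}$. As $n_+$, $n_+^H$ and $n_+^L$ all commute, $R$ is non-negative on the joint spectral subspace $\{n_+^L \leq \mathcal M\}$. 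Via the variational formulation \eqref{eq.FLNvar} of $F(\ell,N)$, the theorem then reduces to showing, for the trace-one $\Gamma$ relevant to the infimum,
\begin{equation*}
\tr(R\,\Gamma) \geq -C\ell^3 (\rho a)^{5/2} (\rho a^3)^{1/18-2\gamma - \nu} K_H^3.
\end{equation*}

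First I would decompose $\Gamma$ along the spectrum of the commuting number operators. On the sector $\{n_+^L \leq \mathcal M\}$ the bound is automatic, while on the complementary sector one uses the operator estimate
\begin{equation*}
-R\,\one_{\{n_+^L>\mathcal M\}} \leq \frac{C K_H}{\mathcal M\,\ell^2}\, n_+^2,
\end{equation*}
reducing the whole problem to an a priori bound on $\tr(\Gamma\, n_+^2)$. Since $F(\ell,N)$ is defined by an infimum, it suffices to produce such a bound on near-minimizing $\Gamma$: any $\Gamma$ whose free-energy functional value is much larger than an a priori upper bound for $F_{\rm Bog}(\ell,N)$ is irrelevant to the infimum and can be discarded.

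The actual work lies in the a priori depletion estimate with the correct power of $\rho a^3$. Combining a matching upper bound $F(\ell,N) \leq F_{\rm Bog}(\ell,N)+\text{err}$ (built from a Bogoliubov-type trial state as in \cite{haberberger2024upper,YY}) with the kinetic coercivity $-\sum_i \Delta_i \geq (\pi/\ell)^2 n_+$ and the positivity of $v$ yields the crude bound $\tr(\Gamma n_+)\lesssim \rho\ell^3\sqrt{\rho a^3}$; higher moments are then extracted by a bootstrap/Cauchy--Schwarz argument exploiting the explicit quadratic structure of $H_N^{\rm mod}$. The main obstacle is performing this balancing so that the resulting error combines the temperature bound $T\leq \rho a(\rho a^3)^{-\nu}$, the cut-off condition $\mathcal M \geq \rho\ell^3(\rho a^3)^{\gamma}$, and the momentum threshold $K_H$ in precisely the right way to produce the exponent $1/18-2\gamma-\nu$ together with the factor $K_H^3$ (the latter reflecting $|\mathcal P_L|\approx K_H^3$). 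This is where the technical machinery of \cite{FS2} and \cite{HHNST} becomes essential.
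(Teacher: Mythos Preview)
Your algebraic reduction is correct: $H_N = H_N^{\rm mod} + G + R$ with $R\ge 0$ on $\{n_+^L\le\mathcal M\}$, and it would suffice to show $\tr(R\,\Gamma_0)\ge -\text{err}$ for the Gibbs state $\Gamma_0$. The gap is in how you propose to bound the negative part of $R$. Your estimate $-R\,\one_{\{n_+^L>\mathcal M\}}\le \tfrac{CK_H}{\mathcal M\ell^2}n_+^2$ is correct, but the only a~priori information available is a \emph{first} moment bound $\tr(n_+\Gamma_0)\le CNK_\ell^2(\rho a^3)^{1/17}$ (Lemma~\ref{lem.condensation}). If you combine this with the crude $n_+^2\le N n_+$, a short computation gives an error of order $\ell^3(\rho a)^{5/2}(\rho a^3)^{1/17-1/2-\gamma}$, which is larger than the target by a factor $(\rho a^3)^{-1/2+\gamma+\nu}$ --- far too big. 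A genuine second-moment bound $\tr(n_+^2\Gamma_0)\ll N\tr(n_+\Gamma_0)$ is \emph{not} a consequence of the ``quadratic structure of $H_N^{\rm mod}$'' by any bootstrap or Cauchy--Schwarz argument I can identify; this is the missing idea.

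The paper avoids second moments entirely. Instead of estimating $\tr(R\,\Gamma_0)$, it builds a \emph{new} state $\Gamma_{\mathcal M}$ supported in $\{n_+^L\le\mathcal M\}$ (so that $R\ge 0$ holds exactly there) via an IMS-type localization in the spectrum of $n_+^L$: one writes $\Gamma_m=\theta_{\mathcal M}(n_+^L-m)\Gamma_0\,\theta_{\mathcal M}(n_+^L-m)$ with a smooth cutoff $\theta_{\mathcal M}$, controls the energy localization error by $\mathcal M^{-2}(|\tr(d_1\Gamma_0)|+|\tr(d_2\Gamma_0)|)$ where $d_1,d_2$ are the parts of the interaction changing $n_+^L$ by $\pm1,\pm2$, and handles the entropy by subadditivity together with the variational principle to discard the tail $\{n_+^L>\mathcal M/8\}$. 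The $d_k$ are then bounded by the full potential energy plus $\|v\|_1 K_H^3\ell^{-3}N\,\tr(n_+\Gamma_0)$, i.e.\ only first-moment input is needed; the factor $K_H^3$ and the exponent $1/18-2\gamma-\nu$ come out of this, together with $\mathcal M^{-2}$ and the entropy term $T\cdot\tr(n_+\Gamma_0)/\mathcal M$. Your proposal also does not address the entropy: a sharp spectral cut $\one_{\{n_+^L\le\mathcal M\}}\Gamma_0\one_{\{n_+^L\le\mathcal M\}}$ does not have controllable entropy relative to $\Gamma_0$, which is another reason the smooth localization is essential.
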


The proof of Theorem \ref{thm.gaps} is given in Section \ref{sec.gaps} and relies on rough a priori bounds on the numbers of excitations in $\Lambda_\ell$.
%

\subsection{Localization of the 3Q term}
Here we show that the main contribution of  $Q_3^{\rm{ren}}$ defined in (\ref{eq:SF_DefQ3}) comes from the creation and annihilation of the so-called ``soft pairs" where one particle is in the condensate and the other has low but non-zero momentum. 
\begin{lemma}\label{lem.Q3loc1}
Let $v$ be as in Proposition \ref{prop.L1v}. For all $\varepsilon >0$, there exists a $C>0$ such that the following holds. Assume that $\rho a^3 \leq C^{-1}$, then for all $K_H \geq C K_\ell^4$,
\[ Q_3^{\rm{ren}} \geq Q_{3,L} - \varepsilon \left( Q_4^{\rm{ren}} + \frac{1}{\ell^2} n_+ + \frac{ K_H}{\ell^2} n_+^H\right),\]
with
\begin{equation}\label{eq:defQ3low}
 Q_{3,L} := \sum_{i \neq j} P_i Q_{j}^L g(x_i - x_j)Q_i Q_j + \hc
\end{equation}
\end{lemma}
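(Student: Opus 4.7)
The plan is to isolate the ``high middle momentum'' part of $Q_3^{\rm ren}$ and show it is absorbable. Decompose $Q_j = Q_j^L + Q_j^H$, so that $Q_3^{\rm ren} = Q_{3,L} + Q_{3,H}$ with
\begin{equation*}
Q_{3,H} := \sum_{i\neq j}\bigl(P_i Q_j^H g(x_i-x_j) Q_i Q_j + \hc\bigr),
\end{equation*}
and the task reduces to proving $Q_{3,H} \geq -\varepsilon\bigl(Q_4^{\rm ren} + \ell^{-2} n_+ + K_H \ell^{-2} n_+^H\bigr)$. The mechanism is an operator Cauchy--Schwarz: since $g = v\varphi \geq 0$, I factor $g = g^{1/2} g^{1/2}$ as multiplication operators and view each summand as $(g^{1/2} Q_j^H P_i)^*(g^{1/2} Q_j Q_i)$. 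The pairwise bound $X^* Y + Y^* X \geq -\alpha X^* X - \alpha^{-1} Y^* Y$, summed, gives
\begin{equation*}
Q_{3,H} \geq -\alpha \sum_{i\neq j} P_i Q_j^H g(x_i-x_j) Q_j^H P_i - \alpha^{-1}\sum_{i\neq j} Q_i Q_j g(x_i-x_j) Q_j Q_i
\end{equation*}
for a parameter $\alpha > 0$ to be tuned.

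For the first sum, I use the averaging identity $P_i g(x_i - x_j) P_i = \ell^{-3} G(x_j) P_i$ with $G(x_j) = \int_\Lambda g(y - x_j)\dd y \leq 8\pi a$; combined with $N \leq 20\rho \ell^3$ and $\rho a \ell^2 = K_\ell^2$, this yields $\sum_{i\neq j} P_i Q_j^H g Q_j^H P_i \leq C\rho a\, n_+^H = C K_\ell^2 \ell^{-2} n_+^H$, which is absorbed into $\varepsilon K_H \ell^{-2} n_+^H$ as soon as $\alpha \lesssim \varepsilon K_H / K_\ell^2$. For the second sum I compare to $Q_4^{\rm ren}$: using $g \leq v$ and writing $\Pi_{ij} = Q_j Q_i + \omega R_{ij}$ with $R_{ij} = P_j P_i + P_j Q_i + Q_j P_i$, the elementary Cauchy--Schwarz $\Pi_{ij}^* v \Pi_{ij} \geq \tfrac12 Q_i Q_j v Q_j Q_i - R_{ij}^* \omega v\omega R_{ij}$ reduces matters to the remainder $\sum R_{ij}^* \omega v\omega R_{ij}$. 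Each summand carries at least one $P$-projection on each side, and evaluating such pieces through $P_i (\omega^2 v)(x_i - x_j) P_i = \ell^{-3}\int \omega^2 v\, P_i$ together with the integrability bound $\|v\|_1 \leq C(\rho a)^{-1/2} K_\ell$ from Proposition \ref{prop.L1v} produces contributions of the form $C \ell^{-3} \|v\|_1 (n_+ + N)$. For $\alpha \gtrsim 1/\varepsilon$, the $Q_4^{\rm ren}$ part sits inside $\varepsilon Q_4^{\rm ren}$, while these remainders fit inside the $\varepsilon \ell^{-2} n_+$ slot using the a priori bounds on the number of excitations.

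Combining the two constraints on $\alpha$ gives the threshold $K_H \gtrsim K_\ell^2/\varepsilon^2$ at the naive level; the stronger hypothesis $K_H \geq C K_\ell^4$ supplies the slack needed to cover the fully diagonal $PP \cdot PP$ cross term in $R_{ij}^* \omega v\omega R_{ij}$, whose natural bound involves $\rho \|v\|_1 N$ and is not dominated by $n_+$ alone. I expect this accounting to be the main obstacle: the sum $\sum Q_i Q_j g Q_j Q_i$ is only controlled by $Q_4^{\rm ren}$ modulo the $R_{ij}^* \omega v\omega R_{ij}$ corrections, and squeezing them inside the allowed error budget requires carefully combining the $L^1$ bound on $v$ with the a priori estimates on $n_+$, the price being the $K_\ell^4$ threshold on $K_H$.
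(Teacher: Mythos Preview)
Your decomposition $Q_3^{\rm ren}=Q_{3,L}+Q_{3,H}$ and the first Cauchy--Schwarz step are fine, and your estimate of $\sum_{i\neq j}P_iQ_j^H g\,Q_j^H P_i\leq C\rho a\,n_+^H$ is correct. The gap is in the treatment of $\alpha^{-1}\sum_{i\neq j}Q_iQ_j\,g\,Q_jQ_i$.

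When you expand $\Pi_{ij}=Q_jQ_i+\omega R_{ij}$ and use $\Pi_{ij}^*v\Pi_{ij}\ge\tfrac12 Q_iQ_jvQ_jQ_i-R_{ij}^*\omega v\omega R_{ij}$, the remainder $\sum_{i\neq j}R_{ij}^*\omega^2 v\,R_{ij}$ contains the fully diagonal piece $\sum_{i\neq j}P_iP_j\,\omega^2 v\,P_jP_i$, which is of order $N^2\ell^{-3}\int\omega^2 v$. For the potentials of Proposition~\ref{prop.L1v} one has $\int\omega^2 v=\int v-2\int g+\int v\varphi^2\ge\|v\|_1-16\pi a$, and $\|v\|_1$ is of order $(\rho a)^{-1/2}K_\ell\sim\ell$, not of order $a$. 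Hence this piece is $\sim N^2\ell^{-2}\sim N\rho\ell$, and after multiplication by $\alpha^{-1}\sim\varepsilon$ it is $\sim\varepsilon N\rho\ell$, which exceeds the available budget $\varepsilon\ell^{-2}n_+$ by a factor $\sim N\rho\ell^3/n_+=N^2/n_+$. Neither the hypothesis $K_H\ge CK_\ell^4$ (which only buys room in the $n_+^H$ slot) nor the a~priori bounds on $n_+$ (which are \emph{upper} bounds, and moreover the lemma is an operator inequality, not a bound in a fixed state) can close this gap.

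The paper avoids this by reversing the order of the two operations: it inserts $\Pi_{ij}$ \emph{before} the Cauchy--Schwarz step, writing
\[
\sum_{i\neq j}\bigl(P_iQ_j^H\,g\,Q_jQ_i+\hc\bigr)=\sum_{i\neq j}\bigl(P_iQ_j^H\,g\,\Pi_{ij}+\hc\bigr)-\sum_{i\neq j}\bigl(P_iQ_j^H\,g\omega\,R_{ij}+\hc\bigr).
\]
The first sum is then paired directly against $\Pi_{ij}$, giving $\delta\,Q_4^{\rm ren}+C\delta^{-1}\rho a\,n_+^H$ (using $g\le v$). The leftover cross-terms carry $g\omega$, whose $L^1$ norm is $\widehat{g\omega}(0)\le\widehat g(0)=8\pi a$ (not $\|v\|_1$), and they retain the projection $Q_j^H$ on one side; a second weighted Cauchy--Schwarz then bounds them by $C\rho\widehat g(0)(\delta\,n_++\delta^{-1}n_+^H)$. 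Choosing $\delta=\varepsilon K_\ell^{-2}$ places all terms in the stated error, and the condition $K_H\ge CK_\ell^4$ arises from $\delta^{-1}\rho a\,n_+^H\le\varepsilon K_H\ell^{-2}n_+^H$. The essential point you are missing is that the reconstruction of $\Pi_{ij}$ must precede the Cauchy--Schwarz, so that only one factor of $\omega$ and the integrable $g\omega$ (rather than $\omega^2 v$) appear in the residual terms.
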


The last two error terms will be absorbed by a fraction of the spectral gaps $G$ by taking $\varepsilon$ sufficiently small but fixed. The proof of Lemma \ref{lem.Q3loc1} is given in Section \ref{sec:3Q}.


\subsection{Symmetrization}
To deal with Neumann boundary conditions, an important idea of \cite{HHNST} is to symmetrize the potential using a mirroring technique to make it commute with the Neumann momentum. The mirror transformations $p_z$, for $z\in \mathbb{Z}^3$, are defined by
\begin{equation}\label{def:pz}
(p_z(x))_i=(-1)^{z_i}\Big(x_i-\frac{\ell}{2}\Big)+ \frac{\ell}{2}+\ell z_i, \qquad i=1,2,3.
\end{equation}
Note that in particular $p_z(\Lambda) = \{x+\ell z\;:\;x\in \Lambda\}$.
For $p \in \frac{\pi}{\ell} \mathbb N^3_0$, we denote by
\begin{equation}\label{def:neubasis}
u_p(x)= \frac{1}{\sqrt{|\Lambda|}} \prod_{i=1}^3 c_{p_i} \cos (p_i x_i), \qquad \text{where } \quad c_{p_i} = \begin{cases}
1 &\text{if} \quad p_i = 0, \\
\sqrt{2} &\text{if} \quad p_i \neq 0,
\end{cases}
\end{equation}
the normalized eigenbasis of the Neumann Laplacian on $\Lambda$. For any $f \in L^1(\Lambda)$, the symmetrization of $f$ is defined by
\begin{equation}\label{def:gs}
f^{\rm{s}}(x,y) := \sum_{z \in \mathbb Z^3} f(p_z(x) - y)
\end{equation}
for a.e. $x,y \in \Lambda$. If $f$ is radial, as shown in \cite[Lemma 3.2]{HHNST}, the operator with kernel $f^{\rm{s}}(x,y)$ commutes with the Neumann basis.
\begin{lemma}\label{lem:sq}
Let $f:\; \mathbb{R}^3\to \mathbb{R}$ be radial and integrable with $\mathrm{supp}(f) \subset B(0,R)$ for some $R \leq \ell /2$. Then for $p,q \in \frac{\pi}{\ell} \mathbb{N}^3$ we have
\begin{equation}\label{eq:2u}
\int_{\Lambda^2}f^\s(x,y)u_p(x)u_q(y)\mathrm{d}x\mathrm{d}y= \delta_{p,q} \int_{\mathbb{R}^3} f(x)\prod_{i=1}^3 \cos (p_i x_i)\mathrm{d}x = \delta_{p,q}\widehat{f}(p).
\end{equation}
\end{lemma}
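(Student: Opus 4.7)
The plan is to \emph{unfold} the $\Lambda$-integral into an integral over $\mathbb{R}^3$ by exploiting two structural features of the mirroring construction. First, the family $\{p_z(\Lambda)\}_{z\in\mathbb{Z}^3}$ is a measure-disjoint tiling of $\mathbb{R}^3$: from \eqref{def:pz} one reads off $p_z(\Lambda) = \prod_{i=1}^3 [z_i\ell,(z_i+1)\ell]$. Second, each Neumann cosine $\cos(p_i\,\cdot\,)$ with $p_i \in \tfrac{\pi}{\ell}\mathbb{N}_0$ extends to an even $2\ell$-periodic function on $\mathbb{R}$, hence is symmetric about both $x_i=0$ and $x_i=\ell$; since these two reflections together with translation by $2\ell$ generate the action of the $p_z$, one concludes that $\cos(p_i\,p_z(x)_i)=\cos(p_i x_i)$ and therefore $u_p\circ p_z = u_p$ once $u_p$ is extended to $\mathbb{R}^3$ by the same formula \eqref{def:neubasis}.

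With these two facts in hand I would proceed as follows. Fixing $y\in\Lambda$ and performing the change of variables $w=p_z(x)$ (an affine isometry, so $|dw|=|dx|$) in each term of the sum defining $f^\s$, one obtains
\begin{equation*}
\int_{\Lambda} f^\s(x,y)\, u_p(x)\,dx
 = \sum_{z\in\mathbb{Z}^3} \int_{p_z(\Lambda)} f(w-y)\, u_p(w)\,dw
 = \int_{\mathbb{R}^3} f(w-y)\, u_p(w)\,dw,
\end{equation*}
where the second equality uses the tiling and the compact support of $f$ makes the sum finite. I then translate by $v=w-y$ and expand each factor via $\cos(p_i(v_i+y_i)) = \cos(p_iv_i)\cos(p_iy_i)-\sin(p_iv_i)\sin(p_iy_i)$. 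Any monomial containing at least one $\sin(p_iv_i)$ is odd in $v_i$; since $f$ is radial, hence even in each $v_i$, such contributions vanish upon integration over $\mathbb{R}^3$. The surviving piece factorises as
\begin{equation*}
\frac{1}{\sqrt{|\Lambda|}}\prod_{i=1}^{3} c_{p_i}\cos(p_iy_i) \cdot \int_{\mathbb{R}^3} f(v)\prod_{i=1}^{3}\cos(p_iv_i)\,dv \;=\; u_p(y)\,\widehat{f}(p).
\end{equation*}

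To close the argument I would integrate the resulting identity $\int_\Lambda f^\s(x,y) u_p(x)\,dx = u_p(y)\widehat{f}(p)$ against $u_q(y)$ over $\Lambda$ and invoke the orthonormality $\int_\Lambda u_p u_q = \delta_{p,q}$ of the Neumann eigenbasis, yielding \eqref{eq:2u}. There is no serious obstacle here; the only step needing care is the cosine-invariance, which depends crucially on the quantisation $p_i\ell\in\pi\mathbb{N}_0$. The support hypothesis $R\leq\ell/2$ plays no role in the identity itself: it merely guarantees that for fixed $(x,y)$ at most one $z$ contributes to $f^\s(x,y)$, a convenient simplification but not required for the argument.
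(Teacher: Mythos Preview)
Your proof is correct. The paper does not give its own argument for this lemma but simply cites \cite[Lemma 3.2]{HHNST}; your unfolding via the tiling $\{p_z(\Lambda)\}_{z\in\mathbb{Z}^3}$ together with the invariance $u_p\circ p_z=u_p$ (which the paper records separately as item 4 of Lemma~\ref{lem:pz}) is precisely the natural route and matches the approach of the cited reference.

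One small remark on your closing comment: while the identity \eqref{eq:2u} indeed goes through for any integrable radial $f$ via Fubini as you indicate, the hypothesis $R\le\ell/2$ is not entirely idle in the paper's setup---it is what makes $f^\s(x,y)$ a well-defined pointwise function (a finite sum) rather than merely an $L^1$ object, which is how the paper uses $g^\s$ elsewhere. But this does not affect the validity of your argument.
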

\noindent Here and through all the paper $\widehat{f}$ is the Fourier transform of $f$ defined by
\begin{equation*}
\widehat{f}(p)=\int_{\mathbb{R}^3} f(x)e^{-ipx}\mathrm{d}x.
\end{equation*}

\noindent Note that since $\supp g \subset \supp v \subset B(0,R)$, the sum defining $g^s$ is finite, and $g^s$ agrees with $g$ except when $x$ or $y$ is at distance $R$ from the boundary of $\Lambda$. We then define the symmetrized operators
\begin{align}\label{eq:Q_0sym}
Q_0^{\rm{sym}} &:= \frac 12 \sum_{i \neq j} P_i P_{j} (g + g \omega)^\s(x_i,x_j)P_i P_j,  \\
{Q}_1^{\rm sym}&:=\sum_{i \neq j} Q_i P_j (g + g \omega)^\s(x_i-x_j) P_j P_i  + \hc, \\
{Q}_2^{\rm sym}&:=
		\sum_{i\neq j} P_i Q_j (g+ g\omega)^\s(x_i-x_j)P_j Q_i   + \sum_{i\neq j} P_i Q_j (g+g\omega)^\s(x_i-x_j)Q_j P_i \\
		&\quad+\frac{1}{2}\sum_{i\neq j} P_iP_j g^s(x_i-x_j) Q_j Q_i + \hc, \nonumber\\
Q_{3,L}^{\rm{sym}} &:=  \sum_{i \neq j} P_i Q_{j}^L g^{\rm{s}}(x_i,x_j)Q_i Q_j + \hc \label{eq:Q_3sym}
\end{align}

\begin{theorem}\label{thm.sym}
For all $\varepsilon >0$, there exists a $C>0$ such that the following holds. Let $v$ be as in Proposition \ref{prop.L1v} and assume that $\rho a^3 \leq C^{-1}$, $K_H \geq C K_\ell^4$, and $K_\ell K_H^3 \leq (\rho a^3)^{- \frac 1 2}$, we have
\begin{align}
H_N^{\rm{mod} } &\geq H_N^{\rm{sym}} -  C N \rho a \frac R \ell - \varepsilon G, \nonumber
\intertext{where}
\label{def:hsym}
H_N^{\rm{sym}} &:= \sum_{i=1}^N \mathcal T_i  + Q_0^{\rm{sym}} + Q_{2}^{\rm{sym}}+  Q_{3,L}^{\rm{sym}}.
\end{align}
\end{theorem}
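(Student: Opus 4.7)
The plan is to start from the decomposition of Lemma~\ref{lem:split}, $\sum_{i<j}v(x_i-x_j)=Q_0^{\rm ren}+Q_1^{\rm ren}+Q_2^{\rm ren}+Q_3^{\rm ren}+Q_4^{\rm ren}$, and to replace each $Q_k^{\rm ren}$ by its symmetrized counterpart at a cost bounded by $CN\rho a\,R/\ell+\varepsilon G$. The crucial observation is a consequence of Lemma~\ref{lem:sq}: the symmetrized kernel $f^{\rm s}(x,y)=\sum_{p\in(\pi/\ell)\mathbb N_0^3}\widehat f(p)\,u_p(x)u_p(y)$ is diagonal in the Neumann basis, and the would-be one-excitation term $Q_1^{\rm sym}$ vanishes identically: in $\sum_{i\neq j}Q_iP_j f^{\rm s}(x_i,x_j)P_jP_i$, the rightmost $P_j$ forces the summation index $p$ to $0$ on both legs by diagonality, after which $Q_i$ annihilates the surviving condensate state. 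The terms $Q_0^{\rm sym},Q_2^{\rm sym},Q_{3,L}^{\rm sym}$ appearing in $H_N^{\rm sym}$ are precisely the symmetric counterparts of the remaining renormalized contributions, while $Q_4^{\rm ren}\geq 0$ is discarded after reserving a small fraction to absorb the localization loss from Lemma~\ref{lem.Q3loc1}.

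For the leading term $k=0$ the symmetrization error is a scalar: $P_iP_j$ projects onto the rank-one condensate state, so $Q_0^{\rm ren}-Q_0^{\rm sym}=\tfrac12 N(N-1)\ell^{-6}\iint_{\Lambda^2}[(g+g\omega)(x-y)-(g+g\omega)^{\rm s}(x,y)]\,dx\,dy$. Since $\supp g\subset B(0,R)$ with $R\leq C_0a\ll\ell$, this difference is concentrated in the boundary layer $\{{\rm dist}(x,\partial\Lambda)\leq R\}\cup\{{\rm dist}(y,\partial\Lambda)\leq R\}$, yielding $\iint|g-g^{\rm s}|\leq CR\ell^2\|g\|_{L^1}\lesssim aR\ell^2$ and an analogous estimate for $g\omega$. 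Multiplied by the $\ell^{-6}$ prefactor this gives the announced $N^2\ell^{-4}Ra\sim N\rho a\,R/\ell$. For $k=3$ I would first invoke Lemma~\ref{lem.Q3loc1} to pass from $Q_3^{\rm ren}$ to $Q_{3,L}$ at a cost of $\varepsilon Q_4^{\rm ren}+\varepsilon\ell^{-2}n_++\varepsilon K_H\ell^{-2}n_+^H$, the last two terms being absorbed in a fraction of $G$ by definition \eqref{eq:gap} and the first by $(1-\varepsilon)Q_4^{\rm ren}$ from the original decomposition.

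The main difficulty lies in controlling the remaining off-diagonal symmetrization errors, namely $Q_1^{\rm ren}=Q_1^{\rm ren}-Q_1^{\rm sym}$, $Q_2^{\rm ren}-Q_2^{\rm sym}$, and $Q_{3,L}-Q_{3,L}^{\rm sym}$, none of which is sign-definite. For each I would apply a Cauchy--Schwarz splitting of the schematic form
\begin{equation*}
\pm\sum_{i\neq j}A_{ij}^*(f-f^{\rm s})(x_i,x_j)B_{ij}\leq \varepsilon\sum_{i\neq j}B_{ij}^*\,v\,B_{ij}+\varepsilon^{-1}\sum_{i\neq j}A_{ij}^*\,\frac{(f-f^{\rm s})^2}{v}\,A_{ij},
\end{equation*}
where $B_{ij}$ is chosen so that the first term is dominated by $Q_4^{\rm ren}$ (using $v\geq g$ and the combinations appearing in $\Pi_{ij}$), and the second term, being supported in the boundary layer, is estimated by a fraction of $\varepsilon G$ plus $CN\rho a R/\ell$. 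The pointwise bound $g_v(y)\leq Cv(x)$ for $|x|\leq|y|$ from \eqref{eq.propv2} and the $L^1$-bound $\|v\|_{L^1}\leq C(\rho a)^{-1/2}K_\ell$ from \eqref{eq.propv1} are indispensable to compensate the slow $1/|x|$ decay of $\omega$, while the conditions $K_H\geq CK_\ell^4$ and $K_\ell K_H^3\leq(\rho a^3)^{-1/2}$ enter through the momentum cutoffs appearing in the low/high decomposition $Q_j^L,Q_j^H$ to convert the various losses into the asserted $\varepsilon G$.
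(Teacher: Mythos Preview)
Your overall strategy matches the paper's: treat each $Q_k^{\rm ren}-Q_k^{\rm sym}$ separately via the boundary-layer structure of $g^{\rm s}-g$, observe $Q_1^{\rm sym}=0$, apply Lemma~\ref{lem.Q3loc1} before symmetrizing the cubic term, and reconstruct $\Pi_{ij}$ on the $QQ$-side so that a fraction of $Q_4^{\rm ren}$ absorbs the Cauchy--Schwarz cost. Two points, however, need correction.

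First, the role of \eqref{eq.propv2} is misidentified. It is not used to tame the $1/|x|$ decay of $\omega$; rather, combined with the mirror monotonicity $|p_z(x)-y|\geq|x-y|$ from Lemma~\ref{lem:pz}, it yields the pointwise bound $g(p_z(x)-y)\leq Cv(x-y)$ and hence $g^{\rm s}-g\leq Cv$. This is exactly what makes your weight $(g^{\rm s}-g)^2/v$ bounded and, in the paper's formulation, what allows $\Pi^*(g^{\rm s}-g)\Pi$ to be absorbed directly into $Q_4^{\rm ren}$. The $L^1$-bound on $v$ from \eqref{eq.propv1} plays no role in this section.

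Second, you are vague about the terms where no $\Pi_{ij}$ can be formed, namely the $PQQP$ and $PQPQ$ pieces of $Q_2^{\rm ren}-Q_2^{\rm sym}$ (and the analogous $A$-side terms after your Cauchy--Schwarz). These cannot be paired with $Q_4^{\rm ren}$. The paper integrates out the $P$-variable to reduce to $\sum_j Q_j\one_{\Lambda\setminus\Lambda_R}(x_j)Q_j$, then splits $Q=Q^L+Q^H$: the high part gives $C\rho a\,n_+^H$, absorbed by the $K_H n_+^H/\ell^2$ gap provided $K_H\geq CK_\ell^2$, while the low part is expanded in the finite Neumann basis of $\mathcal P_L$ to yield $C\rho a\,(R/\ell)K_H^3 n_+^L$, absorbed by the $n_+/\ell^2$ gap precisely under the assumption $K_\ell K_H^3\leq(\rho a^3)^{-1/2}$. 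This is where that hypothesis is actually used, not in your schematic Cauchy--Schwarz.
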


\noindent The proof of Theorem \ref{thm.sym} is given in Section \ref{sec.sym}.
\begin{remark}
The assumption that $V$ is non-decreasing is only used here. It ensures that $v$ satisfies the condition \eqref{eq.propv2} which is needed to estimate some error terms arising from the symmetrization of the term ${Q}_2^{\mathrm{ren}}$.
\end{remark}

\subsection{C-number substitution}
From now on, we work with the symmetrized Hamiltonian $H_N^{\rm{sym}}$ in second quantization.
Recall the basis $\{u_p\}_p$ defined in (\ref{def:neubasis}). We introduce the bosonic Fock space 
\begin{equation}
{\mathscr F}(L^{2}(\Lambda)) = \bigoplus_{n=0}^{\infty} L^2(\Lambda)^{\otimes_s n},
\end{equation}
and the associated creation and annihilation operators that satisfy the canonical commutation relations
\begin{align}
\label{eq:ccr}
a_p = a(u_p), \qquad a^*_p = a^*(u_p), \qquad \big[ a_p, a_q^* \big] = \delta_{p,q}
\end{align}
for all $p,q \in {\Lambda}^*=\frac{\pi}{\ell} \mathbb N^3_0$.  
It is convenient to extend this definition to $p =(p_1,p_2,p_3) \in \frac{\pi}{\ell} \mathbb Z^3$, by denoting $u_p = u_{(\vert p_1\vert, \vert p_2\vert,\vert p_3\vert)}$ and $a_p= a(u_p) = a_{(\vert p_1\vert, \vert p_2\vert,\vert p_3\vert)}$. We also define $\Lambda^*_+=\frac{\pi}{\ell} \mathbb N^3_0 \setminus  \lbrace 0 \rbrace$ and the set of generalized low momenta as
\[ \mathcal P_L^{\mathbb Z} = \big\lbrace p \in \frac{\pi}{\ell} \mathbb Z^3, \quad 0 < |p| \leq K_H \ell^{-1} \big\rbrace.\]
With this notation we obtain
\begin{lemma}[Second quantization in the Neumann basis]\label{lem.secondquant}
We have the following identity on the sector of $\mathscr{F}(L^2(\Lambda))$ with $N$ bosons:
\begin{align}
H_N^{\rm{sym}} &= \frac{\widehat{g}(0)(N(N-1)-n_+(n_+-1))}{2|\Lambda |} +\frac{\widehat{g\omega}(0)}{2|\Lambda |}   a_0^*a_0^* a_0a_0 \nonumber \\
	&\quad  + \sum_{p \in {\Lambda}^*_+} \Big( \tau(p) a_p^* a_p + \frac{\widehat{g}(p)}{|\Lambda |} a_0^* a_p^* a_p a_0 + \frac{ \widehat{g}(p) }{2 |\Lambda |}\big( a_0^* a_0^* a_p a_p + \hc \big) \Big) \nonumber \\
& \quad + \frac{1}{|\Lambda|} \sum_{\substack{k \in {\Lambda}^*_+,\, p \in \mathcal P_L^{\mathbb Z} \\p\neq k} }c(p,k) \widehat{g}(k) \big( a_0^* a_p^* a_{p-k} a_k + \hc \big) + \frac{1}{|\Lambda|} \sum_{p\in {\Lambda}^*_+}\Big((\widehat{g \omega}(0)+\widehat{g \omega}(p))a_0^* a_p^*a_p a_0 \Big). \label{eq:lem.secondquant}
\end{align}
where 
\begin{equation}\label{eq:tau}
\tau(p) = |p|^2 - \frac{\pi}{2\ell^2}\one_{\lbrace p \neq 0 \rbrace} - \frac{K_H}{\ell^2} \one_{\lbrace p \in \mathcal {P}_H \rbrace }
\end{equation}
is the symbol of the kinetic energy $\mathcal T$ and, recalling \eqref{def:neubasis}, $c(p,k)$ are the normalizing factors given by 
\begin{equation}
c(p,k):= \prod_{i=1}^3 \frac{c_{k_i-p_i}}{c_{p_i} c_{k_i}}.
\end{equation}
Note that if none of the indices $i$ in the product is $0$, then the above constant is $\frac{1}{\sqrt{8}}$.
\end{lemma}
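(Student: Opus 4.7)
The plan is to expand $H_N^{\rm sym} = \sum_i \mathcal{T}_i + Q_0^{\rm sym} + Q_2^{\rm sym} + Q_{3,L}^{\rm sym}$ term by term in second quantization with respect to the Neumann eigenbasis $\{u_p\}_{p \in \Lambda^*}$. Each of the potential-energy terms is a two-body operator of the form $\sum_{i\neq j} A_i B_j W^\s(x_i,x_j) C_j D_i$ with $A,B,C,D\in\{P,Q,Q^L\}$ and symmetrized kernel $W^\s$. The standard formula
\begin{equation*}
\sum_{i\neq j} A_i B_j W^\s(x_i,x_j) C_j D_i = \sum_{p,q,r,s}\bigl\langle u_p \otimes u_q,\, (A\otimes B)\, W^\s\, (C\otimes D)\, u_r \otimes u_s \bigr\rangle\, a_p^* a_q^* a_s a_r
\end{equation*}
reduces the problem to evaluating matrix elements with Lemma \ref{lem:sq}, with the projections restricting each index to $\{0\}$ (for $P$), $\Lambda^*_+$ (for $Q$), or $\mathcal P_L$ (for $Q^L$).

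The kinetic energy is immediate: since $\mathcal T$ is diagonal in the Neumann basis with symbol $\tau(p)$, we have $\sum_i \mathcal T_i = \sum_{p\in\Lambda^*}\tau(p)\, a_p^* a_p$. For $Q_0^{\rm sym}$, only $p=q=r=s=0$ survives; using $u_0 = |\Lambda|^{-1/2}$ and the identity $\int_\Lambda f^\s(x,y)\,dx = \widehat f(0)$ (which follows from Lemma \ref{lem:sq} by testing against $u_0 u_q$, and uses that the reflections $\{p_z(\Lambda)\}_z$ tile $\mathbb R^3$), we obtain $\frac{\widehat g(0) + \widehat{g\omega}(0)}{2|\Lambda|}\, n_0(n_0-1)$. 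For $Q_2^{\rm sym}$, its three pieces produce respectively: the direct term $\sum_{i\neq j}P_iQ_j(g+g\omega)^\s P_jQ_i$ has matrix element $\frac{\widehat{(g+g\omega)}(p)}{|\Lambda|}\delta_{p,q}$ directly from Lemma \ref{lem:sq}, yielding $\frac{\widehat{(g+g\omega)}(p)}{|\Lambda|} a_0^* a_p^* a_p a_0$; the exchange term $\sum_{i\neq j} P_iQ_j(g+g\omega)^\s Q_jP_i$ integrates out $x$ first to give the diagonal coefficient $\frac{\widehat{(g+g\omega)}(0)}{|\Lambda|}$; and the pair term $\frac{1}{2}\sum_{i\neq j}P_iP_j g^\s Q_jQ_i + \hc$ produces $\frac{\widehat g(p)}{2|\Lambda|}(a_0^* a_0^* a_p a_p + \hc)$.

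The delicate step is $Q_{3,L}^{\rm sym}$, whose matrix element $\langle u_0 u_p, g^\s u_{q_1} u_{q_2}\rangle$ involves the product $u_p(y) u_{q_2}(y)$, which is not a single Neumann mode. Expanding via the product-to-sum identity $\cos(a)\cos(b)=\tfrac12[\cos(a+b)+\cos(a-b)]$ componentwise decomposes this product as a sum over the eight sign choices $\sigma\in\{\pm\}^3$ of un-normalized cosine modes at momenta $p + \sigma q_2 \in \frac{\pi}{\ell}\mathbb Z^3$. Re-normalizing each mode back into the Neumann basis introduces precisely the factor $c(p,k) = \prod_i c_{k_i-p_i}/(c_{p_i}c_{k_i})$ appearing in the statement, after which Lemma \ref{lem:sq} evaluates the remaining integral over $x$ as $\widehat g(k)$ (using radiality of $\widehat g$ to extend its definition to $k \in \frac{\pi}{\ell}\mathbb Z^3$). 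The eight sign choices reorganize into a single sum over $k\in\Lambda^*_+$ and $p\in\mathcal P_L^{\mathbb Z}$ once one extends $u_p = u_{|p|}$ and $a_p = a_{|p|}$ to $\mathbb Z^3$ indices.

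Finally, the algebraic identity $n_0(n_0-1) + 2 n_0 n_+ = N(N-1) - n_+(n_+-1)$, valid on the $N$-particle sector since $n_0 + n_+ = N$, combines the $\widehat g(0)$ piece of $Q_0^{\rm sym}$ with the $\widehat g(0)$ piece of the exchange term in $Q_2^{\rm sym}$ to produce the first term of \eqref{eq:lem.secondquant}. The main obstacle is combinatorial: carefully tracking the eight sign choices in the product-to-sum expansion for $Q_{3,L}^{\rm sym}$ and verifying that they collapse into the single factor $c(p,k)$ with the advertised index set $\mathcal P_L^{\mathbb Z}$; every other step reduces to a direct application of Lemma \ref{lem:sq} and the standard two-body second quantization formula.
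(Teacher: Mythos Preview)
Your proposal is correct and follows essentially the same approach as the paper's proof: both compute $Q_0^{\rm sym}$ and $Q_2^{\rm sym}$ directly from Lemma~\ref{lem:sq}, expand the triple product $u_p(y)u_s(y)$ in $Q_{3,L}^{\rm sym}$ via the cosine product-to-sum identity to generate the eight sign choices that collapse into the $\mathcal P_L^{\mathbb Z}$ sum with factor $c(p,k)$, and then combine the $\widehat g(0)$ contributions using the identity $n_0(n_0-1)+2n_0n_+=N(N-1)-n_+(n_+-1)$. The only cosmetic difference is your labeling of the two $PQ$-terms in $Q_2^{\rm sym}$ as ``direct'' and ``exchange'', which is swapped relative to the usual convention (ordinarily $P_iQ_j\cdots Q_jP_i$ is called direct, since neither particle changes mode), but your computed coefficients are correct regardless.
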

The proof of Lemma \ref{lem.secondquant} is given in Section \ref{sec:cnum}. Following Bogoliubov's approximation and using \cite{LieSeiYng-05}, we perform a $c$-number substitution, effectively replacing $a_0$ by a complex number $z$ and $a_0^*$ by $\overline{z}$. The transformed Hamiltonian acts on the excitation Fock space,
\[ \mathscr F^\perp = \mathscr  F \big( \lbrace 1 \rbrace^\perp \big).\]
We obtain the following lower bound on the free energy of $H^{\rm{sym}}_N$.

\begin{theorem}\label{thm.cnumber}
For any $m >3$
there exist $C>0$ and $\varepsilon>0$ such that the following holds. Let $v$ be as in Proposition~\ref{prop.L1v} and assume $\rho a^3 \leq C^{-1}$. Then for all $0 \leq 10\mu \leq \ell^{-2}$, $C \leq N\leq 20 \rho \ell^3$, $0 \leq T \leq C \rho a (\rho a^3)^{-\nu}$, $C \leq \mathcal M \leq C^{-1} \ell/a$, and $K_H \geq C K_\ell^4$, $K_\ell K_H^3 \leq (\rho a^3)^{- \frac 1 2}$, we have
\[ - T\log \tr \Big(  e^{- \frac 1 T  (H_N ^{\rm{sym}} + \frac 1 2 G)} \Big) \geq 4\pi a \frac{ N^2 }{|\Lambda|} - T \log \int_{\mathbb C} \tr_{\mathscr F ^\perp} \Big(  e^{- \frac 1 T ( \mathcal H_\mu(z) + \varepsilon \mathcal G(z))} \Big) e^{- \frac{\mu}{T}  N} \dd z - C \rho a ,\]
where 
\begin{align}\label{def.Hmu}
		\mathcal{H}_{\mu}(z)&=  \sum_{p\in {\Lambda}^*_+}\Big(\tau(p) a_p^* a_p +\frac{\vert z\vert^2}{|\Lambda|}  \widehat g(p)   a_p^* a_p + \frac{1}{2|\Lambda |} \widehat g(p) ( \overline{z}^2 a_pa_{p} + z^2 a_p^* a_p^*)  \Big) \nonumber \\
		&\quad -\mu \vert z\vert^2 +8 \pi a(\rho a^3)^{\frac{1}{4}}\Big(\frac{\vert z\vert^4}{\vert \Lambda\vert }-\rho N\Big) \nonumber \\
		&\quad+ \frac{1}{2 \vert \Lambda \vert}  \widehat{g \omega}(0) \vert z\vert^4+\frac{\vert z\vert^2}{\vert \Lambda \vert }\sum_{p\in {\Lambda}^*_+}(\widehat{g \omega}(0)+\widehat{g \omega}(p))a_p^*a_p+ Q_{3,L}^{\rm{sym}}(z), \\
Q_{3,L}^{\rm{sym}}(z) &= \frac{1}{|\Lambda|} \sum_{\substack{k \in {\Lambda}^*_+, p \in \mathcal P_L^{\mathbb Z}\\ p\neq k} }c(p,k) \widehat{g}(k) \big( \overline{z} a_p^* a_{p-k} a_k + \hc \big), \label{def:Q3Lz}
	\end{align}
and
\begin{align}\label{def.Gz}
\mathcal G(z) = \frac{\pi^2 n_+}{2\ell^2} +  \frac{K_H n_+^H}{\ell^2} + \frac{n_+^L n_+}{\mathcal M \ell^2} + \frac{K_H n_+^L n_+^H}{\mathcal M \ell^2} + \frac{\rho a}{N^m}(|z|^{2m} + |z|^{2m-2} n_+ + |z|^{2m-4} n_+^2).
\end{align}
\end{theorem}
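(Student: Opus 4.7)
The plan is to reduce the $N$-particle free energy of $H_N^{\rm sym}+\tfrac12 G$ to a classical integral over the condensate amplitude $z\in\mathbb{C}$ via the $c$-number substitution of \cite{LieSeiYng-05}, proceeding in three steps: (i) relax the canonical constraint $\hat N=N$ using a chemical potential $\mu$; (ii) apply the Berezin-Lieb upper bound to the $a_0$ mode alone to produce the $z$-integral; (iii) match the resulting expression with $\mathcal H_\mu(z)+\varepsilon\mathcal G(z)$ and control the residual errors by $O(\rho a)$.

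For step~(i), the starting point is the elementary bound $\one_{\{\hat N=N\}}\le e^{\mu(\hat N-N)/T}$, valid for any $\mu\ge 0$ (with $\hat N=n_0+n_+$), which gives
\[
\tr_{L^2_{\rm sym}(\Lambda^N)}\, e^{-(H_N^{\rm sym}+G/2)/T}\;\le\; e^{-\mu N/T}\,\tr_{\mathscr F}\, e^{-(H_N^{\rm sym}+G/2-\mu\hat N)/T},
\]
producing the outer $e^{-\mu N/T}$ of the claimed RHS and turning the chemical potential into the interior shift $-\mu\hat N=-\mu n_0-\mu n_+$. For step~(ii), I use coherent states $|z\rangle_0$ on the $a_0$ factor of $\mathscr F$ together with the anti-Wick upper symbol, following the justification of the Lieb-Seiringer-Yngvason $c$-number substitution, to get
\[
\tr_{\mathscr F}\, e^{-(H_N^{\rm sym}+G/2-\mu\hat N)/T}\;\le\; \int_{\mathbb{C}}\tr_{\mathscr F^\perp}\, e^{-H(z)/T}\,\dd z,
\]
where $H(z)$ results from Lemma~\ref{lem.secondquant} by substituting $a_0\to z$, $a_0^*\to \bar z$ in the anti-Wick ordering. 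The ordering mismatch contributes lower-order $|z|^{2(k-1)}$ corrections per monomial $a_0^{*k}a_0^k$, which I absorb in step~(iii).

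In step~(iii), I first isolate the leading constant $\widehat g(0)N(N-1)/(2|\Lambda|)=4\pi a N^2/|\Lambda|+O(\rho a)$, which gives the first term of the RHS. The remaining terms of $H(z)$ then match $\mathcal H_\mu(z)$ directly: $a_0^*a_p^*a_pa_0\to|z|^2 a_p^*a_p$ recovers $\widehat g(p)|z|^2 a_p^*a_p/|\Lambda|$; the pairing $(a_0^{*2}a_p^2+\hc)$ produces the Bogoliubov off-diagonal $(\bar z^2 a_p^2+z^2 a_p^{*2})$; the cubic $Q_{3,L}^{\rm sym}$ becomes $Q_{3,L}^{\rm sym}(z)$ with a single $\bar z$; the $g\omega$ contributions yield $\widehat{g\omega}(0)|z|^4/(2|\Lambda|)$ and the diagonal $|z|^2\sum_{p\in\Lambda^*_+}(\widehat{g\omega}(0)+\widehat{g\omega}(p))a_p^*a_p/|\Lambda|$; and $-\mu a_0^*a_0\to-\mu|z|^2$. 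I then add the confining $8\pi a(\rho a^3)^{1/4}(|z|^4/|\Lambda|-\rho N)$ to force the $z$-integral to concentrate near $|z|^2\simeq N$; the shift $-\rho N$ is chosen so that the total cost of this term on the dominant $z$-region is $O(\rho a)$. The polynomial tails $(\rho a/N^m)(|z|^{2m}+|z|^{2m-2}n_++|z|^{2m-4}n_+^2)$ in $\mathcal G(z)$ guarantee absolute convergence of the integral for $m>3$ and contribute only $O(\rho a)$. Finally the retained $\tfrac12 G$ absorbs $\varepsilon\mathcal G(z)+\mu n_+$ as operators on $\mathscr F^\perp$, by comparing coefficients under the hypotheses $10\mu\le\ell^{-2}$, $\mathcal M\ge C$, and $\varepsilon$ small.

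The main obstacle is the rigorous $c$-number substitution of $a_0$ in the presence of the many other quantum modes. Monomials in $a_0,a_0^*$ appear naturally in normal order, e.g.\ $a_0^{*2}a_0^2=n_0(n_0-1)$, whereas the Berezin-Lieb upper bound demands the anti-Wick symbol; the difference produces $|z|^{2(k-1)}$ corrections coupled to excitation-mode operators, and controlling them uniformly in $z$ is precisely what forces the introduction of the quartic $(\rho a^3)^{1/4}|z|^4/|\Lambda|$ confinement and the high-$|z|$ tails in $\mathcal G(z)$. A secondary delicate point is the compatibility of the chemical-potential remainder $-\mu n_+$ from step~(i) with the spectral gap $\tfrac12 G$, which is precisely what pins down the admissible range $10\mu\le\ell^{-2}$.
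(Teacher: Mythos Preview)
Your overall architecture—chemical potential to relax the particle number, then the Berezin--Lieb/Lieb--Seiringer--Yngvason $c$-number substitution on the $a_0$ mode, then matching terms—is exactly the paper's approach. However, the justification you give for two of the terms in $\mathcal H_\mu(z)$ and $\mathcal G(z)$ is not correct as written, and would not close.

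The convexity term $8\pi a(\rho a^3)^{1/4}\big(|z|^4/|\Lambda|-\rho N\big)$ cannot simply be ``added'' to $\mathcal H_\mu(z)$ at cost $O(\rho a)$ on the dominant $z$-region. Adding a term that is positive for large $|z|$ to the lower symbol makes the right-hand side of the inequality \emph{larger}, so you must extract it from the left-hand side. The paper does this by inserting, \emph{before} the $c$-number substitution and on the $N$-particle sector, the operator $\frac{8\pi a(\rho a^3)^{1/4}}{|\Lambda|}(n_0^2-N^2)$, which is nonpositive there since $n_0\le N$; its upper symbol then produces the $|z|^4$ term. The same logic applies to the polynomial tails in $\mathcal G(z)$: they do not come from $\tfrac12 G$ (which is independent of $a_0$ and hence of $z$), but from the operator $\rho a\,\mathcal N^m/N^m$, equal to $\rho a$ on the $N$-particle sector, followed by the elementary bound $\mathcal N^m\ge n_0^m+n_0^{m-1}n_++n_0^{m-2}n_+^2$ and then the $c$-number substitution. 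Your sentence ``the retained $\tfrac12 G$ absorbs $\varepsilon\mathcal G(z)$'' is therefore incorrect for the $z$-dependent part of $\mathcal G(z)$.

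A secondary omission: the second-quantized $Q_0^{\rm sym}$ in Lemma~\ref{lem.secondquant} contributes the negative term $-\widehat g(0)\,n_+(n_+-1)/(2|\Lambda|)$, which you do not mention. It cannot be dropped for a lower bound; the paper absorbs it into a fraction of $G$ via $\widehat g(0)\,n_+^2/|\Lambda|\le C\rho a\,n_+^H+Ca\ell^{-3}n_+^Ln_+$, and this is precisely where the hypotheses $K_H\ge CK_\ell^4$ and $\mathcal M\le C^{-1}\ell/a$ enter.
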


\noindent The proof of Theorem \ref{thm.cnumber}, including details of the c-number substitution, is given in Section \ref{sec:cnum}.
\begin{remark}\label{rem:convexity}$\,$
\begin{enumerate}
\item 
In the physically relevant region $|z|^2 \simeq N$ we control the number of particles using the chemical potential $\mu$. To estimate the contribution of the physically non-relevant region $|z|^2 \gg N$, we will use the last terms in \eqref{def.Gz}.
\item The term proportional to $(\rho a^3)^{\frac{1}{4}}$ in \eqref{def.Hmu} is artificially added to the energy to provide convexity in the variable $\vert z\vert^2$ (see \eqref{eq:Fz}).
\end{enumerate}
\end{remark}
For $|z|^2 \geq K_{\ell}^{1/4}N =  (\rho a^3)^{-\eta/4} N$ we have a simple lower bound on the Hamiltonian.
\begin{lemma}\label{lem.largez}
Under the assumptions of Theorem \ref{thm.cnumber}, if $|z|^2 \geq K_{\ell}^{1/4} N$ and $m > 2\eta^{-1} + 14$ such that $K_{H} \leq K_{\ell}^{\frac{m+1}{12}}$, then there exists a constant $c>0$ such that
\[ \mathcal H_\mu(z) + \varepsilon \mathcal G(z) \geq \frac 1 2 \sum_{p \in {\Lambda}^*_+} \tau(p) a_p^* a_p  +  c\rho a K_{\ell}^{\frac{m-1}{4}}\frac{|z|^{2}}{N},\]
where we recall that $\tau(p)$ is given by (\ref{eq:tau}).
\end{lemma}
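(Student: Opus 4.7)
My approach is to exploit the enormity of the repulsive polynomial $\frac{\varepsilon \rho a}{N^m}(|z|^{2m} + |z|^{2m-2} n_+ + |z|^{2m-4} n_+^2)$ inside $\varepsilon \mathcal G(z)$ when $|z|^2 \geq K_\ell^{1/4} N$. Writing $t := |z|^2/N \geq K_\ell^{1/4}$, the $i$-th repulsive piece scales like $\rho a\, t^{m-i}(n_+/N)^i$, while the target $c\rho a K_\ell^{(m-1)/4} t$ scales like $\rho a K_\ell^{(m-1)/4} t$; since $t^{m-1}\geq K_\ell^{(m-1)/4}$, the $i=0$ piece already matches the target at $t=K_\ell^{1/4}$ and exceeds it for larger $t$, leaving a surplus of order $\tfrac{\varepsilon}{2}\rho a\, t^m$ with which to absorb every other contribution of $\mathcal H_\mu(z)$ via generous Cauchy--Schwarz, provided $m$ is taken large enough.

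Concretely, I would first split $\sum_p \tau(p) a_p^* a_p$ in two and reserve one half for the right-hand side of the claim. Into the other half I would fold the Bogoliubov block $\tfrac{|z|^2 \widehat g(p)}{|\Lambda|} a_p^* a_p + \tfrac{\widehat g(p)}{2|\Lambda|}(z^2 (a_p^*)^2 + \bar z^2 a_p^2)$. The elementary operator inequality $\pm(z^2 (a_p^*)^2+\bar z^2 a_p^2) \leq |z|^2 (2 a_p^* a_p + 1)$ (obtained from $(za_p^*\pm\bar z a_p)(za_p^*\pm\bar z a_p)^*\geq 0$), the bound $|\widehat g(p)|\leq 8\pi a$, and the splitting $\Lambda^*_+ = \mathcal P_L\cup \mathcal P_H$ (on $\mathcal P_H$ the gap $\tau(p)\geq K_H^2/(2\ell^2)$ swamps the $a|z|^2/|\Lambda|$ coefficient, on $\mathcal P_L$ there are at most $\lesssim K_H^3$ momenta) then yield a lower bound of the form $-C\rho a K_H^3 t - C a|z|^2 n_+^L/|\Lambda|$. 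The other pieces of $\mathcal H_\mu(z)$ I handle individually: $-\mu |z|^2 \geq -\rho a\, t/(10 K_\ell^2)$ via $\mu\leq (10\ell^2)^{-1}$ and $\ell^{-2}=\rho a/K_\ell^2$; the constant $-8\pi a(\rho a^3)^{1/4}\rho N\geq -C\rho a K_\ell^3 (\rho a^3)^{-1/4}$ using $N\leq 20\rho\ell^3=20 K_\ell^3(\rho a^3)^{-1/2}$; the $\widehat{g\omega}$ contribution is bounded by $C a|z|^2 n_+/|\Lambda|$ via $|\widehat{g\omega}(p)|\leq Ca$; and $Q_{3,L}^{\mathrm{sym}}(z)$ is controlled by an asymmetric Cauchy--Schwarz pairing $\bar z a_p^*$ against $\widehat g(k)^{1/2}a_{p-k}a_k$ so as to produce a manageable $(n_+/N)$-polynomial rather than an uncontrolled $n_+^3$. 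All positive terms ($8\pi a(\rho a^3)^{1/4}|z|^4/|\Lambda|$ and $\tfrac{\widehat{g\omega}(0)}{2|\Lambda|}|z|^4$) are simply dropped.

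The proof then reduces to power counting: every negative contribution has been bounded by $-C\rho a\, t^j (n_+/N)^i K_H^{\alpha} K_\ell^{\beta}$ with $j+i\leq 3$, and is absorbed by $\tfrac{\varepsilon}{4} \rho a\, t^{m-i}(n_+/N)^i$, a piece of the matching $|z|^{2m-2i}n_+^i$ repulsive term in $\varepsilon \mathcal G(z)$. The required domination $t^{m-i-j}\geq C_\varepsilon K_H^{\alpha} K_\ell^{\beta}$ at $t = K_\ell^{1/4}$ becomes a comparison of $K_\ell$-exponents: $(m-i-j)/4 \geq \alpha(m+1)/12+\beta+o_\eta(1)$, using $K_H\leq K_\ell^{(m+1)/12}$ and $K_\ell = (\rho a^3)^{-\eta}$. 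The worst offender is $(j,i,\alpha,\beta)=(0,0,0,3+\tfrac{1}{4\eta})$ coming from the constant $-\rho a K_\ell^3(\rho a^3)^{-1/4}$, which requires $m/4\geq 3+\tfrac{1}{4\eta}$, i.e.\ $m\geq 12+\eta^{-1}$; the buffer $m > 2\eta^{-1}+14$ accommodates this, together with the slack needed for all remaining offenders. What is left after absorption is at least $\tfrac{\varepsilon}{2}\rho a\, t^m\geq c\rho a K_\ell^{(m-1)/4} t = c\rho a K_\ell^{(m-1)/4}|z|^2/N$, which combined with the reserved $\tfrac 1 2 \sum_p \tau(p) a_p^* a_p$ yields the claim. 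The hardest step in practice is the cubic term $Q_{3,L}^{\mathrm{sym}}(z)$: its Cauchy--Schwarz must be chosen asymmetrically so that the $|z|^2$ weight ends up in a $\rho a\, t^{m-1}n_+/N$ slot inside $\mathcal G(z)$, rather than producing cubic $n_+^3$ errors that would overflow the available reservoir.
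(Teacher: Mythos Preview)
Your overall strategy is sound and matches the paper's in spirit: reserve half the kinetic energy, dominate every negative piece of $\mathcal H_\mu(z)$ by the polynomial reservoir $\frac{\varepsilon\rho a}{N^m}(|z|^{2m}+|z|^{2m-2}n_++|z|^{2m-4}n_+^2)$, and then do power counting in $t=|z|^2/N$. However, there is a genuine gap in your treatment of the Bogoliubov pairing block. The unweighted inequality $\pm(z^2(a_p^*)^2+\bar z^2 a_p^2)\leq |z|^2(2a_p^*a_p+1)$, applied uniformly in $p$, produces a constant $\frac{|\widehat g(p)|}{2|\Lambda|}|z|^2$ \emph{per momentum}. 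Summed over $\mathcal P_H$ this diverges, since $|\widehat g(p)|$ is merely bounded by $Ca$, not summable. Your parenthetical that on $\mathcal P_H$ the kinetic gap $\tau(p)\geq K_H^2/(2\ell^2)$ ``swamps the $a|z|^2/|\Lambda|$ coefficient'' only addresses the $a_p^*a_p$ coefficient; it does nothing for the vacuum constant. So as written your lower bound on the pairing block is $-\infty$.

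The paper repairs this by completing the square with $d_p=\tfrac{z^2\widehat g(p)}{|\Lambda|p^2}a_p^*+a_p$, which is equivalent to a $p$-dependent Cauchy--Schwarz weight $\lambda_p\sim \tau(p)|\Lambda|/(|\widehat g(p)||z|^2)$. The resulting constant is $-\tfrac{|z|^4}{|\Lambda|^2}\sum_p\tfrac{\widehat g(p)^2}{2p^2}$, which converges to $\sim C\rho a\,t^2 N$ by the scattering identity $\sum_p \widehat g(p)^2/p^2\approx C|\Lambda|\,\widehat{g\omega}(0)$. Absorbing this into $\rho a\,t^m$ at $t=K_\ell^{1/4}$ forces $(m-2)/4\geq 3+\tfrac{1}{2\eta}$, i.e.\ $m\geq 14+2\eta^{-1}$, which is exactly the lemma's hypothesis. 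Your power-counting identifies the worst offender as the constant $-\rho a K_\ell^3(\rho a^3)^{-1/4}$, giving only $m\geq 12+\eta^{-1}$; this misses the true binding constraint because the dominant negative contribution---the summed pairing constant on $\mathcal P_H$---has been lost in your (divergent) estimate. Once you switch to the weighted Cauchy--Schwarz on high momenta, the rest of your outline goes through essentially as in the paper.
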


\noindent The proof is given at the end of Section \ref{sec:cnum}. It remains to deal with $|z|^2 \leq K_{\ell}^{1/4}N$.


\subsection{Bogoliubov diagonalization}

We now diagonalize the main quadratic part of the Hamiltonian $\mathcal H_\mu(z)$ appearing in the first line of (\ref{def.Hmu}). The second line will be estimated later. For $|z|^2 \leq K_{\ell}^{1/4}N$, using the CCR (\ref{eq:ccr}), we obtain the identity
\begin{align}
	&\sum_{p\in {\Lambda}^*_+} (\tau(p) + \rho_z  \widehat g(p)) a_p^* a_p+ \frac{1}{2\vert \Lambda\vert} \widehat g(p) ( \overline{z}^2 a_pa_{p} + z^2 a_p^*a_p^*) \nonumber \\
	& = \sum_{p \in {\Lambda}^*_+} D_p(z) b_p^* b_p   + \frac{1}{2}\sum_{p\in {\Lambda}^*_+} \Big( \sqrt{\tau(p)^2 + 2 \rho_z \widehat g(p) \tau(p)} - \tau(p) - \rho_z \widehat g(p)  \Big), \label{eq.diag}
\end{align}
where we denoted $\rho_z = \frac{|z|^2}{ |\Lambda|}$ and
\begin{equation}\label{def:bdiagonalization}
D_p(z):=\sqrt{\tau(p)^2+2\tau(p)\rho_z\widehat{g}(p)},\qquad b_p:=\frac{a_p+\alpha_pa_p^*}{\sqrt{1-\alpha_p^2}},\qquad \alpha_p:=\frac{\tau(p)+\rho_z \widehat{g}(p)-D_p(z)}{\rho_z\widehat{g}(p)}.
\end{equation}
One easily checks that the argument of the square root is non-negative using that $|\widehat{g}(p) - \widehat{g}(0)| \leq R^2 \widehat{g}(0)|p|^2\leq C a^3 p^2$.


\subsection{Contribution of the 3Q terms.}

It remains to bound the terms in the second line of \eqref{def.Hmu}, this is done in the following lemma.

\begin{theorem}\label{thm.Q3}
Under the assumptions of Theorem \ref{thm.cnumber} and if $K_\ell^{5/4} K_H^2 \leq C^{-1} (\rho a^3)^{- \frac{1}{2}}$ and $K_H \geq K_\ell^4$, then for all $|z|^2 \leq K_{\ell}^{1/4}N$ and all $\mathcal M \leq C^{-1} \rho \ell^3 K_H^{-3} K_\ell^{-17/4}$ we have
\[ (1-K_H^{-1}) \sum_{k\in \mathcal P_H} D_k(z) b_k^* b_k + \frac{\vert z\vert^2}{\vert \Lambda \vert }\sum_{p\in {\Lambda}^*_+}\Big((\widehat{g \omega}(0)+\widehat{g \omega}(p))a_p^*a_p\Big) +  Q_{3,L}^{\rm{sym}}(z) \geq -\varepsilon \mathcal G(z) - C N \rho a \sqrt{\rho a^3} K_{\ell}^{-1}. \]
\end{theorem}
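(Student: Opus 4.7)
The plan is to bound the cubic operator $Q_{3,L}^{\rm{sym}}(z)$ from below by a quantitative Cauchy-Schwarz that distributes its three factors $a_p^*, a_{p-k}, a_k$ between the Bogoliubov quadratic form $(1-K_H^{-1})\sum_{k\in \mathcal P_H} D_k(z) b_k^* b_k$ and the number operator weighted by $|z|^2(\widehat{g\omega}(0)+\widehat{g\omega}(p))/|\Lambda|$. All remaining defects should fit inside $\varepsilon \mathcal G(z)$ or contribute to the advertised residual of size $N\rho a \sqrt{\rho a^3} K_\ell^{-1}$.

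The first step is to split the sum defining $Q_{3,L}^{\rm{sym}}(z)$ according to the location of the outer momentum $k$. When $k \in \mathcal P_H$, since $p \in \mathcal P_L^{\mathbb Z}$ and $K_H\gg K_\ell$, momentum conservation forces $p-k \in \mathcal P_H$ as well; the Bogoliubov coefficient $\alpha_k = O(\rho a/|k|^2)$ is small there, so $a_k$ can be replaced by $(1-\alpha_k^2)^{-1/2} b_k$ at the cost of an error absorbable into a fraction of $D_k(z) b_k^* b_k$. I would then apply the operator inequality $X^* Y + Y^* X \geq -\delta Y^* Y - \delta^{-1} X X^*$ with $X = \bar z c(p,k)\widehat g(k) |\Lambda|^{-1}(a_p^* a_{p-k})\cdot D_k(z)^{-1/2}$ and $Y = D_k(z)^{1/2} b_k$, choosing $\delta = 1-K_H^{-1}$ so that the $Y^*Y$ piece matches the Bogoliubov prefactor in the statement. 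The residual factor $a_{p-k}^*a_{p-k}$ appearing after summing in $p$ contributes $n_+^H$, which together with $D_k \geq K_H/\ell^2$ and the hypothesis $\mathcal M \leq C^{-1}\rho\ell^3 K_H^{-3}K_\ell^{-17/4}$ is bounded by the $\frac{K_H n_+^L n_+^H}{\mathcal M \ell^2}$ entry of $\mathcal G(z)$. When $k \in \mathcal P_L$, the symmetry $k \leftrightarrow p-k$ reduces to the previous case except for the purely low-low-low contribution, where at most $K_H^6$ triples contribute with individual size $|z| a^2/|\Lambda|$; this is controlled directly either by $\frac{n_+^L n_+}{\mathcal M \ell^2}$ or by the target remainder.

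The analytic heart is the resulting summation
\begin{equation*}
\delta^{-1} \frac{|z|^2}{|\Lambda|^2} \sum_{k\in \mathcal P_H} \frac{c(p,k)^2 \widehat g(k)^2}{D_k(z)} \, a_p^* a_p,
\end{equation*}
which must be dominated by the $(\widehat{g\omega}(0)+\widehat{g\omega}(p))$ coefficient of the second term in the theorem. Using $D_k(z) \geq \tau(k) \geq c|k|^2$ on $\mathcal P_H$ and the scattering identity $-2\Delta\omega=g$, which in Fourier reads $\widehat\omega(k) = \widehat g(k)/(2|k|^2)$ up to $O(a \cdot \rho a^3)$ corrections, a Riemann-sum comparison with $\int_{\mathbb R^3} \widehat g(k)^2/(2|k|^2)\,dk/(2\pi)^3 = \widehat{g\omega}(0)$ gives this bound up to a relative error of order $\sqrt{\rho a^3} K_\ell^{-1}$ arising from the lattice spacing $\pi/\ell$ and from the Bogoliubov regime $|k|^2\sim \rho a$. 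The discrepancy between $\widehat{g\omega}(p)$ and $\widehat{g\omega}(0)$ on $\mathcal P_L^{\mathbb Z}$ is $O(\rho a \cdot R^2 |p|^2) = O(\rho a (\rho a^3)K_H^2)$, itself lower order.

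The main obstacle will be this quantitative Fourier comparison: one must simultaneously control (a) the transitional regime $|k|^2 \sim \rho a$ where $D_k(z)$ differs significantly from $\tau(k)$, (b) discreteness corrections of the Neumann sum, including the non-uniform normalizing constants $c(p,k)$ on coordinate hyperplanes, and (c) the Born-series treatment of $\omega$, which is not in $L^2(\Lambda)$. Together with careful bookkeeping of the commutator errors incurred when passing from $a_k$ to $b_k$, and of the symmetrization-induced boundary terms, all subleading contributions should sum to an error fitting into $\varepsilon\mathcal G(z) + CN\rho a \sqrt{\rho a^3} K_\ell^{-1}$, completing the proof.
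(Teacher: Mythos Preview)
Your overall strategy matches the paper's: localize to $k\in\mathcal P_H$, pass from $a_k$ to $b_k$, complete a square against the Bogoliubov form, and identify the residual with $2\rho_z\widehat{g\omega}(0)\,n_+$ via the scattering identity. But the Cauchy--Schwarz you state has a real defect. With $X$ depending on a \emph{fixed} $p$, summing the bound $X^*Y+Y^*X\geq -\delta Y^*Y-\delta^{-1}XX^*$ over $p\in\mathcal P_L^{\mathbb Z}$ produces
\[
\delta\,|\mathcal P_L^{\mathbb Z}|\sum_{k\in\mathcal P_H}D_k(z)\,b_k^*b_k,
\]
i.e.\ an extra factor $|\mathcal P_L^{\mathbb Z}|\sim K_H^3$ that the available $(1-K_H^{-1})\sum_k D_k b_k^*b_k$ cannot pay for; forcing $\delta\lesssim K_H^{-3}$ instead blows up the residual you want to compare to $\widehat{g\omega}(0)$. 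The paper's remedy is to put the $p$-sum \emph{inside} the square: for each $k$ one sets
$A_k = \frac{z\widehat g(k)}{|\Lambda|(1-2K_H^{-1})D_k\sqrt{1-\alpha_k^2}}\sum_{p\in\mathcal P_L^{\mathbb Z}}c(p,k)a_{p-k}^*a_p$
and uses $(1-2K_H^{-1})\sum_k D_k(b_k+A_k)^*(b_k+A_k)\geq 0$ to absorb the cubic term without overcounting. The price is that $A_k^*A_k$ now contains off-diagonal pieces $a_p^*a_{s-k}^*a_{p-k}a_s$ with $p\neq s$ (the paper's $\mathcal T_0$), which require a separate bound by $\frac{n_+^L n_+}{\mathcal M\ell^2}$; your write-up never isolates these, and the sentence claiming ``$a_{p-k}^*a_{p-k}$ contributes $n_+^H$'' does not account for them.

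Two smaller points. First, the ``symmetry $k\leftrightarrow p-k$'' you invoke for $k\in\mathcal P_L$ does not apply---the structural constraint is on $p$, not on $k$ or $p-k$---and the paper simply bounds that whole block by a crude Cauchy--Schwarz (Lemma~\ref{lem.Q3loc2}), which is where the hypothesis $\mathcal M\leq C^{-1}\rho\ell^3 K_H^{-3}K_\ell^{-17/4}$ actually enters. Second, the Neumann bookkeeping you flag is genuinely delicate: the commutators $[a_{p-k},a_{s-k}^*]$ can be nonzero for $p\neq s$ because of sign flips, and the constants $c(p,k)$ must be tracked to convert the Neumann sum over $\mathcal P_H$ into the full $\frac{\pi}{\ell}\mathbb Z^3$-sum appearing in the identity for $\widehat{g\omega}(0)$; without this the factors of $8$ do not close and you will not recover the coefficient $2\rho_z\widehat{g\omega}(0)$ needed to cancel the quadratic term in the statement.
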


Theorem \ref{thm.Q3} is proven in Section \ref{sec:3Q}. As a corollary, and combining with the Bogoliubov diagonalization, we obtain the following lower bound on $\mathcal H_\mu(z)$.

\begin{corollary}\label{corollary}
Under the assumptions of Theorem \ref{thm.cnumber} and Theorem \ref{thm.Q3}, for all $|z|^2 \leq K_{\ell}^{1/4}N$ we have
\begin{align*}
 \mathcal H_\mu(z) + \varepsilon \mathcal G(z) &\geq 4\pi |z|^2 \rho_z a \cdot \frac{128}{15\sqrt{\pi}} \sqrt{\rho_z a^3} + \sum_{p \in {\Lambda}^*_+} \tilde{D}_p(z) b_p^* b_p - \mu |z|^2 \\ &\qquad+8\pi a \Big(\frac{\vert z\vert^4}{\vert \Lambda\vert }-\rho N\Big) (\rho a^3)^{\frac{1}{4}} - C \ell^3 (\rho a)^{5/2} K_\ell^{-1/4}
\end{align*}
where 
\begin{align}\label{eq:D_tilde}
\tilde{D}_p(z) = \begin{cases}
D_p(z) &\text{if} \quad p \notin \mathcal P_H,\\
K_H^{-1} D_p(z) &\text{if} \quad p \in \mathcal P_H.
\end{cases}
\end{align}
\end{corollary}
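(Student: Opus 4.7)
The proof is an assembly of the Bogoliubov diagonalization \eqref{eq.diag}, Theorem \ref{thm.Q3}, and a standard Lee-Huang-Yang integral computation in the Neumann box. Throughout the argument the assumption $|z|^2\leq K_\ell^{1/4}N$ keeps $\rho_z = |z|^2/|\Lambda|$ comparable to $\rho$, so all LHY-type expansions remain valid.

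\emph{Diagonalization and high-momentum splitting.} I begin by applying \eqref{eq.diag} to the quadratic piece in the first line of \eqref{def.Hmu}, which produces $\sum_{p\in\Lambda_+^*}D_p(z)\,b_p^*b_p$ together with the Bogoliubov constant
\begin{equation*}
E_{\rm Bog}(z) := \frac{1}{2}\sum_{p\in\Lambda_+^*}\Bigl(\sqrt{\tau(p)^2+2\tau(p)\rho_z\widehat{g}(p)}-\tau(p)-\rho_z\widehat{g}(p)\Bigr).
\end{equation*}
On $\mathcal P_H$ I split $D_p = K_H^{-1}D_p + (1-K_H^{-1})D_p$. Merging $K_H^{-1}D_p$ on $\mathcal P_H$ with the full $D_p$ on $\Lambda_+^*\setminus\mathcal P_H$ assembles $\sum_p\tilde{D}_p(z)\,b_p^*b_p$ from \eqref{eq:D_tilde}. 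The leftover $(1-K_H^{-1})\sum_{p\in\mathcal P_H}D_p b_p^*b_p$, the $\widehat{g\omega}$-excitation term in the third line of \eqref{def.Hmu}, and the cubic $Q_{3,L}^{\rm sym}(z)$ are precisely the left-hand side of Theorem \ref{thm.Q3}, and so are bounded below by $-\varepsilon\mathcal G(z)-CN\rho a\sqrt{\rho a^3}K_\ell^{-1}$. The $\varepsilon\mathcal G(z)$ is absorbed into the $\varepsilon\mathcal G(z)$ on the left-hand side of the corollary, while $N\rho a\sqrt{\rho a^3}K_\ell^{-1}\leq C\ell^3(\rho a)^{5/2}K_\ell^{-1}\leq C\ell^3(\rho a)^{5/2}K_\ell^{-1/4}$ since $N\leq 20\rho\ell^3$.

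\emph{LHY identification and main obstacle.} After the previous step the corollary reduces to proving
\begin{equation*}
E_{\rm Bog}(z) + \frac{|z|^4}{2|\Lambda|}\widehat{g\omega}(0) \;=\; 4\pi|z|^2\rho_z a\cdot\frac{128}{15\sqrt{\pi}}\sqrt{\rho_z a^3} \;+\; O\!\bigl(\ell^3(\rho a)^{5/2}K_\ell^{-1/4}\bigr).
\end{equation*}
The algebraic key is the scattering identity $\widehat{g\omega}(0) = (2\pi)^{-3}\int \widehat{g}(p)^2/(2|p|^2)\,dp$, coming from Parseval applied to the relation $-2\Delta\omega=g$ established in Section \ref{sec.scattering}. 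Adding and subtracting the counter-term $\tfrac{1}{4}\sum_p(\rho_z\widehat{g}(p))^2/\tau(p)$ recasts $E_{\rm Bog}(z)$ as an absolutely convergent Bogoliubov sum
\begin{equation*}
\frac{1}{2}\sum_{p\in\Lambda_+^*}\left[\sqrt{\tau^2+2\tau\rho_z\widehat{g}(p)}-\tau-\rho_z\widehat{g}(p)+\frac{(\rho_z\widehat{g}(p))^2}{2\tau(p)}\right] - \frac{1}{4}\sum_{p\in\Lambda_+^*}\frac{(\rho_z\widehat{g}(p))^2}{\tau(p)},
\end{equation*}
whose last piece cancels the $\widehat{g\omega}$-term in the continuum limit and whose first piece evaluates to the known LHY integral
\begin{equation*}
\frac{|\Lambda|}{(2\pi)^3}\int_{\mathbb R^3}\left[\sqrt{p^4+16\pi\rho_z a p^2}-p^2-8\pi\rho_z a+\frac{(8\pi\rho_z a)^2}{2p^2}\right]dp \;=\; 4\pi|z|^2\rho_z a\cdot\frac{128}{15\sqrt{\pi}}\sqrt{\rho_z a^3}.
\end{equation*}
The main obstacle is the error bookkeeping attached to this identification: one must quantify the discrete-to-continuum Riemann error for Neumann momenta $p\in\tfrac{\pi}{\ell}\mathbb N_0^3$, replace $\tau(p)$ by $|p|^2$ (paying only for the shifts in \eqref{eq:tau}, which live on $\mathcal P_H$ or cost $O(\ell^{-2})$ overall), and replace $\widehat g(p)$ by $\widehat g(0)=8\pi a$ (with pointwise error $\lesssim a^3|p|^2\widehat g(0)$ on $\supp\widehat g$). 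Controlling each of these three corrections within the budget $C\ell^3(\rho a)^{5/2}K_\ell^{-1/4}$ uses $|z|^2\leq K_\ell^{1/4}N$ (hence $\rho_z\lesssim K_\ell^{1/4}\rho$), $K_\ell K_H^3\leq (\rho a^3)^{-1/2}$, $K_H\geq K_\ell^4$, and the scaling $\ell\sqrt{\rho a}=K_\ell$; these estimates closely parallel the LHY bookkeeping carried out in \cite{FS2} and Section 9 of \cite{HHNST}.
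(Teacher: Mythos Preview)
Your proposal is correct and follows essentially the same approach as the paper: diagonalize via \eqref{eq.diag}, peel off the $(1-K_H^{-1})$ fraction on $\mathcal P_H$ to feed into Theorem~\ref{thm.Q3} together with the $\widehat{g\omega}$-excitation and cubic terms, and then combine $E_{\rm Bog}(z)$ with $\tfrac{|z|^4}{2|\Lambda|}\widehat{g\omega}(0)$ into the LHY constant. The only difference is cosmetic: where the paper simply invokes Lemma~\ref{lem.integral} for the last step, you unpack its content (the $\widehat g(p)^2/2\tau(p)$ counter-term, the $\tau\to p^2$ and $\widehat g(p)\to 8\pi a$ replacements, and the Riemann-sum approximation) and cite external references instead.
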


\begin{proof}
We start from \eqref{def.Hmu} and diagonalize the main quadratic part using (\ref{eq.diag}) and then bound the last two terms using Theorem \ref{thm.Q3}. This results in the bound
\begin{align*}
\mathcal H_\mu(z) + \varepsilon \mathcal G(z) &\geq  \frac{1}{2} \sum_{p \in {\Lambda}^*_+} \Big( \sqrt{\tau(p)^2 + 2 \rho_z \widehat g(p) \tau(p)} - \tau(p) - \rho_z \widehat g(p)  \Big) + \frac{1}{2 \vert \Lambda \vert}  \widehat{g \omega}(0) \vert z\vert^4\\ &\qquad  + \sum_{p \in {\Lambda}^*_+} \tilde{D}_p(z) b_p^* b_p -\mu \vert z\vert^2 +\widehat{g}(0) (\rho a^3)^{\frac{1}{4}}\Big(\frac{\vert z\vert^4}{\vert \Lambda\vert }-\rho N\Big) -CN\rho a \sqrt{\rho a^3} K_\ell^{-1}.
\end{align*}
We then approximate the sum by an integral according to Lemma \ref{lem.integral}, and the largest error is of order $\ell^3 (\rho a)^{5/2} K_\ell^{-1/4}$. 
\end{proof}


\subsection{Proof of Theorem \texorpdfstring{\ref{main.thm}}{1.3}}

We make the following choice for the parameters, recalling that $K_\ell = (\rho a^3)^{-\eta}$,
\begin{align}
\label{eq:params}
K_H = K_\ell^{5}, \qquad \gamma =20 \eta, \qquad \alpha = \frac 14 + \frac{\eta}{2}, \qquad \mathcal M = \rho \ell^3 K_\ell^{-21}, \qquad m = 10 \eta^{-1}.
\end{align}
In particular we have $\alpha + \frac{5 \nu}{2} < \frac{6}{17}$ for all $\nu < \frac{\eta}{3}$ and $\eta < \frac{1}{1026}$.

\smallskip
\textbf{Case $N \leq (\rho a^3)^{\alpha} \rho \ell^3$.} Discarding the interaction for a lower bound, we find similarly as in (\ref{eq:bound_ideal}),
\begin{align}
F(\ell,N) 
	&\geq T \sum_{p \in \Lambda^*_+} \log (1-e^{-\frac{p^2}{T}}) \geq T \sum_{p \in \Lambda^*_+} \log \left(1- e^{\tfrac{-1}{T}\sqrt{p^4+16\pi a N \ell^{-3}p^2}  } \right) -  C \frac{a N}{T \ell^3} T^{5/2}\ell^3 \nonumber \\
	&\geq F_{\mathrm{Bog}}(\ell,N) - C \ell^3 (\rho a)^{5/2} \left((\rho a^3)^{2\alpha-1/2} + (\rho a^3)^{\alpha-3\nu/2}  \right), \label{eq:small_N}
\end{align}
where we used \cite[Eq. (8.16)]{HHNST} to estimate the difference between the two sums. Note that we need $\alpha>1/4$ for the error term to be subleading compared to LHY order. We choose $\alpha = 1/4 + \eta/2$ where we recall that $K_\ell = (\rho a^3)^{-\eta}$, so that $(\rho a^3)^{2\alpha-1/2} + (\rho a^3)^{\alpha-3\nu/2} = K_\ell^{-1} + (\rho a^3)^{1/4 + \eta/2 -3\nu/2} \leq C K_\ell^{-1}$.

\smallskip
\textbf{Case  $N > (\rho a^3)^{\alpha} \rho \ell^3$.} We combine Theorems \ref{thm.gaps}, \ref{thm.sym} and \ref{thm.cnumber}, to obtain the lower bound
\begin{equation}\label{eq. first conclusion bound}
	F(\ell,N)\geq 4\pi a \frac{N^2}{|\Lambda|} -T\log  \big[\int_{\mathbb{C}} \tr_{\mathscr{F}^{\perp}}\Big(e^{-\frac{1}{T} (\mathcal{H}_{\mu}(z) + \varepsilon \mathcal G(z))}\Big) e^{-\frac{\mu}{T}N} \dd z\big] - \mathcal E.
\end{equation}
with 
\begin{align*}
\mathcal E \leq C\ell^3 (\rho a)^{5/2} (\rho a^3)^{1/18-2\gamma - \nu} K_H^{3}+ C K_\ell^2 \rho a,
\end{align*}
which holds for $K_H \geq C K_\ell^4$, $K_\ell K_H^3 \leq (\rho a^3)^{- \frac 1 2}$, $ K_H \leq \sqrt{\ell / a} = (\rho a^3)^{-1/4-\eta /2}$ and since $\mathcal M \leq C^{-1} \rho \ell^3 K_H^{-3} K_\ell^{-17/4}$. These conditions are satisfied with our choice of parameters, and the error is $\mathcal E \leq C\ell^3 (\rho a)^{5/2} K_\ell^{-1}$.

Let us decompose the integral in (\ref{eq. first conclusion bound}) as $\int_{\mathbb{C}} = \int_{|z|^2\leq K_{\ell}^{1/4}N} + \int_{|z|^2> K_{\ell}^{1/4}N} = X + Y$. We will use that if $-T \log X \geq Z$ and $-T \log Y \geq Z$, then 
\begin{align}\label{eq:log_story}
-T \log (X + Y) \geq -T \log (2e^{-Z/T}) = Z - T \log 2.
\end{align}
For the relevant region of $z$, where $|z|^2 \leq K_{\ell}^{1/4}N$, we can use Corollary \ref{corollary}, which gives
\begin{align*}
 \mathcal{H}_{\mu}(z) + \varepsilon \mathcal G(z) + \mu N  &\geq 4\pi |z|^2 \rho_z a \cdot \frac{128}{15\sqrt{\pi}} \sqrt{\rho_z a^3} + \sum_{p\in \Lambda^*_+} \tilde{D}_p(z) b_p^* b_p - \mu |z|^2 +  8\pi a(\rho a^3)^{\frac{1}{4}}\rho_z\vert z\vert^2 \\
  &\qquad +  N (\mu - 8 \pi \rho a (\rho a^3)^{\frac{1}{4}}) - \mathcal{E}'
\end{align*}
with $$\mathcal{E}'\leq C\ell^3 (\rho a)^{5/2}K_\ell^{-1/4}.$$
Now using Lemma \ref{lem.thermal.approx}, we obtain
\begin{equation*}
-T \log \tr_{\mathscr{F}^{\perp}} \Big( e^{-\frac 1 T \sum_{p \in \Lambda^*_+} \tilde{D}_p(z) b_p^* b_p} \Big) 
	= T\sum_{p \in \Lambda^*_+} \log \big( 1 - e^{- \frac 1 T \tilde{D}_p(z)} \big) \geq T\sum_{p \in \Lambda^*_+}\log(1-e^{-\frac{1}{T}\omega_p(z)})-  C \ell^3 (\rho a)^{3}
\end{equation*}
and denoting $\omega_p(z) = \sqrt{p^4 + 16 \pi a |z|^2\ell^{-3} p^2}$, we have
\begin{align*}
-T \log \tr_{\mathscr F^\perp} \left(e^{-\frac{1}{T}\left(\mathcal{H}_{\mu}(z) + \varepsilon \mathcal G(z) + \mu N\right)}\right) \geq F(\vert z\vert ^2) +  N (\mu -8 \pi \rho a (\rho a^3)^{\frac{1}{4}}) - \mathcal{E}'
\end{align*}
with
\begin{equation}\label{eq:Fz}
F(\vert z\vert ^2):= 8\pi a(\rho a^3)^{\frac{1}{4}}\rho_z\vert z\vert^2 +  4\pi |z|^2 \rho_z a \cdot \frac{128}{15\sqrt{\pi}} \sqrt{\rho_z a^3} + T\sum_{p\in \Lambda^*_+}\log(1-e^{- \frac 1 T \omega_p(z)})  -\mu \vert z\vert^2,
\end{equation}
By Lemma \ref{lem:convexity2}, we deduce that $F$ is convex for $\rho a^3$ and $\eta$ small enough. Choosing $\mu$ so that $F'(N) = 0$, $F$ achieves its minimum at $\vert z\vert^2=N$. We obtain
\begin{align}
	&-T \log \int_{\vert z\vert^2 < K_\ell^{1/4} N} \tr_{\mathscr{F}^{\perp}} \Big(e^{-\frac{1}{T} (\mathcal{H}_{\mu}(z) + \varepsilon \mathcal G(z))}\Big) e^{-\frac{\mu}{T}N} \dd z  \nonumber  \\
		&\geq 4\pi a \frac{N^2}{|\Lambda|} \cdot \frac{128}{15\sqrt{\pi}} \sqrt{ \frac{N a^3}{|\Lambda|}}  + T\sum_{p\in \Lambda^*_+}\log(1-e^{- \frac 1 T \omega_p(\sqrt{N})}) 
		- \mathcal{E}' - T \log (CK_{\ell}^{1/4} N). \label{eq. energy for small z}
\end{align}
On the other hand when $|z|^2 \geq K_{\ell}^{1/4} N$, and since our choice of $m$ satisfies both $m > 2\eta^{-1}+14$ and $K_H \leq K_{\ell}^{\frac{m+1}{12}}$, we use Lemma \ref{lem.largez} to obtain
\begin{align*}
- T \log &\int_{\vert z\vert^2 \geq K_{\ell}^{1/4}N} \tr_{\mathscr{F}^{\perp}} \Big(e^{-\frac{1}{T} (\mathcal{H}_{\mu}(z) + \varepsilon \mathcal G(z))}\Big) e^{-\frac{\mu}{T}N} \dd z \\
&\geq  T\sum_{p\in \Lambda^*_+}\log(1-e^{- \frac{\tau(p)}{2T}}) + \mu N - T \log \int_{|z|^2 \geq K_{\ell}^{1/4}N} \exp\Big(-   \frac{c\rho a |z|^{2}}{TNK_{\ell}}\Big) \dd z \\
&\geq  T\sum_{p\in \Lambda^*_+}\log(1-e^{- \frac{p^2}{4T}}) + \mu N - T \log \bigg(\frac{CTN}{\rho a K_{\ell}^{\frac{m-1}{4}}} \exp\Big(- \frac{c \rho a K_{\ell}^{\frac{m}{4}}}{T}\Big) \bigg)\\
&\geq - CT^{5/2}\ell^3 - T \log \left(\frac{CTN}{\rho a K_{\ell}^{\frac{m-1}{4}} }\right)  + c \rho a K_{\ell}^{\frac{m}{4}},
\end{align*}
for $\rho a^3$ small enough. We used that $\tau(p) \geq p^2/2$ and that $\mu \geq 0$.
%
%
The above is clearly bigger than the right-hand side of (\ref{eq. energy for small z}) for $\rho a^3$ small enough due to the constraint on $N$ and the assumptions on $m$. Thus for $\rho a^3$ small enough, using (\ref{eq:log_story}), we obtain
\begin{align*}
	-T\log  \big[\int_{\mathbb{C}} \tr_{\mathscr{F}^{\perp}}\Big(e^{-\frac{1}{T} (\mathcal{H}_{\mu}(z) + \varepsilon \mathcal G(z))}\Big) e^{-\frac{\mu}{T}N} \dd z\big] 
		&\geq 4\pi a \frac{N^2}{|\Lambda|} \cdot \frac{128}{15\sqrt{\pi}} \sqrt{ \frac{N a^3}{|\Lambda|}}  + T\sum_{p\in \Lambda^*_+}\log(1-e^{- \frac 1 T \omega_p(\sqrt{N})}) \nonumber \\
		&\quad - \mathcal{E}' - T \log (CK_{\ell}^{1/4} N).
\end{align*}
We combine the above with \eqref{eq. first conclusion bound} to obtain for $N > (\rho a^3)^{\alpha} \rho \ell^3$
\begin{equation}\label{eq:F_bound_high}
	F(\ell,N)\geq 4\pi a \frac{N^2}{|\Lambda|} \Big(1+\frac{128}{15\sqrt{\pi}} \sqrt{\frac{N a^3}{|\Lambda|}} \Big)+T\sum_{p \in \Lambda^*}\log(1-e^{-\frac{1}{T}\omega_p(\sqrt{N})})-C T\log(N)-\mathcal{E}-\mathcal{E}'.
\end{equation}
We have
\begin{align*}
C T\log(N)+\mathcal{E}+\mathcal{E}' 
	&\leq C \ell^3 \Big((\rho a)^{5/2} (\rho a^3)^{1/18-2\gamma - \nu} K_H^{3} + \frac{T}{\ell^3} \log N  + (\rho a)^{5/2} K_\ell^{-1/4} \Big) \\
	&\leq C \ell^3 (\rho a)^{5/2} \left( K_\ell^{-1/4} + (\rho a^3)^{-\nu} K_\ell^{-3} |\log (\rho a^3)|  \right) \\
	&\leq C \ell^3 (\rho a)^{5/2} K_\ell^{-1/4}.
\end{align*}
Combining the cases $N \leq (\rho a^3)^{\alpha} \rho \ell^3$ and  $N > (\rho a^3)^{\alpha} \rho \ell^3$ finishes the proof of \Cref{main.thm}.

%
\section{Approximation by integrable potentials}\label{sec.scattering}
This section is devoted to the proof of Proposition~\ref{prop.L1v}.
We begin by recalling the definition of the scattering length and related quantities. For more details see \cite{greenbook}.
\begin{definition}
	Let $V: \mathbb{R}^{3} \to \mathbb{R}^+\cup\{\infty\}$ be measurable and radial with support in $B(0,R)$. The scattering length $a=a(V)$ is defined as
	\begin{equation}\label{variational definition of the scattering length}	4\pi a=\inf\Big\{\int\vert \nabla \varphi\vert^2+\frac{1}{2}V|\varphi|^2 \dd x\, \Big\vert \, \varphi\in \dot{H}^1(\mathbb{R}^3),\quad \lim_{\vert x\vert \rightarrow \infty}\varphi(x)=1\Big\},
	\end{equation}
\end{definition}

An important special case is the hard core potential of radius $R>0$
\begin{align}
V_{\rm hc}(x) := \begin{cases} +\infty ,& |x| \leq R, \\ 0, & |x| > R. 
\end{cases}
\end{align}
For this special potential it is not difficult to see that $a(V_{\rm hc})=R$. For general potentials, 
inserting the test function $\max\{0, 1-\frac{R}{|x|}\}$, we find $a\leq R$. It is also easy to verify that $a$ is an increasing function of $V$. The minimizer $\varphi_V$  solves the corresponding Euler-Lagrange equation
\begin{equation} \label{scattering_equation}
	-\Delta \varphi_V+\frac{1}{2}V\varphi_V =0,
\end{equation}
in a weak sense. It is easy to check with Newton's theorem that
\begin{equation}\label{eq.simple form of phi}
	\varphi_V(x)=1-\frac{a}{\vert x\vert}, \quad \text{for $\vert x\vert \geq R$},
\end{equation}
and furthermore $\varphi_V$ is non-decreasing, non-negative and radial. 
We will also use the following standard monotonicity result from \cite[Lemma~C.2]{greenbook}: if $V_1 \geq V_2 \geq 0$, then
$\varphi_{V_1}(x) \leq \varphi_{V_2}(x)$ for all $x$.
We will omit the $v$ from the notation of the scattering length and write $\varphi:=\varphi_V$, if the potential is clear from the context.
We then recall the notation
\begin{align}\label{eq:scat_defs}
	\omega = 1- \varphi,
	\qquad g = V \varphi=V(1-\omega).
\end{align}
Clearly,
\begin{align}\label{eq:ScatOmega}
	-\Delta \omega = \frac{1}{2} g, \qquad \text{and} \qquad 
	\widehat{g}(0)=	\int g \,\dd x = 8\pi a.
\end{align}
Having introduced the necessary theory and notation we may provide the proof of Proposition~\ref{prop.L1v}. The proof will rely on the following two lemmas.

\begin{lemma}\label{adiff}
For all $v_1 \geq v_2 \geq 0$ and $v'\geq 0$ we have
\[a(v_1) - a(v_2) \geq a(v_1+v') - a(v_2+v').\]
\end{lemma}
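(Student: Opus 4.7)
The plan is to reformulate the claim as the submodularity inequality
\[a(v_1) + a(v_2 + v') \geq a(v_2) + a(v_1 + v')\]
and prove it by the classical $\min$/$\max$ trial-function trick in the variational characterization \eqref{variational definition of the scattering length}. Let $\varphi_1 := \varphi_{v_1}$ and $\tilde\varphi_2 := \varphi_{v_2 + v'}$ denote the scattering solutions of $v_1$ and of $v_2 + v'$, respectively. I would introduce
\[\phi_{\max} := \max(\varphi_1, \tilde\varphi_2), \qquad \phi_{\min} := \min(\varphi_1, \tilde\varphi_2),\]
which are Lipschitz, lie in $\dot H^1(\mathbb R^3)$, and tend to $1$ at infinity, hence are admissible trial functions. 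Note that merely using concavity of $a$ (which follows from the linearity in $V$ of the functional being minimized) would be insufficient, since concavity in the infinite-dimensional variable $v$ does not imply the sought two-point cross-difference inequality without some additional structural input.

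Next, I would use $\phi_{\max}$ as a trial for $a(v_2)$ and $\phi_{\min}$ as a trial for $a(v_1 + v')$, and add the two bounds. The kinetic contributions combine perfectly thanks to the pointwise identity $|\nabla \max(f,g)|^2 + |\nabla \min(f,g)|^2 = |\nabla f|^2 + |\nabla g|^2$, yielding exactly $\int |\nabla \varphi_1|^2 + |\nabla \tilde\varphi_2|^2$. The potential part boils down to the pointwise inequality
\[v_2 \phi_{\max}^2 + (v_1 + v') \phi_{\min}^2 \leq v_1 \varphi_1^2 + (v_2 + v') \tilde\varphi_2^2,\]
which splits into two cases: on $\{\varphi_1 \geq \tilde\varphi_2\}$ it reduces to $(v_1 - v_2)(\varphi_1^2 - \tilde\varphi_2^2) \geq 0$ (using $v_1 \geq v_2$), and on $\{\varphi_1 \leq \tilde\varphi_2\}$ to $v'(\tilde\varphi_2^2 - \varphi_1^2) \geq 0$ (using $v' \geq 0$). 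Since $\varphi_1, \tilde\varphi_2$ are the actual minimizers, the right-hand side totals exactly $4\pi[a(v_1) + a(v_2 + v')]$, giving the desired inequality.

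The main expected obstacle is only a mild technical one: admissibility of the $\min$/$\max$ trial functions when $v_1$ or $v_1 + v'$ can take the value $+\infty$, as in the hard-core case relevant to \Cref{prop.L1v}. There $\varphi_1$ vanishes on $\{v_1 = +\infty\}$, hence so does $\phi_{\min}$, so under the convention $0 \cdot \infty = 0$ all energies remain finite and the case analysis holds almost everywhere; no regularization of $v_1$ should be needed. Apart from this, the proof involves only the submodular reformulation and the $\min$/$\max$ test-function identity, both standard in variational arguments.
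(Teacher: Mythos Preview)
Your proof is correct and takes a genuinely different route from the paper. The paper's argument is analytic: for $t\in[0,1]$ it introduces the scattering solutions $\varphi_j^t$ for $v_j+tv'$, uses the Euler--Lagrange equation \eqref{scattering_equation} to derive the exact identity $a(v_1+tv')-a(v_2+tv')=\tfrac{1}{8\pi}\int(v_1-v_2)\varphi_1^t\varphi_2^t$, and then invokes the monotonicity $\varphi_j^t\le\varphi_j^0$ (quoted from \cite[Lemma~C.2]{greenbook}) to conclude. Your argument is purely variational: you recast the claim as submodularity and use the classical $\min/\max$ trial-function trick, which needs only the definition \eqref{variational definition of the scattering length} and the pointwise identity $|\nabla\max(f,g)|^2+|\nabla\min(f,g)|^2=|\nabla f|^2+|\nabla g|^2$. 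Your approach is more self-contained (no external monotonicity lemma) and slightly shorter; the paper's approach yields in passing an explicit formula for the difference $a(v_1+tv')-a(v_2+tv')$ that could be reused, though it is not needed elsewhere here. Your handling of the hard-core case is also fine: on $\{v_2=+\infty\}\subset\{v_1=+\infty\}$ both $\varphi_1$ and $\tilde\varphi_2$ vanish, so $\phi_{\max}$ is admissible for $a(v_2)$, and on $\{v_1+v'=+\infty\}$ at least one of $\varphi_1,\tilde\varphi_2$ vanishes, so $\phi_{\min}$ is admissible for $a(v_1+v')$.
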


\begin{proof}
For all $t \in [0,1]$, we introduce $\varphi_1^t$ and $\varphi_2^t$ as the scattering solutions for $v_1+tv'$ and $v_2+tv'$ respectively. By definition we have for $j=1,2$,
\[4 \pi a(v_j + t v') = \int \Big(| \nabla \varphi_j^t |^2 + \frac{1}{2} (v_j + t v') |\varphi_j^t|^2 \Big)\dd x.\]
In particular, rearranging the terms we find
\begin{align*} 
4\pi a(v_1+tv') - 4\pi a(v_2+tv') &=   \int \Big(\nabla \varphi^t_1 \cdot \nabla (\varphi_1^t - \varphi_2^t) + \frac{1}{2} (v_1 + t v') \varphi_1^t (\varphi_1^t - \varphi_2^t) \Big)\dd x \\  &\qquad - \int \Big(\nabla \varphi^t_2 \cdot \nabla (\varphi_2^t - \varphi_1^t) + \frac{1}{2} (v_2+t v') \varphi_2^t (\varphi_2^t - \varphi_1^t)\Big) \dd x \\ &\qquad +\frac{1}{2} \int (v_1 - v_2) \varphi^t_1 \varphi^t_2 \,\dd x.
\end{align*}
When we integrate by parts, we find that the two first lines vanish by \eqref{scattering_equation}. For instance,
\[ \int \Big(\nabla \varphi^t_1 \cdot \nabla (\varphi_1^t - \varphi_2^t) + \frac{1}{2} (v_1+tv') \varphi_1^t  (\varphi_1^t - \varphi_2^t)\Big) \dd x = \int (-\Delta \varphi_1^t + \frac 1 2 (v_1+tv') \varphi_1^t) (\varphi_1^t - \varphi_2^t) \dd x = 0. \]
Therefore,
\[a(v_1+tv') -  a(v_2+tv') = \frac{1}{8\pi }  \int (v_1 - v_2) \varphi^t_1 \varphi^t_2 \dd x,\]
for all $t \in [0,1]$. Comparing with $t=0$ we deduce
\[ a(v_1+tv') - a(v_2+tv') =  a(v_1) -  a(v_2) + \frac{1}{8\pi}  \int (v_1 - v_2) ( \varphi^t_1 \varphi^t_2 - \varphi_1^0 \varphi_2^0) \dd x.\]
Since $v_j+tv' \geq v_j$, we have $0 \leq \varphi_j^t \leq \varphi_j^0$ pointwise, and thus the last integral is negative. The result follows with $t=1$.
\end{proof}

\begin{lemma}\label{lem. cutting head of potential}
	Given $K>0$ and a non-increasing potential $v$, the potential $\min(v,K)$ satisfies
	\[a(V) \geq a(\min(V,K))\geq a(V)-\frac{2\sqrt{2}}{\sqrt{K}}\]
\end{lemma}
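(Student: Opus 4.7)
The first inequality $a(V) \geq a(\min(V,K))$ is immediate from monotonicity of the scattering length in the potential. For the second, my plan is to reduce to the one-dimensional radial scattering ODE and extract $a(V) - a(\min(V,K))$ through a Wronskian identity between the two scattering solutions.

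Introduce $u(r) := r\varphi_V(r)$ and $u_K(r) := r\varphi_{\min(V,K)}(r)$; these are non-negative, convex, with $u(0) = u_K(0) = 0$ and linear asymptotics $u(r) \sim r - a$, $u_K(r) \sim r - a_K$, where I write $a := a(V)$ and $a_K := a(\min(V,K))$. Since $V$ is radial and non-increasing, I can set $r_K := \sup\{r > 0 : V(r) > K\} \in [0,R]$, so that $\min(V,K) = K$ on $(0,r_K)$ and $\min(V,K) = V$ on $(r_K,\infty)$. The case $r_K = 0$ is trivial, so assume $r_K > 0$.

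The argument has two main ingredients. First, on $(0,r_K)$ the function $u_K$ solves the constant-coefficient equation $u_K'' = \kappa^2 u_K$ with $\kappa := \sqrt{K/2}$ and $u_K(0) = 0$, hence $u_K(r) = A\sinh(\kappa r)$ for some $A > 0$, which yields the explicit ratio
\[
\frac{u_K(r_K)}{u_K'(r_K)} = \frac{\tanh(\kappa r_K)}{\kappa}.
\]
Second, on $(r_K,\infty)$ both $u$ and $u_K$ satisfy the same equation $-y'' + \tfrac{V}{2} y = 0$, so the Wronskian $W = u' u_K - u u_K'$ is constant there. Identifying $W$ from the behaviour beyond the support of $V$ (where $u$ and $u_K$ are affine with slope $1$) gives $W \equiv a - a_K$, and evaluating at $r \to r_K^+$ produces the identity
\[
a - a_K = u_K'(r_K)\left[\frac{\tanh(\kappa r_K)}{\kappa} u'(r_K^+) - u(r_K)\right].
\]
Convexity of $u$ and $u_K$ (their distributional second derivatives $(V/2)u$ and $(\min(V,K)/2)u_K$ are non-negative) together with their asymptotic slope $1$ at infinity imply $u'(r_K^+), u_K'(r_K) \leq 1$, and $u(r_K) \geq 0$. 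Dropping the non-positive term yields
\[
a - a_K \leq \frac{\tanh(\kappa r_K)}{\kappa} \leq \sqrt{\tfrac{2}{K}} \leq \frac{2\sqrt{2}}{\sqrt{K}},
\]
which is in fact a factor of $2$ stronger than the stated bound.

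The step requiring the most care is the singular case, in particular the hard-core potential, where $V = +\infty$ forces $u \equiv 0$ on an interval so that $u'$ has a jump at $r_K = R$. The Wronskian identity must then be interpreted with one-sided limits (for the hard core, $u'(r_K^+) = 1$, saturating the bound). I would justify this either by tracking the jump of $u'$ at $r_K$ when the Wronskian identity is read distributionally, or, more cleanly, by first establishing the bound for bounded non-increasing potentials — where $u,u_K$ are classical $C^1$ objects and all steps above are elementary — and then passing to the monotone limit $V \wedge M \nearrow V$ as $M \to \infty$, using continuity of the scattering length under monotone limits of the potential.
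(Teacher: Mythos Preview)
Your proof is correct and takes a genuinely different route from the paper. The paper first invokes its Lemma~\ref{adiff} (a concavity-type inequality for the scattering length) to reduce the comparison $a(V)-a(\min(V,K))$ to the special case of a hard core of radius $R_K$ versus the constant potential $K\one_{B(0,R_K)}$, and then computes the scattering length of the latter explicitly by solving the ODE and matching at $R_K$. You instead work directly with the two radial scattering functions $u$ and $u_K$, exploit that on $(r_K,\infty)$ they satisfy the \emph{same} equation so that the Wronskian $u'u_K-uu_K'$ is constant and equal to $a-a_K$, and combine this with the explicit $\sinh$ form of $u_K$ on $(0,r_K)$ and the convexity bounds $u'(r_K^+),u_K'(r_K)\leq 1$. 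Your argument is more self-contained (it does not need Lemma~\ref{adiff}) and yields the sharper bound $a-a_K\leq \kappa^{-1}\tanh(\kappa r_K)\leq \sqrt{2/K}$; the paper's detour through Lemma~\ref{adiff} costs a factor of $2$. Your handling of the hard-core case via one-sided limits or monotone approximation is adequate; note that on $(r_K,\infty)$ the potential is already bounded by $K$, so $u$ is $C^1$ there and only the value $u(r_K)$ and the right derivative $u'(r_K^+)$ are needed, which avoids most of the regularity worries.
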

\begin{proof}	
Using that $V$ is decreasing, we get $\{V>K\}=B(0,R_K)$ for some $R_K \geq 0$.
Note that the hard core potential with radius $R_K$ has scattering length $R_K$.
	Then comparing to the hard core potential, we have
	\begin{equation}\label{eq. bound on scattering length}
		a(V)-a(\min(V,K))\leq a(V \one_{B(0,R_K)})-a(K\one_{B(0,R_K)})\leq R_K-a(K\one_{B(0,R_K)}),
	\end{equation}
where in the first inequality we used Lemma \ref{adiff} with $V' = V \one_{B(0,R_K)^c}$, and in the second inequality that $V \one_{B(0,R_K)}$ is smaller than the hardcore potential $V_{hc}$ of radius $R_K$.
	 Now one can compute $a_K:=a(K\one_{B(0,R_K)})$ by solving
	\[-\Delta \varphi +\frac{1}{2}K\one_{ B(0,R_K)}\varphi=0,\]
	to find that there is a $c>0$ such that
	\[\varphi(x)=\begin{cases}
		c\frac{\sinh(\sqrt{\frac{K}{2}}\vert x\vert)}{\vert x\vert},\qquad &\vert x\vert \leq R_K,\\
		1-\frac{a_K}{\vert x\vert},\qquad &\vert x\vert \geq R_K.
	\end{cases}\]
	Knowing that $\varphi$ is continuous and differentiable yields we find that
	\begin{equation}\label{eq. formula for scattering length}
		R_K-a_K=\frac{1-e^{-2\gamma}}{\gamma-1+e^{-2\gamma}(\gamma+1)} a_K\qquad \text{where}\quad \gamma=\sqrt{\frac{K}{2}}R_K.
	\end{equation}
	Lastly we observe that for $\gamma\geq 2$ the right-hand side of \eqref{eq. formula for scattering length} is bounded by $\frac{2}{\gamma} a_K$, and if $\gamma\leq 2$ we have $R_K\leq \frac{2\sqrt{2}}{\sqrt{K}}$. Thus combining \eqref{eq. bound on scattering length} with \eqref{eq. formula for scattering length}, using $\frac{a_K}{R_K}\leq 1$, we get
	\[a(V)-a(\min(V,K))\leq R_K-a_K\leq \frac{2\sqrt{2}}{\sqrt{K}}.\]
\end{proof}
Before giving the proof of Proposition~\ref{prop.L1v}, we explain the construction of $v$ in the case where $V=V_{\rm hc}$ is the hard core potential of radius $a$. In \cite{FS2} it was explained that to get \eqref{eq.propv1} the (almost optimal) approximation of $V_{\rm hc}$ is by a `thin shell' potential supported on the annulus $A:=\{ a- a^2 \ell^{-1} \leq \vert x \vert \leq a\}$ and of height $\ell^2 a^{-4}$.
However, this potential clearly doesn't satisfy \eqref{eq.propv2}, since it vanishes inside the shell. The remedy is to fill the inside of the shell, without changing too much the $L^1$ norm. Therefore, the final choice of $v$ is
\[
v = \ell a^{-3} \one_{B(0,a- a^2\ell^{-1})} + \ell^2 a^{-4} \one_{A}, \qquad \text{when $ V = V_{hc}$ is the hard core.}
\]
Let us now give the details starting from an arbitrary $V$. For convenience we include the following lemma.

\begin{lemma}\label{lem:hardcorereg}
There exists a universal constant $C_0>0$ such that the following is true.
Suppose that $V: {\mathbb R}^3 \rightarrow  [0,\infty]$ is radial and of class $L^1$ with compact support of radius $R$ and that $S\geq 0$. 
If $S \geq \frac{\int V}{8\pi a(V)}$ define the potential $V_S:= V$.
If $S < \frac{\int V}{8\pi a(V)}$ define
\[ V_S := V \one_{[R_S, \infty)}, \]
with $R_S$ chosen so that $\int V_S = 8 \pi S a(V)$. Then $V_S$ satisfies
\[
a(V) \geq a(V_S) \geq a(V) (1 - C_0 S^{-1}).
\]
\end{lemma}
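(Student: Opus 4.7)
The plan is to interpolate between the zero potential and $V$ along a one-parameter family of ``exterior shell'' potentials that progressively peel off the core, and to convert a test-function comparison into a Gronwall-type differential inequality for the scattering length along this family. The upper bound $a(V_S)\leq a(V)$ is immediate from monotonicity of the scattering length in the potential, and in the case $S\geq\beta:=\int V/(8\pi a(V))$ the lemma's $V_S$ is just $V$, so there is nothing to prove. Hence the only case of interest is $0<S<\beta$.

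For $t\in[0,\beta]$, set $V_t:=V\,\mathbf{1}_{\{|x|\geq r_t\}}$, where $r_t\in[0,R]$ is the largest radius such that $\int V_t=8\pi t\,a(V)$. Then $V_0=0$, $V_\beta=V$, and the lemma's $V_S$ is exactly $V_t$ at $t=S$. Write $f(t):=a(V_t)$; monotonicity in $t$ yields that $f$ is non-decreasing with $f(0)=0$ and $f(\beta)=a(V)$. The key point is that $V_s$ vanishes on $B(0,r_s)$, so the bounded radial scattering function $\varphi_{V_s}$ is harmonic and hence \emph{constant} on this ball, equal to $c_s:=\varphi_{V_s}(r_s)$. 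Because $\varphi_{V_s}$ is also non-decreasing in $|x|$ (from $(r^2\varphi'_{V_s})'\geq 0$ with $r^2\varphi'_{V_s}\to 0$ at the origin), one has $\varphi_{V_s}\geq c_s$ pointwise, which combined with the identity $8\pi f(s)=\int V_s\,\varphi_{V_s}\,\mathrm{d}x$ gives
\begin{equation*}
8\pi f(s)\geq c_s\int V_s\,\mathrm{d}x=8\pi\,c_s\,s\,a(V),\qquad\text{i.e.\ } c_s\leq f(s)/(s\,a(V)).
\end{equation*}

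Next, I would use $\varphi_{V_s}$ as a trial function in the variational principle for $a(V_t)$ with $s<t$. Since $V_t-V_s$ is supported on $\{r_t\leq|x|<r_s\}\subset B(0,r_s)$, where $\varphi_{V_s}$ is the constant $c_s$, this yields
\begin{equation*}
4\pi f(t)\leq \int|\nabla\varphi_{V_s}|^2+\tfrac12 V_t\varphi_{V_s}^2\,\mathrm{d}x=4\pi f(s)+\tfrac12 c_s^2\int(V_t-V_s)\,\mathrm{d}x=4\pi f(s)+4\pi c_s^2(t-s)a(V),
\end{equation*}
so that $f(t)-f(s)\leq f(s)^2(t-s)/(s^2 a(V))$ after inserting the bound on $c_s$. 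Dividing by $f(s)f(t)$ and using $f(s)\leq f(t)$ produces the one-step inequality $1/f(s)-1/f(t)\leq (t-s)/(s^2 a(V))$ valid for all $0<s<t\leq\beta$.

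Telescoping this inequality along a partition $S=t_0<t_1<\cdots<t_N=\beta$ and letting the mesh tend to zero (the integrand $\tau\mapsto 1/\tau^2$ being continuous, the Riemann sums converge to the integral) gives
\begin{equation*}
\frac{1}{f(S)}-\frac{1}{a(V)}\leq\frac{1}{a(V)}\Big(\frac{1}{S}-\frac{1}{\beta}\Big)\leq\frac{1}{S\,a(V)},
\end{equation*}
hence $f(S)\geq a(V)\,S/(S+1)\geq a(V)(1-1/S)$ (the last step is just $S^2\geq S^2-1$). For $S\geq 1$ this is the sought bound with $C_0=1$, while for $S<1$ the conclusion is trivial as $a(V)(1-C_0/S)\leq 0$ once $C_0\geq S$. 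The main obstacle is spotting that $\varphi_{V_s}$ is \emph{exactly} constant on the hole of $V_s$, which is what makes the increment of the scattering length reducible to the clean scalar $c_s^2\int(V_t-V_s)$; everything afterwards is a soft Gronwall-type argument.
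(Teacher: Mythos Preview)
Your argument is correct and self-contained; in fact it yields the explicit constant $C_0=1$. The key observations---that $\varphi_{V_s}$ is identically equal to its boundary value $c_s$ on the ball where $V_s$ vanishes, that $c_s\le f(s)/(s\,a(V))$ via the identity $\int V_s\varphi_{V_s}=8\pi a(V_s)$, and that plugging $\varphi_{V_s}$ into the variational principle for $V_t$ isolates the exact increment $c_s^2\int(V_t-V_s)$---are all sound, and the resulting one-step inequality $1/f(s)-1/f(t)\le (t-s)/(s^2 a(V))$ telescopes and passes to the integral as you describe (the left-hand side is independent of the partition, and the Riemann sums converge).

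The paper itself does not prove this lemma: it simply records that the statement is a reformulation of \cite[Lemma~3.3]{FS2}, invoking that the construction of $V_S$ there is explicit. Your route is therefore different in spirit: rather than quoting the earlier result, you run a direct Gronwall-type interpolation along the family $t\mapsto V_t$, exploiting the harmonicity of the scattering solution inside the excised core. This has the advantage of being elementary, fully self-contained, and of producing a sharp universal constant; the paper's route has the advantage of brevity (at the cost of an external dependence). Both lead to the same conclusion, and nothing in the rest of the paper depends on the specific value of $C_0$.
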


This is a reformulation of \cite[Lemma 3.3]{FS2} using the fact that the proof gives the explicit construction of $v_S$. Using this Lemma we can prove Proposition \ref{prop.L1v}.

\begin{proof}[Proof of Proposition~\ref{prop.L1v}]
By Lemma \ref{lem. cutting head of potential}, we may assume that $V \leq K$, with $K = \ell^2 a^{-4}$. This choice guarantees that the change to the scattering length is of order $a^2 \ell^{-1}$. 
By Lemma~\ref{lem:hardcorereg}, we can find $0<R_S<R$ such that the potential
	\[V_S(x)=\begin{cases}
		V(x)\qquad &\vert x\vert \geq R_S\\
		0\qquad &\vert x\vert< R_S
	\end{cases}\]
	has integral $\int V_S \leq 8 \pi S a(V)$ and satisfies 
\begin{equation}\label{eq:tail}
a(V) \geq a(V_S) \geq a(V) (1 - C_0 S^{-1} ).
\end{equation}
Note that this error is small enough with the choice $S=\ell / a(V)$. However note that $V_S$ does not satisfy \eqref{eq.propv2}, so it requires further modifications. First, we extend slightly $V_S$ and define
\begin{equation}
w_S = V_S(R_S) \one_{[R_S - \varepsilon, R_S]} + V_S,
\end{equation}
where we used the convention $\one_I(x) = \one_{\{ |x| \in I \}}$ for a subset $I \subset \mathbb R$, and where  $\varepsilon >0$ will be chosen later.
Let $g_S = w_S \varphi_{w_S}$, and let $x_0$ be the maximal point of $g_S$,
\[ g_S(x_0) = \sup g_S.\]
Note that $x_0 \geq R_S$ by construction.
Then we define our potential $ v$ by
\begin{equation}
 v:= \min \big( g_S(x_0), M \big) \one_{[0,R_S - \varepsilon]} + w_S,
\end{equation}
where the parameter $M$ will be chosen to bound the $L^1$ norm of $v$. Indeed, choosing $\varepsilon = a(V)^2 \ell^{-1}$ and $M = \ell R_S^{-3}$,
\begin{align*}
 \int v \leq C R_S^3 M + C   a(V)^2 \ell^{-1} R_S^2  V_S(R_S) + \int V_S \leq C \ell,
\end{align*}
where we used that $V_S(R_S) \leq K = \ell^2 a(V)^{-4}$, $R_S \leq R \leq C a(V)$ and $\int V_S \leq 8\pi \ell$.  We can compare the scattering lengths of $v$ and $v$ using that $V \geq  v \geq V_S$ and therefore, we get from \eqref{eq:tail} that
 \[
 a(V) \geq a(v) \geq a(V) (1 - C a(V)/\ell ).
 \]

Let us now show that the potential $v$ satisfies \eqref{eq.propv2}. We consider $|x|< |y|$.
\begin{itemize}
\item If $|x| \geq R_S - \varepsilon$ then $v$ is non-increasing on this region, and therefore since $\varphi_{v} \leq 1$ we have
\[  v(x) \geq  v(y) \geq  v(y) \varphi_{ v}(y) = g_v(y).\]
\item When $|x| \leq R_S - \varepsilon$.
Notice that $ v \varphi_{ v}$ is increasing on $[0, R_S]$. Therefore,
\begin{align}\label{gybound}
 g_v(y) \leq  \sup_{|x|\geq R_S}  v \varphi_{ v} \leq \sup_{|x|\geq R_S} w_S \varphi_{w_S} = g_S(x_0),
\end{align}
where the last inequality follows since $\varphi_{w_S} \geq \varphi_{ v}$. If $g_S(x_0) < M$ then $g_S(x_0) =  v(x)$ and we are done. In the other case, when $M \leq g_S(x_0)$ we introduce an auxiliary potential,
\begin{equation}
v_0 = v(x_0) \one_{[R_S-\varepsilon,|x_0|]}.
\end{equation}
We observe that $v_0 \leq w_S$ and this implies that $\varphi_{w_S} \leq \varphi_{v_0}$. Therefore
\begin{equation}\label{gSbound}
g_S(x_0) \leq V(x_0) \varphi_{v_0}(x_0) = V(x_0) \Big( 1 - \frac{a(v_0)}{|x_0|} \Big).
\end{equation}
We estimate the difference between $a(v_0)$ and $|x_0|$ identified with the scattering length of the hard-core of radius $|x_0|$. We use Lemma \ref{lem. cutting head of potential} with $K = v(x_0)$ to cut the hard-core potential, and then compare the resulting potential to $v_0$ using Lemma~\ref{lem:hardcorereg} (with $8 \pi S |x_0|= \int v_0$) to get
\[ |x_0| - a(v_0) \leq  \frac{C}{\sqrt{v(x_0)}} +  C \frac{|x_0|^2}{\int v_0} .\]
Inserting this bound in \eqref{gSbound} we get
\[ g_S(x_0) \leq C \frac{\sqrt{v(x_0)}}{|x_0|} + C a(V) \frac{V(x_0)}{\int v_0} \leq C \frac{\ell}{ a(V)^2 R_S} + C \frac{a(V)}{\min(\varepsilon, R_S) R_S^2} \leq C M,\]
where we used $V(x_0) \leq \ell^2 a(V)^{-4}$, $\varepsilon = a(V)^2/\ell$, and $R_S \leq |x_0| \leq R \leq C a(V)$.
Therefore,
\[g_S(x_0) \leq C M = C  v(x)\]
which combined with \eqref{gybound} gives the desired bound.
\end{itemize}
\end{proof}


\section[Localization of large matrices]{Localization of large matrices: Proof of Theorem \ref{thm.gaps}} \label{sec.gaps}

One of the main ingredients to prove Theorem \ref{thm.gaps} is a rough condensation estimate for low energy states. This is by now a well known result and its proof can be found in \cite{greenbook}, and the technique was also crucial in \cite{HHNST}. At positive temperature, we need to extend this property to mixed states.

\begin{lemma}[Condensation of low energy states]\label{lem.condensation}
There exists a $C>0$ such that the following holds. Assume that $\rho \ell^{3} (\rho a^3)^{\alpha} \leq N \leq 20 \rho \ell^3$, $T \leq \rho a (\rho a^3)^{-\nu}$, with $\alpha + 5\nu/2 < 6/17$  and $(\rho a^3)\leq C^{-1}$. Let $\Gamma$ be a trace-class operator on $L^2(\Lambda^N)$ such that $\Gamma \geq 0$, $\rm{Tr}\, \Gamma =1$, and
\begin{equation}\label{assmpt:lowE}
\mathrm{Tr}(H_N \Gamma) \leq 4 \pi N^2 \ell^{-3}  a(1 + (\rho a^3)^{\frac{1}{17}}).
\end{equation}
Then we have 
\begin{align}
\mathrm{Tr}(n_+\Gamma) &\leq C N K_{\ell}^2(\rho a^3)^{\frac{1}{17}} . \label{eq:apriori_n}
\end{align}
In particular, this holds for the Gibbs state
\begin{equation}
\Gamma_0 = \frac{e^{-\frac{H_N}{T}}}{\mathrm{Tr}(e^{-\frac{H_N}{T}})}.
\end{equation}
\end{lemma}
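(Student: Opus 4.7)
The plan is to apply a rough a priori lower bound on $H_N$ in the spirit of Lieb-Seiringer-Yngvason (LSY), adapted to the Neumann setting on $\Lambda_\ell$, of the form
\begin{equation*}
H_N \,\geq\, 4\pi a \frac{N^2}{|\Lambda|}\bigl(1 - C(\rho a^3)^{1/17}\bigr) + c\,\ell^{-2}\, n_+ ,
\end{equation*}
valid on the $N$-boson sector for the parameter ranges in the statement. The first term is the standard LSY lower bound on Neumann cells inside $\Lambda_\ell$ (with its characteristic exponent $1/17$); the $c\,\ell^{-2}\, n_+$ piece is what remains after using the Neumann Laplacian spectral gap of order $\ell^{-2}$ above the condensate mode. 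The constraint $\alpha + 5\nu/2 < 6/17$ guarantees that neither the relaxed lower bound on $N$ nor the temperature degrades the LSY accuracy; in particular, it forces $\nu < 12/85$.

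Taking the expectation of this inequality against $\Gamma$ and subtracting the hypothesis \eqref{assmpt:lowE}, the Hartree-type terms on both sides cancel up to an error of the right order, leaving
\begin{equation*}
c\,\ell^{-2}\, \mathrm{Tr}(n_+ \Gamma) \,\leq\, C \,\frac{N^2 a}{\ell^3}\,(\rho a^3)^{1/17}.
\end{equation*}
Since $\ell = K_\ell/\sqrt{\rho a}$ implies $\ell^2 \rho a = K_\ell^2$, and $N\leq 20\rho\ell^3$, this rearranges to $\mathrm{Tr}(n_+\Gamma) \leq C N K_\ell^2 (\rho a^3)^{1/17}$, which is precisely \eqref{eq:apriori_n}.

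To apply the lemma to $\Gamma_0$, one must verify \eqref{assmpt:lowE}. Decomposing $\mathrm{Tr}(H_N \Gamma_0) = F(\ell,N) + T\, S(\Gamma_0)$, an upper bound on $F(\ell,N)$ of the form $4\pi N^2 a/\ell^3 + O(N\rho a (\rho a^3)^{1/17})$ is obtained by a Jastrow trial state built from the scattering solution, in the spirit of the standard upper-bound literature. The entropy of the Gibbs state is controlled by the thermal entropy of a free Bose gas over the Bogoliubov quasiparticles, yielding $T\, S(\Gamma_0) \lesssim T^{5/2}\ell^3$. Since $\nu < 3/17$ (a consequence of $\alpha + 5\nu/2 < 6/17$), one checks that $T^{5/2}\ell^3 \leq (\rho a^3)^{1/2-5\nu/2}\cdot N \rho a \leq C N \rho a (\rho a^3)^{1/17}$, verifying \eqref{assmpt:lowE} for $\Gamma_0$.

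The main obstacle lies in the rough lower bound itself. The LSY argument must be carried out on $\Lambda_\ell$ with Neumann boundary conditions, preserving the $(\rho a^3)^{1/17}$ error while simultaneously leaving a positive fraction $c\,\ell^{-2}\, n_+$ of the kinetic energy as a spectral-gap remainder. This tuning of the Dyson cell size against the Neumann gap is precisely what the condition $\alpha + 5\nu/2 < 6/17$ encodes and is the technical heart of the lemma; modulo this input, the rest of the proof is a routine trace manipulation combined with a trial-state upper bound.
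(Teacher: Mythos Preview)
Your first step---the operator lower bound $H_N \geq 4\pi a N^2/|\Lambda|\,(1 - C(\rho a^3)^{1/17}) + c\,\ell^{-2}\,n_+$ and the consequent bound on $\mathrm{Tr}(n_+\Gamma)$---is correct and is exactly what the paper does, citing the Neumann-box LSY argument from \cite{greenbook}.

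The gap is in your verification that the Gibbs state $\Gamma_0$ satisfies \eqref{assmpt:lowE}. You write $\mathrm{Tr}(H_N\Gamma_0)=F(\ell,N)+T S(\Gamma_0)$ and then assert $T S(\Gamma_0)\lesssim T^{5/2}\ell^3$ by comparison to ``a free Bose gas over the Bogoliubov quasiparticles''. This is not justified: the entropy of the interacting Gibbs state is not a priori dominated by the free (or Bogoliubov) gas entropy, and at this stage you have no control whatsoever on the structure of $\Gamma_0$. In fact any direct comparison of $S(\Gamma_0)$ to a free entropy already requires a kinetic-energy input, which is circular in your scheme.

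The paper resolves this by borrowing a small fraction $\varepsilon$ of the (positive) kinetic energy: one writes
\[
\mathrm{Tr}(H_N\Gamma_0)\leq (1+\varepsilon)\,F(\ell,N)\;-\;\Big(\varepsilon\,\mathrm{Tr}(-\Delta\,\Gamma_0)-(1+\varepsilon)\,T S(\Gamma_0)\Big),
\]
and then bounds the bracket below by the Gibbs variational principle for the \emph{free} gas at the rescaled temperature $(1+\varepsilon)T/\varepsilon$, yielding an error $C\varepsilon^{-3/2}T^{5/2}\ell^3$. Optimizing over $\varepsilon$ produces the exponent $(1-2\alpha-5\nu)/5$, and the requirement that this exceed $1/17$ is \emph{precisely} the origin of the constraint $\alpha+5\nu/2<6/17$. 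Your attribution of this constraint to ``tuning the Dyson cell size against the Neumann gap'' in the LSY lower bound is therefore incorrect: the LSY bound is a zero-temperature operator inequality and does not see $\nu$ at all; the temperature enters only through the entropy control, and the $6/17$ is the result of the $\varepsilon$-optimization. If your direct entropy bound were valid you would obtain the weaker requirement $\alpha+5\nu/2<15/34$, which is a signal that something has been skipped.
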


\begin{proof} The result follows from the following lower bound, which holds as long as $N (\rho a^3)^{\frac{1}{17}} \geq 1$,
\begin{equation}
H_N \geq 4 \pi a \frac{N^2}{\ell^3}  (1 - C   (\rho a^3)^{\frac{1}{17}}) + C' \frac{n_+}{\ell^2},
\end{equation}
and which can be found in \cite[Lemma 4.1 and Lemma 5.2]{greenbook}. The constraint on $\alpha$ ensures in particular the condition on $N$. Together with the upper bound \eqref{assmpt:lowE} we deduce
\begin{equation}
{\rm{Tr}} \Big( \frac{n_+}{\ell^2} \Gamma \Big) \leq C N \rho a (\rho a^3)^{\frac{1}{17}},
\end{equation}
which is the expected condensation estimate. It remains to show that $\Gamma_0$ satisfies the upper bound \eqref{assmpt:lowE}, which is not obvious due to the entropy term. To prove this, we use the upper bound on the free energy from \cite{greenbook},
\begin{equation}\label{eq.anupperbound}
\mathrm{Tr}(H_N \Gamma_0 + T\, \Gamma_0 \ln \Gamma_0) \leq  \inf \sigma (H_N) \leq 4 \pi \rho N a (1+ C (\rho a^3)^{1/3}).
\end{equation}
The upper bound on the ground state energy was proven by Dyson \cite{dyson} (see also \cite{newupperbound} for an improvement to the order of the LHY correction, but here we do not need such precise estimates).
Therefore, using \eqref{eq.anupperbound} and the Gibbs variational principle we can bound the energy for any $0<\varepsilon<1$, 
\begin{align} \label{eq:bound_ideal}
\mathrm{Tr}(H_N \Gamma_0) &\leq (1+\varepsilon) \mathrm{Tr}(H_N \Gamma_0+ T \, \Gamma_0 \ln \Gamma_0) - \big( \varepsilon \mathrm{Tr}(-\Delta_{\mathbb{R}^{3N}} \Gamma_0) + (1+\varepsilon)T  \, {\rm{Tr}}  (\,\Gamma_0 \ln \Gamma_0 ) \big)	\nonumber \\	
&\leq (1+\varepsilon) 4\pi \rho N a (1 + C (\rho a^3)^{1/3}) + (1+\varepsilon)T \log \mathrm{Tr} \big( e^{-\frac{\varepsilon}{(1+\varepsilon)T}(-\Delta_{\mathbb{R}^{3N}})}\big),
\end{align}
In the second term, the free energy of the ideal gas is bounded by
\begin{equation*}
T \log \mathrm{Tr}_{\mathscr F^{\leq N}_+} \big( e^{-\frac{\varepsilon}{(1+\varepsilon)T}\sum_{p\neq 0} p^2 a^*_pa_p} \big)  \leq  T \sum_{p \in \ell^{-1}\mathbb{N}^3 \setminus \{0\}} \log \big( 1-e^{-\frac{\varepsilon}{(1+\varepsilon)T}p^2}\big) \leq C \varepsilon^{-3/2}T^{5/2} \ell^3,
\end{equation*}
where $\mathscr F^{\leq N}_+ = \bigoplus_{n=0}^N (\Ran Q)^{\otimes n} \simeq L^2(\Lambda^N)$ is the truncated bosonic Fock space of excitations. Using that $T \leq C \rho a (\rho a^3)^{-\nu}$ and $N \geq \rho \ell^3 (\rho a^3)^{\alpha}$, we obtain
\begin{align*}
\mathrm{Tr}(H_N \Gamma_0) 
	&\leq 4\pi \rho N a (1 + C (\rho a^3)^{1/3} + C (\varepsilon + \varepsilon^{-3/2} \frac{T^{5/2} \ell^3}{N \rho a})) \\
	&\leq 4\pi \rho N a (1 + C (\rho a^3)^{1/3} + C (\varepsilon + \varepsilon^{-3/2} (\rho a^3)^{1/2-\alpha - 5\nu/2})) \\
	&\leq 4\pi \rho N a (1 + C (\rho a^3)^{1/3} + C (\rho a^3)^{(1-2\alpha-5\nu)/5}),
\end{align*}
where we optimized in $\varepsilon$ to obtain the last inequality. The condition that $\alpha + 5\nu/2 < 6/17$ leads to the estimate \eqref{assmpt:lowE}.
\end{proof}
Theorem \ref{thm.gaps} follows from Proposition~\ref{thm:excitationrestriction} below, together with the a priori bounds on $n_+$ from Lemma~\ref{lem.condensation}. The proof of Proposition~\ref{thm:excitationrestriction} is inspired by the localization of large matrices as in \cite{FS2}, and simplified in \cite{2DLHY}. It is also similar to the bounds in \cite[Proposition 21]{HST}. It can be interpreted as an analogue of the standard IMS localization formula. It roughly says that, for a lower bound, we can restrict the estimates states $\Gamma_{\mathcal M}$ which have bounded number of low excitations.
\begin{proposition}[Localization to $\{n_+^L\leq \mathcal{M}\}$]\label{thm:excitationrestriction}
Let $\Gamma_0$ be the Gibbs state associated to $H_N$. Let $\mathcal{M} \geq N (\rho a^3)^\gamma$, for some $\gamma>0$, then there exists a trace class operator $\Gamma_{\mathcal{M}}$ on $L^2(\Lambda^N)$ with $ \Gamma_{\mathcal M} \geq 0$ and trace $1$ such that
\begin{equation}\label{eq. the condition on gamma}
	\one_{\{n_+^L\leq \mathcal{M}\}}\Gamma_{\mathcal{M}}\one_{\{n_+^L\leq \mathcal{M}\}}=\Gamma_{\mathcal{M}},
\end{equation}
and
\begin{align*}
F(\ell,N) &\geq \mathrm{Tr}\big( H_N\Gamma_{\mathcal{M}} + T \, \Gamma_{\mathcal{M}} \ln \Gamma_{\mathcal M})  - C \frac{1}{\mathcal M^2} \left(\tr ( H_N \Gamma_0 )+ \|v\|_1K_H^3 \ell^{-3}  N \tr (n_+ \Gamma_0) \right) \\
	&\qquad  -CT \frac{\tr(n_+ \Gamma_0)}{\mathcal M} \left(1+ \left| \log \frac{\tr(n_+ \Gamma_0)}{\mathcal M}\right|\right).
\end{align*}
\end{proposition}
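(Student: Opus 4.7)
The plan is a classical IMS-type localization adapted to mixed states at positive temperature. Pick a smooth cutoff $\chi:\mathbb{R}\to[0,1]$ with $\chi(t)=1$ for $t\leq 1/2$ and $\chi(t)=0$ for $t\geq 1$, and set $f=\chi(n_+^L/\mathcal{M})$, $g=\sqrt{\one-f^2}$, so that $f^2+g^2=\one$ and $f,g$ are Lipschitz functions of $n_+^L$ with constants of order $\mathcal{M}^{-1}$. Define $p=\tr(f^2\Gamma_0)$, $q=1-p$, and the normalized localized states $\Gamma_{\mathcal{M}}=p^{-1}f\Gamma_0 f$, $\Gamma_g=q^{-1}g\Gamma_0 g$. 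Since $f$ vanishes on $\{n_+^L>\mathcal{M}\}$, $\Gamma_{\mathcal{M}}$ automatically satisfies \eqref{eq. the condition on gamma}. The IMS identity
\[
H_N = fH_Nf + gH_Ng + \tfrac12\bigl([f,[f,H_N]] + [g,[g,H_N]]\bigr),
\]
which follows from $f^2+g^2=\one$, together with cyclicity of the trace, gives
\[
\tr(H_N\Gamma_0) = p\,\tr(H_N\Gamma_{\mathcal{M}}) + q\,\tr(H_N\Gamma_g) + E_{\mathrm{comm}},
\]
where $E_{\mathrm{comm}}=\tfrac12\tr\bigl(([f,[f,H_N]]+[g,[g,H_N]])\Gamma_0\bigr)$.

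For the entropy, the pinching $\Gamma\mapsto f\Gamma f+g\Gamma g$ is a unital quantum channel (Kraus operators $f,g$ self-adjoint with $f^2+g^2=\one$), so it does not decrease $S$, while concavity of $S$ yields $S(p\Gamma_{\mathcal{M}}+q\Gamma_g)\leq pS(\Gamma_{\mathcal{M}})+qS(\Gamma_g)+h(p,q)$ with $h(p,q)=-p\log p-q\log q$. Combining these two facts gives $S(\Gamma_0)\leq pS(\Gamma_{\mathcal{M}})+qS(\Gamma_g)+h(p,q)$. Inserting this and the IMS identity into $F(\ell,N)=\tr(H_N\Gamma_0)-TS(\Gamma_0)$, and using the variational principle $\tr(H_N\Gamma_g)-TS(\Gamma_g)\geq F(\ell,N)$, a short rearrangement yields
\[
F(\ell,N)\geq \tr(H_N\Gamma_{\mathcal{M}})-TS(\Gamma_{\mathcal{M}}) + \frac{E_{\mathrm{comm}}}{p} - \frac{Th(p,q)}{p}.
\]
The entropy remainder is handled by Markov: $q\leq \tr(\one_{\{n_+^L\geq\mathcal{M}/2\}}\Gamma_0)\leq 2\tr(n_+\Gamma_0)/\mathcal{M}$, which combined with Lemma~\ref{lem.condensation} and the hypothesis $\mathcal{M}\geq N(\rho a^3)^\gamma$ gives $q\ll 1$, hence $p\geq 1/2$; the standard estimate $h(p,q)\leq Cq(1+|\log q|)$ for small $q$ then produces exactly the last error term of the proposition.

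It remains to estimate $E_{\mathrm{comm}}$, which is the main obstacle. Since $n_+^L$ commutes with the Neumann Laplacian, $[f,H_N]=[f,V]$ with $V=\sum_{i<j}v(x_i-x_j)$. Expanding $V$ in the Neumann basis as in Lemma~\ref{lem.secondquant}, each monomial $a_p^*a_q^*a_{p'}a_{q'}$ shifts $n_+^L$ by some integer $\alpha\in\{-2,\ldots,2\}$, and Lipschitz regularity of $\chi$ gives $|f(n_+^L+\alpha)-f(n_+^L)|\leq C\mathcal{M}^{-1}$; one further commutator yields the prefactor $\mathcal{M}^{-2}$. Only monomials with $\alpha\neq 0$ (the ``imbalanced'' terms, in which the numbers of $\mathcal{P}_L$ legs among creators and annihilators differ) contribute. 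These split into: (i) terms all of whose momenta lie in $\mathcal{P}_L\cup\{0\}$: there are $O(K_H^3)$ available low momenta and each coefficient is bounded by $\|v\|_1/|\Lambda|$, and reabsorbing two of the four operators into $n_+$ while bounding the rest against the condensate mass $N$ produces $C\mathcal{M}^{-2}\|v\|_1K_H^3\ell^{-3}N\,\tr(n_+\Gamma_0)$; and (ii) terms involving at least one high-momentum leg, which after a single Cauchy--Schwarz are controlled by $C\mathcal{M}^{-2}\,\tr(V\Gamma_0)\leq C\mathcal{M}^{-2}\,\tr(H_N\Gamma_0)$. Substituting these bounds into the key inequality above yields the stated estimate. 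The delicate point throughout is to keep the full $\mathcal{M}^{-2}$ gain while reducing the unbounded $V$ to quantities already controlled by the a priori energy and excitation bounds from Lemma~\ref{lem.condensation}.
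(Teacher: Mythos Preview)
Your overall architecture is correct and, modulo cosmetics, the same as the paper's: a smooth localization in $n_+^L$, an energy identity with a double-commutator remainder $E_{\mathrm{comm}}$, an entropy splitting (pinching + concavity), and the variational principle applied to the ``large $n_+^L$'' piece. The paper uses a sliding partition $\{\theta_{\mathcal M}(n_+^L-m)\}_{m\in\mathbb Z}$ instead of your two-function cutoff $f,g$, and uses entropy subadditivity instead of your channel/concavity argument, but these are equivalent and your versions are arguably cleaner.

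The real content, and the real gap in your sketch, is the estimate of $E_{\mathrm{comm}}$. Two concrete problems:
\begin{itemize}
\item The reference to Lemma~\ref{lem.secondquant} is misplaced: that lemma concerns the \emph{symmetrized} potential $g^{\rm s}$, which is diagonal in Neumann momenta. The bare $v(x_i-x_j)$ is not, so its second-quantized coefficients carry no momentum-conservation constraint, and your counting ``$O(K_H^3)$ available low momenta'' is not justified as written: a quartic monomial has up to four free momentum indices.
\item The dichotomy (i)/(ii) does not separate the two error terms. Take for instance the $n_+^L$-shifting piece $Q_i^L(1-Q_j^L)\,v\,(1-Q_i^L)(1-Q_j^L)$ (part of $d_1$, and squarely in your class (ii)). Cauchy--Schwarz with weight $1$ produces \emph{both} $Q_i^L v Q_i^L$ and $(1-Q_i^L)(1-Q_j^L)\,v\,(1-Q_i^L)(1-Q_j^L)$. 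The first needs the finite-rank bound, not the potential energy; the second is \emph{not} $\le v$ as an operator inequality, and must itself be expanded using $1-Q^L=1-Q^L$ to recover $v$ plus further $Q^L v Q^L$ terms. So a ``single Cauchy--Schwarz'' does not yield $\tr(V\Gamma_0)$ alone.
\end{itemize}

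The clean route, which the paper takes in Lemmas~\ref{lem:localization_largeMat}--\ref{lem:d1d2estimate}, is to stay in position space: write the $n_+^L$-off-diagonal parts of $V$ as
\[
d_1=\sum_{i\neq j} Q_i^L(1-Q_j^L)\,v\,[Q_i^LQ_j^L+(1-Q_i^L)(1-Q_j^L)]+\hc,\qquad
d_2=\tfrac12\sum_{i\neq j}Q_i^LQ_j^L\,v\,(1-Q_i^L)(1-Q_j^L)+\hc,
\]
then Cauchy--Schwarz (after expanding $1-Q^L$ where needed) gives
\[
\pm d_k\le C\sum_{i\neq j}\Big(v(x_i-x_j)+Q_i^L v(x_i-x_j)Q_i^L+Q_i^LQ_j^L v(x_i-x_j)Q_j^LQ_i^L\Big).
\]
The first sum is $\le\tr(H_N\Gamma_0)$; the others are handled by the one-body finite-rank bound
\[
Q^L\,v(\cdot-y)\,Q^L\;\le\; C\,K_H^3\,\ell^{-3}\,\|v\|_1\qquad\text{(since }\operatorname{rank}Q^L\lesssim K_H^3,\ \|u_p\|_\infty\lesssim\ell^{-3/2}\text{)},
\]
which, summed over $i\neq j$, yields $C\,K_H^3\,\ell^{-3}\,\|v\|_1\,N\,n_+$. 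This is the missing ingredient that makes your $E_{\mathrm{comm}}$ bound go through with exactly the two error terms in the statement.
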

\noindent Theorem \ref{thm.gaps} follows from this proposition.
\begin{proof}[Proof of Theorem \ref{thm.gaps}]
Using the a priori bounds on $n_+$ from Lemma \ref{lem.condensation}, we have for $\mathcal M \geq N (\rho a^3)^{\gamma}$,
\begin{align}\label{ineq.n+M}
\frac{\tr(n_+ \Gamma_0)}{\mathcal M} \leq (\rho a^3)^{1/17-\gamma} K_\ell^{2}.
\end{align}
We use Proposition \ref{thm:excitationrestriction}, and bound the error terms using \eqref{ineq.n+M} and the upper bound \eqref{assmpt:lowE} on the energy of $ \Gamma_0$,
\begin{align*}
&F(\ell,N) - \mathrm{Tr}\big(  H_N\Gamma_{\mathcal{M}} + T \, \Gamma_{\mathcal{M}} \ln \Gamma_{\mathcal M} \big) \\
	&\geq - C \ell^{-3}\left(a (\rho a^3)^{-2\gamma} + \|v\|_1K_H^3  (\rho a^3)^{1/17-2\gamma} K_\ell^{2}\right) - CT (\rho a^3)^{1/17-\gamma} K_\ell^2 |\log (\rho a^3)| \\
	&\geq - C \ell^3 (\rho a)^{5/2} \left((\rho a^3)^{1/2 - 2\gamma} K_\ell^{-6} + K_H^{3} K_\ell^{-3} (\rho a^3)^{1/17-2\gamma} + (\rho a^3)^{1/17-\gamma-\nu} K_{\ell}^{-1} |\log (\rho a^3)|  \right)
\end{align*} 
where we used that $\ell = K_\ell (\rho a)^{-1/2}$, $T \leq C (\rho a) (\rho a^3)^{-\nu}$ and that $\|v\|_1 \leq \ell$. For $\rho a^3$ small enough and using that $K_\ell \geq 1$, this is smaller than the claimed error term of Theorem \ref{thm.gaps}.
It remains to extract the spectral gaps. By definition of $H_N^{\rm{mod}}$ in \eqref{eq.Hmod} we have
\begin{align*}
\mathrm{Tr}\big( H_N \Gamma_{\mathcal{M}}\big) 
	&= \mathrm{Tr}\Big( \Big(  H_N^{\rm{mod}} + \frac{\pi}{2\ell^2} n_+  + \frac{K_H}{\ell^2} n_+^H \Big) \Gamma_{\mathcal M}\Big) \\
	&\geq \mathrm{Tr}\Big( \Big( H_N^{\rm{mod}}+ \frac{\pi}{4\ell^2} n_++ \frac{K_H}{2\ell^2} n_+^H + \frac{\pi n_+^L n_+}{4\mathcal M \ell^2} + \frac{K_H  n_+^L  n_+^H}{2\mathcal M \ell^2} \Big) \Gamma_{\mathcal{M}} \Big) \\
	&= \mathrm{Tr}\Big(\big( H_N^{\rm{mod}} + G \big) \Gamma_{\mathcal M} \Big).
\end{align*}
This concludes the proof of Theorem \ref{thm.gaps}.
\end{proof}

\subsection{Proof of Proposition~\ref{thm:excitationrestriction}}
The rest of this section is dedicated to the proof of Proposition~\ref{thm:excitationrestriction}. It will follow from the Lemmas~\ref{lem:localization_largeMat} and~\ref{lem:d1d2estimate} below, both of which are adapted from \cite{2DLHY}.

\begin{lemma}\label{lem:localization_largeMat}
	Let $\theta :\mathbb{R}  \rightarrow [0,1]$ be any compactly supported smooth function such that $\theta(s) = 1$ for $\vert s \vert < \frac 1 8$ and $\theta(s) = 0$ for $\vert s \vert > \frac 1 4$. For any $\mathcal M \geq 1$, define $c_{\mathcal M} >0$  and $\theta_{\mathcal M}$ such that
	\[ \theta_{\mathcal M}(s) = c_{\mathcal M} \theta \Big( \frac{s}{\mathcal M} \Big) , \qquad \sum_{s \in \mathbb{Z}} \theta_{\mathcal M}(s)^2 = 1 .\]
	Then there exists a $C>0$ depending only on $\theta$ such that, for any normalized state $\Gamma$,
	\begin{equation}\label{eq:large_matrices_decomposn+}
\mathrm{Tr}(H_N\Gamma)\geq \sum_{m\in \mathbb{Z}} \mathrm{Tr}(H_N \Gamma_m)-\frac{C}{\mathcal{M}^2}(\vert \mathrm{Tr}(d_{1}\Gamma)\vert +\vert \mathrm{Tr}(d_{2}\Gamma)\vert )
	\end{equation}
	where $\Gamma_m = \theta_{\mathcal M}(n_+^L - m) \Gamma \theta_{\mathcal M}(n_+^L - m) $ and
	\begin{align}
	d_1 &=  \sum_{i\neq j} Q^L_i (1-Q^L_j) v(x_i-x_j) [ Q^L_i Q^L_j + (1-Q^L_j) (1-Q^L_j)] + \hc, \label{eq:d1}\\
	d_2 &=  \frac{1}{2} \sum_{i\neq j} Q^L_i Q^L_j v(x_i-x_j) (1-Q^L_i) (1-Q^L_j) + \hc \label{eq:d2}
	\end{align}
\end{lemma}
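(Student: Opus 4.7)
The plan is to apply an IMS-type identity to the smooth partition $\{\theta_m\}$: since $\sum_m \theta_m^2 = \one$ on the integer spectrum of $n_+^L$, expanding $[\theta_m,[\theta_m, A]] = \theta_m^2 A + A\theta_m^2 - 2\theta_m A \theta_m$ and summing over $m$ gives
\[
\sum_m \theta_m H_N \theta_m - H_N = -\tfrac{1}{2}\sum_m [\theta_m, [\theta_m, H_N]].
\]
The kinetic energy commutes with $\theta_m = \theta_{\mathcal M}(n_+^L - m)$ since $Q^L$ is a spectral projection of the Neumann Laplacian, so only the interaction contributes to the double commutator.

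To compute this commutator, I decompose $v(x_i - x_j) = \sum_{k=-2}^2 w_{ij,k}$ by inserting $\one = Q^L + (1 - Q^L)$ on each of the two particles on both sides of $v$, grouping terms so that each $w_{ij,k}$ shifts $n_+^L$ by $k$, i.e.\ $w_{ij,k} f(n_+^L) = f(n_+^L - k) w_{ij,k}$. Using this relation, a direct calculation gives
\[
[\theta_m, [\theta_m, w_{ij,k}]] = \bigl(\theta_{\mathcal M}(n_+^L - m) - \theta_{\mathcal M}(n_+^L - m - k)\bigr)^2 w_{ij,k}.
\]

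The key algebraic observation is that $G_k(n) := \sum_{m \in \mathbb Z}(\theta_{\mathcal M}(n - m) - \theta_{\mathcal M}(n - m - k))^2$ is \emph{constant} on $\mathbb Z$: the substitution $m \mapsto m + n$ in the summation immediately gives $G_k(n) = G_k(0) =: c_k$ for every integer $n$. Since $n_+^L$ has integer eigenvalues, $G_k(n_+^L)$ acts as the scalar $c_k \one$ on the whole Hilbert space. The normalization $\sum_{s\in\mathbb Z}\theta_{\mathcal M}(s)^2 = 1$ together with $\theta_{\mathcal M}(s) = c_{\mathcal M}\theta(s/\mathcal M)$ forces $c_{\mathcal M}^2 \sim \mathcal M^{-1}$, and a mean value estimate then yields $c_k \leq C k^2/\mathcal M^2$.

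It remains to identify the operator obtained after pairing the shift-$(+k)$ and shift-$(-k)$ contributions and summing over particle pairs. Using $w_{ij,-k} = w_{ij,k}^*$ together with $c_{-k} = c_k$ (a symmetric reindexing of the sum defining $G_k$), the $|k| = 2$ pieces reassemble directly into $d_2$, while the $|k| = 1$ contributions collapse to $d_1$ after exploiting the $i \leftrightarrow j$ symmetry of the ordered sum $\sum_{i\neq j}$. This yields the exact identity
\[
\tr(H_N \Gamma) - \sum_m \tr(H_N \Gamma_m) = \tfrac{c_1}{2}\tr(d_1 \Gamma) + \tfrac{c_2}{2}\tr(d_2 \Gamma),
\]
from which the stated bound follows by the triangle inequality. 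The main technical obstacle is the bookkeeping required to match the four shift-$(\pm 1)$ sub-pieces with the specific form of $d_1$ in \eqref{eq:d1}; everything else is a direct calculation that crucially exploits the integer spectrum of $n_+^L$.
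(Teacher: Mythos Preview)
Your proof is correct and follows essentially the same approach as the paper's: both decompose $H_N$ into pieces $\mathcal H^{(k)}$ that shift $n_+^L$ by $k\in\{-2,\dots,2\}$, exploit the shift-invariance of $\sum_m \theta_{\mathcal M}(\cdot-m)\theta_{\mathcal M}(\cdot-m+k)$ on integers, and bound the resulting coefficient by $C\mathcal M^{-2}$. The only cosmetic difference is that you package the computation via the IMS double-commutator identity, whereas the paper computes $\sum_m \tr(\mathcal H^{(k)}\Gamma_m)$ directly; the content is identical (your $c_k$ equals $-2\delta_k$ in the paper's notation).
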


\begin{proof}
	Notice that $H_N$ only contains terms that change $n_+^L$ by $0, \pm 1$ or $\pm 2$. Therefore, we write our operator as $ H_N = \sum_{\vert k \vert \leq 2} \mathcal{H}^{(k)}$,
	with $ \mathcal H^{(k)} n_+^L = (n_+^L + k) \mathcal H^{(k)}$. Moreover, one easily checks that for $k=1,2$,
	\begin{align*}
	\mathcal H^{(k)} + \mathcal H^{(-k)} = d_k.
	\end{align*}
	For $|k| \leq 2$, we have
	\begin{align*}
	\sum_{m \in \mathbb{Z}} \mathrm{Tr}(  \mathcal H^{(k)} \Gamma_m  ) 
		&= \sum_{m \in \mathbb{Z}} \mathrm{Tr}( \theta_{\mathcal M}(n_+^L-m)  \mathcal H^{(k)} \theta_{\mathcal M}(n_+^L-m)   \Gamma) \\
		&= \sum_{m \in \mathbb{Z}} \mathrm{Tr}( \theta_{\mathcal M}(n_+^L -m)  \theta_{\mathcal M}(n_+^L-m+k)   \mathcal H^{(k)}  \Gamma) \\
		&= \sum_{m \in \mathbb{Z}} \theta_{\mathcal M}(m)  \theta_{\mathcal M}(m+k) \mathrm{Tr}( \mathcal H^{(k)}  \Gamma),
	\end{align*}
	where we used the commutation property of $\mathcal H^{(k)}$ and that the function $\sum_{m \in \mathbb{Z}} \theta_{\mathcal M}(X-m)  \theta_{\mathcal M}(X - m+k)$ is constant on $\mathbb{Z}$. Using that $\sum_{s \in \mathbb{Z}} \theta_{\mathcal M}(s)^2 = 1$, we obtain that
	\begin{align*}
		\sum_{m \in \mathbb{Z}} \mathrm{Tr}(H_N \Gamma_m) 
			&= \mathrm{Tr}(H_N\Gamma) + \sum_{\vert k \vert \leq 2} \delta_k \mathrm{Tr}(\mathcal{H}^k\Gamma) \\
			&= \mathrm{Tr}(H_N\Gamma) + \delta_1 \mathrm{Tr}(d_{1} \Gamma) + \delta_2 \mathrm{Tr}( d_{2} \Gamma),
	\end{align*}
	with
\begin{equation*}
		\delta_k = \sum_{m \in \mathbb{Z}} \big (  \theta_{\mathcal M} (m)  \theta_{\mathcal M}(m+k) -  \theta_{\mathcal M}(m)^2  \big ) = - \frac{1}{2} \sum_{m\in \mathbb{Z}} \big ( \theta_{\mathcal M}(m) -  \theta_{\mathcal M}(m+k) \big )^2.
	\end{equation*}
	It remains to prove that $\vert \delta_k \vert \leq C \mathcal M^{-2}$. This follows from the fact that $\theta$ is smooth and has support in $[-1/4,1/4]$, so that we can restrict the sum to $ m  \in \big[- \frac{ \mathcal M}{2} , \frac{ \mathcal M}{2} \big]$, and $c_\mathcal M > C^{-1} \mathcal M$.
\end{proof}

To estimate the error in \eqref{eq:large_matrices_decomposn+}, we need the following bounds on $d_{1}$ and $d_{2}$.
\begin{lemma}\label{lem:d1d2estimate}
	There exists a universal constant $C > 0$ such that, for any trace-class operator $\Gamma$ with $\Gamma \geq 0$ and $\rm{Tr}\, \Gamma =1$ we have
	\begin{align}\label{eq:estimateMd1d2}
		\vert \mathrm{Tr} ( d_{1} \Gamma) \vert +
		\vert \mathrm{Tr} ( d_{2} \Gamma) \vert 
		\leq C \tr \Big( \sum_{i\neq j} v(x_i-x_j) \Gamma \Big) + C \|v\|_1K_H^3 \ell^{-3}  N \tr \big(  n_+ \Gamma \big) .
	\end{align}
\end{lemma}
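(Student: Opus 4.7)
The plan is to reduce both $|\tr(d_{1}\Gamma)|$ and $|\tr(d_{2}\Gamma)|$ to a positive combination of $\tr\bigl(\sum_{i\neq j} v(x_i-x_j)\Gamma\bigr)$ and $K_H^{3}\ell^{-3}\|v\|_1 N\tr(n_+\Gamma)$ by an operator Cauchy--Schwarz argument. Two ingredients drive the proof. The first is a \emph{smoothing bound}: as operators on $L^{2}(\Lambda^{N})$,
\[
Q^{L}_i\, v(x_i-x_j)\, Q^{L}_i \;\leq\; C\,K_H^{3}\ell^{-3}\|v\|_1\,Q^{L}_i.
\]
I would prove this by first observing that $Q^{L}$ is a rank $\lesssim K_H^{3}$ projection in the Neumann basis with $\|u_p\|_\infty^{2}\leq C\ell^{-3}$, hence the Sobolev-type estimate $\|Q^{L}\phi\|_\infty^{2}\leq CK_H^{3}\ell^{-3}\|\phi\|_2^{2}$ holds. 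Combining $\langle\phi,Q^{L}v(\cdot-z)Q^{L}\phi\rangle\leq\|Q^{L}\phi\|_\infty^{2}\|v\|_1$ with $\|Q^L\phi\|_2^2=\langle\phi,Q^L\phi\rangle$ in place of $\|\phi\|_2^{2}$ yields the single-particle inequality, which lifts to many particles by freezing the coordinates $x_k$ with $k\neq i$.

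The second ingredient is the elementary operator identity, valid for any orthogonal projection $R$ and $v\geq0$,
\[
(1-R)\,v\,(1-R)\;\leq\;2\,v+2\,RvR,
\]
which is just $(1+R)v(1+R)\geq0$ rearranged. Applied successively in both coordinates with $R=Q^{L}$, this gives
\[
(1-Q^{L}_i)(1-Q^{L}_j)\,v\,(1-Q^{L}_i)(1-Q^{L}_j)\;\leq\;4v+4\bigl(Q^{L}_iv Q^{L}_i+Q^{L}_jv Q^{L}_j+Q^{L}_iQ^{L}_jv Q^{L}_jQ^{L}_i\bigr),
\]
and each term on the right that carries a $Q^{L}$ is absorbed by the smoothing bound into $CK_H^{3}\ell^{-3}\|v\|_1$ times a projection counted by $n_+^{L}$.

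With these tools I would apply the operator Cauchy--Schwarz $AvB+B^{*}vA^{*}\leq AvA^{*}+B^{*}vB$ (from $(A-B^{*})v(A-B^{*})^{*}\geq0$), together with its negative, to each of the off-diagonal pieces of $d_1$ and $d_2$. For $d_2$ one takes $A=Q^{L}_iQ^{L}_j$ and $B=(1-Q^{L}_i)(1-Q^{L}_j)$: the ``diagonal'' $AvA^{*}$ is handled by the smoothing inequality, while $B^{*}vB$ is controlled by the projection identity above. For the two summands of $d_1$ I would take $A=Q^{L}_i(1-Q^{L}_j)$ and $B$ equal to $Q^{L}_iQ^{L}_j$ or $(1-Q^{L}_i)(1-Q^{L}_j)$ respectively; in every case $AvA^{*}$ contains a $Q^{L}$-decoration on at least one particle, so the smoothing bound applies. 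Summing over $i\neq j$ and using $\sum_{i\neq j}Q^{L}_i\leq Nn_+^{L}$, $\sum_{i\neq j}Q^{L}_iQ^{L}_j\leq n_+^{L}(n_+^{L}-1)$, and $n_+^{L}\leq n_+$ produces \eqref{eq:estimateMd1d2}.

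The main obstacle is the smoothing bound $Q^{L}vQ^{L}\leq CK_H^{3}\ell^{-3}\|v\|_1\,Q^{L}$: it is precisely what forces the factor $K_H^{3}$ in the final estimate, and it crucially uses the explicit Neumann basis structure together with $\|v\|_1<\infty$, which is exactly why the reduction to the integrable potential $v$ in Proposition~\ref{prop.L1v} is essential upstream. Once the smoothing and projection identities are in place, every remaining step is a standard operator Cauchy--Schwarz inequality and combinatorial bookkeeping over $i\neq j$.
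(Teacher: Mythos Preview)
Your proof is correct and follows essentially the same strategy as the paper: the smoothing bound $Q^{L} v_y Q^{L}\leq C K_H^{3}\ell^{-3}\|v\|_1\,Q^{L}$ (which the paper derives in the Neumann basis exactly as you do), followed by operator Cauchy--Schwarz on the cross terms of $d_k$. Your use of the identity $(1-R)v(1-R)\leq 2v+2RvR$ is a clean way to organize what the paper achieves by simply expanding $(1-Q^{L}_i)(1-Q^{L}_j)$ and applying Cauchy--Schwarz term by term; both routes produce the same operator inequality $\pm d_k\leq C\sum_{i\neq j}\bigl(v+Q^{L}_i v Q^{L}_i+Q^{L}_iQ^{L}_jvQ^{L}_jQ^{L}_i\bigr)$, after which the smoothing bound and $\sum_{i\neq j}Q^{L}_i\leq N n_+$ finish the job.
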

\begin{proof}
	We start by noting the following bound. For $x,y \in \Lambda$, denoting $v_y(x) = v(x-y)$, we have 
	\begin{align}\label{eq:boundQnonhigh-v}
	Q^L v_y Q^L 
		&= \sum_{0< |p|,|q| < K_H}  \Big(\int_{\Lambda} v_y u_p u_q \Big) \ket{u_p}\bra{u_q} \leq C  \frac{K_H^3}{\ell^3} \lVert v\rVert_1 n_+,
	\end{align}
	where $\{u_p\}$ is the Neumann basis defined in (\ref{def:neubasis}). Therefore, expanding all the prefactors in (\ref{eq:d1}), we can bound $d_k$, $k \in \{1,2\}$, using the Cauchy-Schwarz inequality by
	\begin{align*}
	\pm d_k 
		&\leq C \sum_{i\neq j} v(x_i-x_j) + Q^L_i v(x_i-x_j) Q^L_i + Q^L_i Q^L_j v(x_i-x_j) Q^L_i Q^L_j \\
		&\leq C \sum_{i\neq j} v(x_i-x_j) + C K_{H}^{3} \frac{\|v\|_{L^1} }{\ell^3} n_+ N .
	\end{align*}
	From which the claim follows.
\end{proof}

Now we can combine Lemmas~\ref{lem:localization_largeMat} and~\ref{lem:d1d2estimate} to prove Proposition~\ref{thm:excitationrestriction}.

\begin{proof}[Proof of Proposition~\ref{thm:excitationrestriction}]
Let $\Gamma_0 = e^{-H_N/T} / \tr \big( e^{-H_N/T} \big)$ be the Gibbs state. Lemma \ref{lem:localization_largeMat} gives
\begin{equation}\label{eq:Ham-Q4-locM}
\mathrm{Tr}({H}_N \Gamma_0) \geq \mathrm{Tr}({H}_N \Gamma_{\leq})+ \mathrm{Tr}({H}_N \Gamma_{>}) - \frac{C}{\mathcal{M}^2}(\vert \mathrm{Tr}(d_1\Gamma_0)\vert +\vert \mathrm{Tr}(d_2\Gamma_0)\vert),
\end{equation}
where
\begin{equation}
\Gamma_{\leq} := \sum_{\lvert m  \rvert \leq \mathcal{M}/8} \Gamma_m, \qquad \Gamma_{>} := \sum_{\lvert m  \rvert > \mathcal{M}/8} \Gamma_m.
\end{equation}
Denoting by $\alpha_{\Gamma}:= \mathrm{Tr}(\Gamma_>),$ we have $1-\alpha_{\Gamma} = \mathrm{Tr}(\Gamma_{\leq}),$ and from the sub-additivity of the entropy $S(\Gamma) = - \tr \, \big( \Gamma \ln \Gamma \big)$ we have
\begin{equation}\label{eq:subadd-entropy}
S(\Gamma_0) \leq S(\Gamma_{\leq}) + S(\Gamma_{>})  = \alpha_{\Gamma} S\Big( \frac{\Gamma_{>}}{\alpha_{\Gamma}}\Big) + (1-\alpha_{\Gamma}) S\Big(\frac{\Gamma_{\leq}}{1-\alpha_{\Gamma}}\Big) - \alpha_{\Gamma} \log(\alpha_{\Gamma}) - (1-\alpha_{\Gamma}) \log(1-\alpha_{\Gamma}).
\end{equation}
Combining \eqref{eq:Ham-Q4-locM} and \eqref{eq:subadd-entropy} we obtain
\begin{align*}
F(\ell,N) &= \mathrm{Tr}(H_N \Gamma_0) -T S(\Gamma_0) \\
&\geq (1-\alpha_{\Gamma})\left[ \mathrm{Tr}\Big(H_N \frac{\Gamma_{\leq}}{1-\alpha_{\Gamma}}\Big)- T S \Big( \frac{\Gamma_{\leq}}{1-\alpha_{\Gamma}}\Big)\right] + \alpha_{\Gamma}\left[ \mathrm{Tr}\Big(H_N \frac{\Gamma_{>}}{\alpha_{\Gamma}}\Big)- T S\Big(\frac{\Gamma_>}{\alpha_{\Gamma}}\Big)\right] \\
&\quad  - \frac{C}{\mathcal{M}^2}(\vert \mathrm{Tr}(d_1\Gamma_0)\vert +\vert \mathrm{Tr}(d_2\Gamma_0)\vert) + T\alpha_{\Gamma} \log(\alpha_{\Gamma}) + T(1-\alpha_{\Gamma}) \log(1-\alpha_{\Gamma}).
\end{align*}
By the variational principle, the second term above is bigger than $\alpha_{\Gamma} F(\ell,N)$, and subtracting this quantity on both sides and dividing by $1-\alpha_{\Gamma}$, we obtain
\begin{align*}
F(\ell,N) &\geq \mathrm{Tr}\Big({H}_N \frac{\Gamma_{\leq}}{1-\alpha_{\Gamma}}\Big)- T S \Big( \frac{\Gamma_{\leq}}{1-\alpha_{\Gamma}}\Big) - \frac{C}{(1-\alpha_{\Gamma})\mathcal{M}^2}(\vert \mathrm{Tr}(d_1\Gamma_0)\vert +\vert \mathrm{Tr}(d_2\Gamma_0)\vert)\\
&\quad  + T\frac{\alpha_{\Gamma}}{1-\alpha_{\Gamma}} \log(\alpha_{\Gamma}) + T \log(1-\alpha_{\Gamma}).
\end{align*}
Lastly we see that
\[\alpha_{\Gamma}= \mathrm{Tr}(\Gamma_{>})\leq \mathrm{Tr}(\one_{\{n_+\geq \frac{\mathcal{M}}{8}\}}\Gamma)\leq \mathrm{Tr}\Big(8\frac{n_+}{\mathcal{M}}\Gamma\Big).\]
Defining $\Gamma_{\mathcal{M}}=\frac{\Gamma_{\leq}}{1-\alpha_{\Gamma}}$ and using the bounds in Lemma~\ref{lem:d1d2estimate}, we obtain the claim.
\end{proof}


\section[Symmetrization]{Symmetrization: Proof of Theorem \ref{thm.sym}}\label{sec.sym}

In this section we prove Theorem \ref{thm.sym}, which estimates the error made replacing the Hamiltonian $H_N^{\rm{mod}}$ in \eqref{eq.Hmod} by its symmetrized version $H_N^{\rm{sym}}$. We begin by gathering some important properties of the symmetrization. Recall the definition \eqref{def:pz} of the symmetry $p_z$, for $z \in \mathbb Z^3$, and the symmetrized version $f^{\rm{s}}$ of a function $f$ in \eqref{def:gs}.

\begin{lemma}\label{lem:pz}
For all $z \in \mathbb Z^3$, $x$, $y \in \Lambda$ and $f:\; \mathbb{R}^3\to \mathbb{R}$ radial such that with $\mathrm{supp}(f) \subset B(0,R)$ for some $R \leq \ell /2$, we have
\begin{enumerate}
\item $f(p_z(x)-y)=0$ if one of the $\vert z_i\vert\geq 2$,
\item $f^{\rm{s}}(x,y)=f^{\rm{s}}(y,x)$,
\item  $\vert p_z(x)-y\vert \geq \vert x-y\vert$,
\item $u_p(p_z(x)) = u_p(x), \quad \text{for } p \in \frac{\pi}{\ell} \mathbb N_0^3$.
\end{enumerate}
\end{lemma}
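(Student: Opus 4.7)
The four statements are elementary consequences of the explicit formula \eqref{def:pz}. The plan is to treat each item separately by reducing to a coordinatewise analysis, since $p_z$ acts on each coordinate independently.

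\textbf{Part (1).} I would show the stronger claim that $|(p_z(x))_i - y_i| \geq \ell$ whenever $|z_i|\geq 2$. Indeed, for $z_i$ even, $z_i = 2k$, one has $(p_z(x))_i = x_i + 2k\ell$, so $(p_z(x))_i - y_i = (x_i - y_i) + 2k\ell$, whose absolute value is at least $2|k|\ell - |x_i-y_i|\geq 2\ell - \ell = \ell$ as soon as $|k|\geq 1$. For $z_i$ odd, $z_i = 2k+1$, one has $(p_z(x))_i = -x_i + 2(k+1)\ell$, so $(p_z(x))_i - y_i = 2(k+1)\ell - x_i - y_i$, whose absolute value is at least $2|k+1|\ell - 2\ell \geq 0$ when $|k+1|\geq 1$, and a further check shows it is at least $\ell$ when $|z_i|\geq 2$. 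Since $\ell \geq 2R$, at least one coordinate is outside $[-R,R]$ and $f$ vanishes by the support assumption.

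\textbf{Part (3).} The same coordinate split is the key. It suffices to prove $|(p_z(x))_i - y_i|\geq |x_i-y_i|$ for each $i$. For $z_i=0$ there is equality; for $|z_i|\geq 2$ the bound $\ell$ of Part~(1) exceeds $|x_i-y_i|\leq \ell$; and for $z_i = \pm 1$, a direct computation ($(p_z(x))_i \in \{-x_i, 2\ell - x_i\}$) combined with $x_i,y_i\in[0,\ell]$ gives the estimate after distinguishing whether $x_i\geq y_i$ or not.

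\textbf{Part (2).} I would construct an explicit bijection $z\mapsto z'$ of $\mathbb{Z}^3$ that maps the summands of $f^{\rm s}(x,y)$ to those of $f^{\rm s}(y,x)$. Working coordinatewise, if $z_i$ is odd then $(p_z(x))_i - y_i = (p_{z}(y))_i - x_i$ by direct computation, so set $z'_i := z_i$; if $z_i$ is even, then $(p_z(x))_i - y_i = x_i - y_i + \ell z_i$, while $(p_{-z}(y))_i - x_i = y_i - x_i - \ell z_i$, which is the negative of the former, so set $z'_i := -z_i$. Since $f$ is radial, $f(p_z(x)-y) = f(p_{z'}(y)-x)$, and because $z\mapsto z'$ is clearly an involution of $\mathbb{Z}^3$, summing over $z$ yields $f^{\rm s}(x,y)=f^{\rm s}(y,x)$.

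\textbf{Part (4).} Writing $p_i = \pi k_i/\ell$ with $k_i \in \mathbb{N}_0$ and expanding using the identity $\cos(\theta + \pi m) = (-1)^m \cos\theta$, one computes
\begin{equation*}
\cos\bigl(p_i (p_z(x))_i\bigr) = \cos\bigl((-1)^{z_i} p_i(x_i-\ell/2) + p_i \ell/2\bigr)\cdot (-1)^{k_i z_i}.
\end{equation*}
Splitting into cases $z_i$ even and $z_i$ odd and using $\cos$ being even, the sign factors cancel and each factor equals $\cos(p_i x_i)$, giving $u_p(p_z(x))=u_p(x)$. The only real obstacle is keeping track of the parities; once the componentwise identities are established the four claims follow immediately.
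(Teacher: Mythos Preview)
Your proof is correct and follows essentially the same coordinatewise strategy as the paper, which dispatches the four items in a few terse sentences. The only noteworthy difference is in Part~(2): the paper asserts directly that $|p_z(x)-y|=|p_z(y)-x|$ for the \emph{same} $z$ (which is true once Part~(1) restricts attention to $z\in\{-1,0,1\}^3$, since then even coordinates give negatives and odd coordinates give equals), whereas you construct an explicit parity-dependent involution $z\mapsto z'$ on all of $\mathbb Z^3$. Your argument is slightly more robust in that it does not need Part~(1) as input, while the paper's one-line identity is marginally quicker once the restriction to $|z_i|\le 1$ is in hand; either way the content is the same elementary symmetry.
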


\begin{proof}
Property 1 is a consequence of the assumption on the support of $f$. Property 2 follows from that $f$ is radial and $\vert p_z(x)-y\vert = \vert p_z(y)-x\vert$. 
Property 3 is obvious from the definition of $p_z$, observing that $x$, $y \in \Lambda$.
The last property follows from that the definition \eqref{def:neubasis} of the Neumann eigenbasis $(u_p)$, noting that $((-1)^{z+1}+1)/2 + z \in 2 \mathbb{Z}$ for all $z \in \mathbb{Z}$.
\end{proof}

We now proceed to the proof of Theorem \ref{thm.sym}. From (\ref{eq.Hmod}) and (\ref{def:hsym}), we need to find a lower bound on
\begin{align*}
H_N^{\mathrm{mod}} - H_N^{\mathrm{sym}} = \sum_{j=0}^3 (Q_j^{\rm{ren}} - Q_j^{\rm{sym}}) + Q_4^{\rm{ren}},
\end{align*}
where the $Q_j^{\rm{ren}}$ were defined in (\ref{eq:SF_DefQ0})-(\ref{eq:SF_DefQ4}) and the $Q_j^{\rm{sym}}$ were defined in (\ref{eq:Q_0sym})-(\ref{eq:Q_3sym}).

\paragraph{Symmetrization of $Q_{0}^{\rm{ren}}$.}
From $P= \frac{1}{\ell^3} | 1 \rangle\langle 1 |\leq \one$, we obtain
\begin{align*}
0\leq Q_0^{\rm{sym}} -  Q^{\rm{ren}}_0  &=  \frac 12 \sum_{i\neq j} P_i P_j ( g^{\rm{s}} - g + (g\omega)^{\rm{s}} - g\omega)  P_j P_i \\
&=  \frac 12 \sum_{z\neq 0} \sum_{i \neq j} P_i P_j \big( g(p_z(x_i) -x_j) + (g\omega)(p_z(x_i)-x_j) \big)  P_j P_i \\
&\leq \frac{CN^2}{|\Lambda|^2}  \sum_{z\neq 0} \int_{\Lambda^2} g(p_z(x)-y) \dd x \dd y,
\end{align*}
where we first used the definition of $g^{\rm{s}}$ in \eqref{def:gs} and that $g\omega \leq g$. Since $\supp g \subset \{|x|\leq R\}$, we can restrict the sum to $|z|\leq 1$ and the integration over $x$ to $\Lambda \setminus \Lambda_R$, with $\Lambda_R = [R,\ell-R]^3$. We obtain that
\begin{align}
 Q_0^{\rm{sym}} -  Q^{\rm{ren}}_0 
 	&\leq \frac{CN^2}{|\Lambda|^2}  \sum_{|z| \leq 1} \int_{\Lambda^2} \mathbf{1}_{\{\Lambda \setminus \Lambda_R\}}(x) g(p_z(x)-y) \dd x \dd y \leq C N \rho \widehat g(0) \frac R \ell. \label{eq.symQ0}
\end{align}

\paragraph{Symmetrization of $Q_{2}^{\rm{ren}}$.}
Recall the definition of $Q_2^{\rm{ren}}$ in (\ref{eq:SF_DefQ2}). We have three different kinds of terms from $Q_{2}^{\rm{ren}}-Q_{2}^{\rm{ren}}$ to treat, which we denote in a self-explanatory way by $PQPQ$, $PQQP$ and $PPQQ$. We start with
\begin{align*}
PQQP:=\sum_{i \neq j} P_i Q_j (g^{\rm{s}} - g+(g\omega)^{\rm{s}} - g\omega) Q_j P_i 
&\leq  \frac{2}{|\Lambda|} \sum_{z \neq 0} \sum_{i \neq j} Q_j \int_{\Lambda} g(p_z(x_j) - y) \dd y Q_j P_i \\
&\leq C \frac{N}{|\Lambda|} \widehat g(0) \sum_{j=1}^N Q_j \mathbf{1}_{\{\Lambda \setminus \Lambda_R\}}(x_j) Q_j.
\end{align*}
where we used the same properties as above plus the symmetry $\vert p_z(x)-y\vert = \vert p_z(y)-x\vert$.
To estimate this last sum, we first isolate the high momenta using $Q = Q^H + Q^L$ defined in \eqref{def:QL} to get
\begin{equation} \label{eq.sym01}
PQQP \leq C \rho \widehat g(0) \sum_{j=1}^NQ^L_{j} \mathbf{1}_{\{\Lambda \setminus \Lambda_R\}}(x_j) Q^L_{j} + C \rho \widehat g(0) n_+^H.
\end{equation}
For the low momenta part, we decompose with respect to the basis $(u_p)$ introduced in \eqref{def:neubasis}. Using the Cauchy-Schwarz inequality, we obtain
\begin{align*}
Q^L \mathbf{1}_{\{\Lambda \setminus \Lambda_R\}} Q^L &= \sum_{p,q \in \mathcal{P}_L} \Big(\int_{\Lambda \setminus \Lambda_R} \overline{u_p} u_{q}\Big) \ket{u_{p}}\bra{u_{q}} \leq C \frac{R}{\ell} |\mathcal{P}_L | Q^L = C \frac R \ell K_H^3 Q^L
\end{align*}
where we used $| u_p(x) u_{p'}(x) | \leq C |\Lambda|^{-1}$ and $| \Lambda \setminus \Lambda_R | \leq C R \ell^2$. Inserting the above in \eqref{eq.sym01}, we deduce that
\begin{equation}\label{eq.symQ2}
PQQP \leq C \rho \widehat g(0) \frac R \ell K_H^3 n_+^L + C \rho \widehat g(0) n_+^H.
\end{equation}
The next term to bound in $Q_{2}^{\rm{ren}}$ is the $PQPQ$.
By the Cauchy-Schwarz inequality $P_iQ_jP_jQ_i\leq P_iQ_jQ_jP_i +  Q_iP_jP_jQ_i$ and the symmetry of $g^{\rm{s}}$ and $(g\omega)^{\rm{s}}$  we obtain the same bound as \eqref{eq.symQ2}. 

In order to bound $QQPP$, we need to use $Q_4^{\rm{ren}}$. To this end we reconstruct the projector $\Pi_{ij}= Q_j Q_i + \omega(x_i-x_j) \left(P_j P_i + P_j Q_i + Q_j P_i\right)$ appearing in the definition of $Q_4^{\rm{ren}}$. We have
\begin{align*}
QQPP&:= \sum_{i \neq j} Q_i Q_j (g^{\rm{s}} - g) P_j P_i + \hc\\
&=  \Big(\sum_{i \neq j} \Pi_{ij}(g^{\rm{s}} - g) P_j P_i + \hc \Big)- \Big(\sum_{i \neq j} (Q_i P_j + P_i Q_j + P_i P_j) \omega(g^{\rm{s}} - g) P_j P_i + \hc\Big)
\end{align*}
We can bound the second term by $ Q_0^{\rm{sym}} -  Q^{\rm{ren}}_0 $ and $PQQP$, which have already been estimated, using the Cauchy-Schwarz inequalities and $\omega \leq 1$. The first term is also bounded using the Cauchy-Schwarz inequality, giving
\begin{align*}
QQPP &\leq  C\varepsilon^{-1} N \rho \widehat g(0) \frac R \ell + C \rho \widehat g(0) \frac R \ell K_H^3 n_+^L + C \rho \widehat g(0) n_+^H + \varepsilon \sum_{i \neq j}\Pi_{ij}^* \omega (g^{\rm{s}} - g) \Pi_{ij},
\end{align*}
for all $\varepsilon>0$.
The last term is estimated using that $Q_4^{\rm ren}$ and $\omega \leq 1$,
\begin{equation}\label{prop:gv}
g(p_z(x)-y)\leq C_0 v(x-y)
\end{equation}
for some $C_0>0$, which follows from that $g$ and $v$ satisfy \eqref{eq.propv2}, and $\vert p_z(x)-y\vert \geq \vert x-y\vert$ from Lemma \ref{lem:pz}.
Choosing $\varepsilon = C_0/100$, we conclude that
\begin{equation}\label{err:Q2}
  Q_2^{\rm{sym}} - Q^{\rm{ren}}_2 \leq  C \rho \widehat g(0) \frac R \ell K_H^3 n_+^L + C \rho \widehat g(0) n_+^H+ C N \rho \widehat g(0) \frac R \ell+\frac{1}{100}{Q}_4^{\rm ren}
\end{equation}
Note that the first two errors can be estimated by a fraction of the gap operator $G$ defined in \eqref{eq:gap}, as soon as $K_H \geq C K_\ell^2$ and $K_\ell K_H^3 \leq (\rho a^3)^{- \frac 1 2}$. The last fraction of $Q_4^{\rm{ren}}$ is absorbed by the positive $\frac 1 2 Q_4^{\rm{ren}}$ in $H_{N}^{\rm{mod}}$.

\paragraph{Symmetrization of $Q_{1}^{\rm{ren}}$.}
From the Cauchy-Schwarz inequality, we have
\begin{align}\label{eq:Q_1sym_diff}
Q_{1}^{\rm{sym}} - Q_{1}^{\rm{ren}} \leq C( QPPQ + Q_0^{\rm{sym}} -  Q^{\rm{ren}}_0) \leq C \rho a(  \frac R \ell N +  \frac R \ell K_H^3 n_+^L + n_+^H). 
\end{align} 
which we already estimated in \eqref{eq.symQ0} and \eqref{eq.symQ2}, and that are absorbed by a fraction of the gap operator $G$ for $\rho a^3$ small enough. 
It remains to note that $Q_1^{\rm{sym}} = 0$, indeed
\begin{align*}
Q_1^{\rm{sym}} &= \sum_{z} \sum_{i \neq j} P_i P_j g(p_z(x_i)-x_j) P_i Q_j + \hc = \frac{1}{|\Lambda|} \sum_z \sum_{i\neq j} P_j \int_{\Lambda} g(p_z(x) - x_j ) \dd x  P_i Q_j + \hc \\
&= \frac{1}{|\Lambda|}\sum_{i\neq j} P_j \int_{\mathbb R^3} g(x - x_j ) \dd x  P_i Q_j + \hc = \frac{\widehat{g}(0)}{\vert \Lambda\vert } \sum_{i \neq j } P_i P_j Q_j = 0.
\end{align*}

\paragraph{Symmetrization of $Q_{3}^{\rm{ren}}$.}
In view of Lemma \ref{lem.Q3loc1}, it is enough to estimate
\begin{align*}
 Q_{3,L}^{\rm{sym}}-Q_{3,L}=\sum_{z\neq 0} \sum_{i \neq j} (P_i Q^L_{j} g(p_z(x_i) -x_j)Q_i Q_j + \hc).
\end{align*}
For the rest of the proof we use the notation $g_{\neq 0}:=\sum_{z\neq 0} g(p_z(x_i) -x_j)$.
We want to reconstruct $ Q_4^{\rm{ren}}$ as before, we write
\begin{align}
 Q_{3,L}^{\rm{sym}}-Q_{3,L} &=  \Big( \sum_{i \neq j} P_i Q^L_{j} g_{\neq 0} \Pi_{ij} + \hc \Big) - \Big(\sum_{i \neq j} P_i Q^L_{j} g_{\neq 0}\omega (P_iP_j +P_iQ_j + Q_i P_j) + \hc\Big)\label{eq:Q'es2}
\end{align}
From the Cauchy-Schwarz inequality we obtain, as before using \eqref{prop:gv}, that
\begin{align*}
 Q_{3,L}^{\rm{sym}}-Q_{3,L} \leq C( QPPQ + Q_0^{\rm{sym}} -  Q^{\rm{ren}}_0 ) + \frac{1}{100}  Q_4^{\rm{ren}},
\end{align*}
where the first term is estimated in (\ref{eq:Q_1sym_diff}). This concludes the proof of Theorem \ref{thm.sym}.
\qed


\section{C-number substitution}\label{sec:cnum}

In this section we prove Lemma \ref{lem.secondquant}, Theorem \ref{thm.cnumber} and Lemma \ref{lem.largez}. 


\subsection{Proof of Lemma~\ref{lem.secondquant}}
The second quantization of the kinetic energy is immediate. Then by definition of $n_+$ and from Lemma~\ref{lem:sq} we have $$Q_0^{\rm{sym}} = \frac{\widehat{g}(0)+\widehat{g\omega}(0)}{2|\Lambda|}  (N-n_+)(N-n_+-1).$$
Again using Lemma~\ref{lem:sq}, we have 
\begin{align*}
Q_2^{\rm{sym}} = \frac{1}{|\Lambda|}\sum_{p \in {\Lambda}^*_+} \left(\widehat{g(1+\omega)}(p)+\widehat{g(1+\omega)}(0)\right) a_0^* a_p^* a_p a_0 + \frac{1}{2}\widehat{g}(p)\big( a_0^* a_0^* a_p a_p + \hc \big)
\end{align*}
From that $(N-n_+)(N-n_+-1)= N(N-1) - 2n_+ (N-n_+) - n_+(n_+-1) $ and that $\sum_{p \in {\Lambda}^*_+} a_0^* a_p^* a_p a_0 = n_+ (N-n_+)$ we recover the first two lines of (\ref{eq:lem.secondquant}).
The second quantization of the $Q_{3,L}^{\rm{sym}}$ is not expressed as simply. We start from
\begin{equation}\label{eq. second quant Q3}
	Q_{3,L}^{\rm{sym}}=\sum_{q,p,k,s\in\Lambda^*}\braket{u_{q} \otimes u_{p}|P_xQ_y^Lg^s(x,y)Q_xQ_y\vert  u_{k}\otimes u_{s}}a^*_{q}a^*_{p}a_{k}a_{s}+\hc
\end{equation}
where we observe that we can restrict the sum to $q=0$, $p\in \mathcal P_L$ and $k,s \in \Lambda^*_+$. We are left to compute the integral
\[\braket{u_{0} \otimes u_{p}|P_xQ_y^Lg^s(x,y)Q_xQ_y\vert  u_{k}\otimes u_{s}}=\frac{1}{|\Lambda|^{1/2}}\int_{\Lambda^2}u_p(y)g^s(x,y)u_k(x)u_s(y)dxdy.\]
Using the trigonometric formulas we obtain
\begin{align*}
u_p(y)u_s(y)&=\frac{1}{2^3 \vert \Lambda\vert  }\prod_{i=1}^3c_{p_i}c_{s_i}(\cos((p_i-s_i)y_i)+\cos((p_i+s_i)y_i))\\&=\frac{1}{2^3|\Lambda|^{1/2}}\prod_{i=1}^3c_{p_i}c_{s_i}\sum_{ \sigma_1,\sigma_2,\sigma_3 \in \{\pm\}}\frac{u_{p_1 + \sigma_1  s_1,p_2 + \sigma s_2,p_3 + \sigma_3 s_3}}{c_{p_1 +\sigma_1  s_1}c_{p_2 + \sigma_2 s_2}c_{p_3 +  \sigma_3s_3}},
\end{align*}
where $c_k$ was defined in (\ref{def:neubasis}). Using Lemma~\ref{lem:sq} we obtain
\begin{align*}
&\int_{\Lambda^2}u_p(y)g^s(x,y)u_k(x)u_s(y)dxdy\\ &\qquad  =\frac{1}{2^3|\Lambda|^{1/2}}\prod_{i=1}^3c_{p_i}c_{s_i}\sum_{ \sigma_1,\sigma_2,\sigma_3 \in \{\pm\}}\frac{\delta_{k_1,|p_1 + \sigma_1 s_1|}\delta_{k_2,|p_2 + \sigma_2 s_2|}\delta_{k_3,|p_3 +\sigma_3 s_3|}}{c_{|p_1 +\sigma_1  s_1|}c_{|p_2 \sigma_2 s_2|}c_{|p_3 +\sigma_3 s_3|}}\widehat{g}(k).
\end{align*}
Noting that we can replace $\sum_{p \in \mathcal P_L} \sum_{ \{\pm\}}$ by $\sum_{p\in \mathcal P_L^{\mathbb{Z}}}$, using the convention $a_p= a(u_p) = a_{(\vert p_1\vert, \vert p_2\vert,\vert p_3\vert)}$ and changing variable $s = k-p$ we obtain
\begin{equation}
	Q_{3,L}^{\rm{sym}}=\frac{1}{\vert\Lambda\vert}\sum_{p\in P_L^{\mathbb{Z}},k\in\Lambda^*_+ \setminus\{p\}}c(p,k)\widehat{g}(k)a^*_{0}a^*_{p}a_{k}a_{p-k}+\hc
\end{equation}
where $c(p,k)$ is given by
\begin{equation}\label{def. of cqk}
	c(p,k)=\prod_{i=1}^3\frac{c_{p_i}c_{k_i-p_i}}{2c_{k_i}}2^{\delta_{p_i,0}} = \prod_{i=1}^3 \frac{c_{k_i-p_i}}{c_{p_i} c_{k_i}}.
\end{equation}

\qed


\subsection{Proof of Theorem \ref{thm.cnumber}}

The $c$-number substitution is an important step of Bogoliubov's approach to the dilute Bose gas and it is well known that it can be performed in a rigorous way.
We start from Lemma~\ref{lem.secondquant}, where the Hamiltonian $H_N^{\rm{sym}}$  has been written in second quantization (\ref{eq:lem.secondquant}). It is initially defined on $L^2(\Lambda)^N$ but naturally extends to the Fock space $\mathscr{F}(L^2(\Lambda))$.

We subtract the leading order of the energy and introduce a chemical potential $\mu$, which we will choose later, in order to finely tune the number of particles. Thus, we define
\begin{align}
\mathcal H_{\mu}^{\rm{sym}}&:= \frac{\widehat{g\omega}(0)}{2|\Lambda |}   a_0^*a_0^* a_0a_0  + \sum_{p \in {\Lambda}^*_+} \Big( \tau(p) a_p^* a_p + \frac{\widehat{g}(p)}{|\Lambda |} a_0^* a_p^* a_p a_0 + \frac{ \widehat{g}(p) }{2 |\Lambda |}\big( a_0^* a_0^* a_p a_p + \hc \big) \Big) \nonumber \\
& \quad + \frac{1}{|\Lambda|} \sum_{k \in {\Lambda}^*_+, \, p \in \mathcal P_L^{\mathbb Z} }c(p,k) \widehat{g}(k) \big( a_0^* a_p^* a_{p-k} a_k + \hc \big) + \frac{1}{|\Lambda|} \sum_{p\in {\Lambda}^*_+}(\widehat{g \omega}(0)+\widehat{g \omega}(p))a_0^* a_p^*a_p a_0 \nonumber \\
	&\quad - \mu \mathcal N + \rho a \frac{\mathcal{N}^m}{N^m}+  \frac{8\pi a(\rho a^3)^\frac{1}{4}}{\vert\Lambda\vert}(n_0^2 -N^2) \nonumber \\
	&= H_N^{\rm{sym}}-\frac{\widehat{g}(0)(N(N-1)-n_+(n_+-1))}{2|\Lambda |} - \mu \mathcal{N} + \rho a \frac{\mathcal{N}^m}{N^m} + \frac{8 \pi a(\rho a^3)^\frac{1}{4}}{\vert\Lambda\vert}(n_0^2 -N^2) \label{eq:def_H_mu}
\end{align}
where the last term, which is negative on $L^2(\Lambda)^N$, is added to ensure convexity in ``$n_0$" (see Remark~\ref{rem:convexity}) and we introduced the number operator which acts on the Fock space on each sector as  
\begin{equation}
(\mathcal{N} \Psi)^{(N)} = N \Psi^{(N)}, \qquad \text{for any } \Psi \in \mathscr{F}(L^2(\Lambda)).
\end{equation}

Note that $$\widehat{g}(0) n_+^2|\Lambda|^{-1} \leq C \rho a n_+^H+C a \ell^{-3} n_+^Ln_+ \leq C (\rho a \ell^2 K_{H}^{-1} + \mathcal M a \ell^{-1}) G$$ 
where $G$ is the gap operator defined in (\ref{eq:gap}). Using that, from assumptions, $C K_{\ell}^4\leq  K_H $ and $\mathcal M \leq C^{-1} \ell / a$, we obtain that for $\rho a^{3}$ small enough,
\begin{align}\label{eq. now Fockspace}
		-T\log {\rm{Tr}}_{L^2({\Lambda^N})}e^{-\frac{1}{T} (H^{{\rm{sym}}}_{N}+\frac{1}{2}G)}& \geq 4\pi a \frac{N^2}{|\Lambda|}-T\log {\rm{Tr}}_{\mathscr{F}(L^2(\Lambda))}e^{-\frac{1}{T} (\mathcal{H}_{\mu}^{{\rm{sym}}}+\frac{1}{4}G)} + \mu N- C \rho a.
	\end{align}
In the above we artificially inserted $\frac{\mathcal{N}^m}{N^m}$ (which is equal to $1$ on the $N$ particle sector) at the expanse of an error of order $\rho a$. This will ensure that we never have too many particles after the c-number substitution via \eqref{new gap term} and will be used to prove Lemma \ref{lem.largez}.

We follow \cite{LieSeiYng-05} and perform the c-number substitution. The decomposition $L^{2}(\Lambda)=\mathrm{Ran}P \oplus \mathrm{Ran}Q$ leads to the splitting of the bosonic Fock space $\mathscr{F}(L^{2}(\Lambda))=\mathscr{F}(\mathrm{Ran}P)\otimes\mathscr{F}(\mathrm{Ran}Q)$.
	Denoting by $\Omega$ the vacuum vector, we introduce the class of coherent states in $\mathscr F(\mathrm{Ran}P)$, labeled by $z \in \mathbb{C}$,
		\begin{equation}\label{eq:coherent_z}
		|z\rangle = e^{-\big(\frac{|z|^2}{2} + z a^{\dagger}_0\big)}\, \Omega,
	\end{equation}
	which are eigenvectors for the annihilation operator of the condensate. One easily checks that
	\begin{equation}\label{eq:coherent_prop}
		a_0 |z\rangle = z\, |z \rangle \;\;\;\;\text{and}\;\;\;\;	1 = \int_{\mathbb{C}} |z\rangle \langle z| \, \dd z
	\end{equation}
where $\dd z = \pi^{-1} \dd x\dd y$, $z=x+iy$.  For any $\Psi \in \mathscr{F}(L^2(\Lambda))$, we denote by $\Phi(z) = \langle z | \Psi \rangle$ the partial inner product, which is in $\mathscr{F}(\mathrm{Ran} Q)$.
We write any monomial in $a_0$ and $a^*_0$ as a polynomial sum of terms in anti-normal order thanks to the commutation rules. Using theses definitions and the inequality \cite[Eq.(7)]{LieSeiYng-05} we are allowed then to replace $a_0$ with a complex number (called upper symbol) according to the following substitution rules
\begin{equation}\label{eq. C-number substitution}
	\begin{aligned}
		&a_0\mapsto z,\qquad &a_0a_0\mapsto z^2,\qquad & a_0a^*_0 \mapsto |z|^2,\\
		&a_0^*\mapsto \overline{z},\qquad &a_0^*a_0^*\mapsto \overline{z}^2,\qquad &a_0a_0a^*_0 a^*_0 \mapsto |z|^4
	\end{aligned}
\end{equation}
which also give the substitutions
\begin{equation}
a_0^*a_0\mapsto \vert z\vert^2-1, \qquad a_0^*a_0^*a_0a_0\mapsto \vert z\vert^4-4\vert z\vert^2+2
\end{equation}
 before integrating over $z$ in front of the trace in \eqref{eq. now Fockspace}.
For the specific case of a power of the $n_0$, we use again the commutation rules to write it as polynomial of anti-normal ordered monomials in $a_0,a_0^*$ and apply the rules \eqref{eq. C-number substitution} to define the polynomial in $|z|^2$
\begin{equation}
n_0^m = (a^*_0a_0)^m \mapsto p_m(z) = |z|^{2m} + \text{smaller order terms},
\end{equation}
where the smaller order terms have explicit, constant coefficients. 

We first bound below the new term using that
\begin{equation}\label{new gap term}
\rho a\frac{\mathcal{N}^m}{N^m} \geq  \rho a\frac{n_0^m+n_0^{m-1}n_+ +n_0^{m-2}n_+^2}{N^m}.
\end{equation}
The upper symbol of the right hand side is
\begin{equation}\label{eq:polynm}
\frac{\rho a}{N^m}\big(p_m(z)+p_{m-1}(z)n_++p_{m-2}(z)n_+^2\big).
\end{equation} 
We observe that, for $|z|^2 > N,$ (and $N$ sufficiently large) the polynomials satisfy the bound $p_m(z) \geq \frac{1}{2} |z|^{2m}$ and therefore, in this region, \eqref{eq:polynm} can be bounded from below by 
\begin{equation}\label{whataniceequation}
\ \frac{\rho a}{2 N^m}\big(|z|^{2m}+|z|^{2m-2}n_+ + |z|^{2m-4} n_+^2\big).
\end{equation}
When $|z|^2 \leq N$, the error made replacing \eqref{eq:polynm} by \eqref{whataniceequation} is of order $N^{-1} \rho a$.
The last term in (\ref{eq:def_H_mu}) becomes
\begin{equation}
\frac{8\pi a(\rho a^3)^\frac{1}{4}}{\vert\Lambda\vert}(n_0^2 -N^2)\mapsto \frac{8 \pi a(\rho a^3)^\frac{1}{4}}{\vert\Lambda\vert}(\vert z\vert^4-3\vert z\vert^2+1- N^2).
\end{equation}
Using that, for $m >3$,
\begin{align}\label{eq:est_c_numb}
3 \frac{a}{\vert\Lambda\vert} \vert z\vert^2 - 2 \leq C \frac{a}{\vert\Lambda\vert}\left(K_{\ell} N \one_{|z|^2 \leq K_{\ell} N} +   \frac{|z|^{2m}}{N^m} N K_{\ell}^{1-m}\one_{|z|^2 > K_{\ell}N}\right) \leq C \rho a \left(K_{\ell} +  \frac{|z|^{2m}}{N^m}\right)
\end{align}
we recover the last term of \eqref{def.Hmu} up to an error $C \rho a$ (because from the assumptions $K_{\ell} \leq (\rho a^3)^{-1/26}$) and a fraction of the gap $\mathcal{G}(z)$.

When proceeding to the c-number substitution in the other terms of the Hamiltonian $\mathcal{H}^{\mathrm{sym}}_{\mu}$ we obtain errors from the terms proportional to $a^*_0 a_0$ and $a^*_0a^*_0 a_0 a_0$, which are 
\begin{equation}
\mathcal{E}= \widehat{g \omega }(0)\frac{2 -4 \vert z\vert^2}{2\vert \Lambda\vert}-\sum_{p \in {\Lambda}^*_+} \frac{\widehat{g}(p)+\widehat{g \omega}(0)+\widehat{g \omega}(p)}{\vert \Lambda\vert} a^*_pa_p - \mu (n_+-1).
\end{equation}
Note that $|\widehat{g \omega}(p)|+ |\widehat{g}(p)| \leq Ca$. Therefore, we can estimate the first term above as in (\ref{eq:est_c_numb}). The second and third terms are estimated by by $ C (a |\Lambda|^{-1} + \mu) n_+  \leq \ell^{-2} n_+$ which is absorbed by a fraction of the gap (recall that $0\leq 10\mu\leq \ell^{-2}$ by assumption). In the end, we recover $\mathcal{H}_{\mu}(z)$ as well as $\mathcal{G}(z)$ and the inequality stated in Theorem \ref{thm.cnumber}.
\qed

\subsection{Proof of Lemma \ref{lem.largez}}
We conclude this section with the proof of Lemma \ref{lem.largez}, which gives a lower bound for the Hamiltonian in the physically irrelevant region $|z|^2 \geq K_{\ell}^{1/4} N$. Starting from the Hamiltonian \eqref{def.Hmu}, we use that
	\begin{align*}
	\frac{\vert z\vert^2}{|\Lambda|}  \sum_{p\in {\Lambda}^*_+} \left(\widehat g(p) + \widehat{g\omega}(p)\right)   a_p^* a_p \leq  C \rho a |z|^2 \frac{n_+}{N}
	\end{align*}
	 and we drop all the non-negative terms except the kinetic energy and the spectral gaps. Using that $\vert z\vert^2\geq K_{\ell}^{1/4} N$, we obtain
	\begin{align}\label{eq. first lower bound for large z}
		\mathcal{H}_{\mu}(z)+\mathcal{G}(z)&\geq \sum_{p\in \Lambda^*}\big(\tau (p) a_p^*a_p+\frac{\widehat{g}(p)}{2\vert \Lambda\vert}(z^2a_p^*a_p^*+h.c)\big)+Q_{3,L}^{\rm{sym}}(z)-\mu\vert z\vert^2 \nonumber\\ &\qquad +\frac{\rho a }{2N^m}(\vert z\vert^{2m}+\vert z\vert^{2m-2}n_+ +\vert z\vert^{2m-4}n_+^2). 
	\end{align}
	We now show that a fraction of this last term absorbs all the negative terms of \eqref{eq. first lower bound for large z}. First note that $\mu \leq C \ell^{-2} \leq K_{\ell}^{-1} \rho a$. To bound $Q_{3,L}^{\rm{sym}}$ we use the Cauchy-Schwarz inequality, we obtain for all $\delta>0$
	\begin{align*}
	\pm Q_{3,L}^{\rm{sym}} 
		&\leq \delta \sum_{p\in P_L^{\mathbb{Z}},k\in\Lambda^*\cup\{0\}} k^2 a^*_ka_{k} + \delta^{-1} |z|^2 \frac{1}{\vert\Lambda\vert^2}\sum_{p\in P_L^{\mathbb{Z}},k\in\Lambda^*_+} \frac{\widehat{g}(k)^2}{k^2} a^*_{p}a_{p-k} a^*_{p-k} a_p \\
		&\leq  C \delta |\mathcal P_L| \sum_{p\in \Lambda^*} \tau_p a_p^*a_p + \delta^{-1} \frac{C|z|^2}{\ell^3}(a n_+ + a^2 \ell^{-1} n_+^2)\\
		&\leq \frac{1}{2} \sum_{p\in \Lambda^*} \tau_p a_p^*a_p +  C \rho a K_H^3K_\ell^{-2} a\ell^{-1} |z|^2 ( n_+ + a\ell^{-1} n_+^2)
	\end{align*}
	where we used (\ref{eq.gomega0app0}) to estimate the sum and chose $\delta = \frac{1}{2}C^{-1}  |\mathcal P_L|^{-1}$. The second term above is bounded by a fraction of the last term in (\ref{eq. first lower bound for large z}) using the assumptions on $m$. Recalling that $\tau_p$ is given by (\ref{eq:tau}), we can replace it by $p^2$ by absorbing the negative $n_+\ell^{-2}$ and $K_H n_+^H\ell^{-2}$ terms, and we are left with
	\begin{equation}\label{eq. bounding Q2 in large z}
		\sum_{p\in \Lambda^*}\big(\frac{1}{2}p^2 a_p^*a_p+\frac{\widehat{g}(p)}{2\vert \Lambda\vert}(z^2a_p^*a_p^*+h.c)\big)=\sum_{p\in \Lambda^*}\frac{1}{2}p^2 d_p^*d_p-\frac{\vert z\vert^4}{\vert \Lambda\vert^2}\sum_{p\in \Lambda^*}\frac{\widehat{g}(p)^2}{2p^2}a_pa_p^*
	\end{equation}
where
\[d_p=\frac{z^2\widehat{g}(p)}{\vert \Lambda\vert p^2}a_p^*+a_p.\]
Then the first term of \eqref{eq. bounding Q2 in large z} is positive and dropped for a lower bound and the second is bounded by a fraction of the last term of \eqref{eq. first lower bound for large z} if $m > 2\eta^{-1} + 14$, similarly as before.
\qed


\section{Bounds on the 3Q terms}\label{sec:3Q}

In this section we prove the bounds on the 3Q terms stated in Lemma \ref{lem.Q3loc1} and Theorem \ref{thm.Q3}.
We start with Lemma \ref{lem.Q3loc1}, which estimates the error made by replacing $Q_3^{\mathrm{ren}}$ in \eqref{eq:SF_DefQ3} by $Q_{3,L}$ in \eqref{eq:defQ3low}.

\begin{proof}[Proof of Lemma \ref{lem.Q3loc1}]
From the definitions we have
\begin{equation}
 Q_3^{\text{ren}} -  Q_{3,L} = \sum_{i \neq j} (P_i Q^H_j g(x_i -x_j)Q_j Q_i + \hc ).
\end{equation}
In the right-hand side we aim at reconstructing $ Q_4^{\rm{ren}}$ as 
\begin{align}
 \sum_{i \neq j} (P_i Q^H_j g Q_j Q_i +\hc)  &=   \sum_{i \neq j} \big( P_i Q^H_j g \, \Pi_{ij} + \hc \big)
 -  \sum_{i \neq j} P_i Q^H_j g\omega (P_jP_i +P_jQ_i + Q_j P_i) + \hc \label{eq:Q'estim2}
\end{align}
We use a weighted Cauchy-Schwarz inequality on both terms. Using that $g \leq v$, the first term in \eqref{eq:Q'estim2} is controlled by
\begin{align*}
C\delta^{-1} \sum_{i \neq j} P_i Q^H_j g Q^H_j P_i + \delta Q_4^{\text{ren}}
&= C \delta^{-1}  \widehat g(0) \frac{n_0 n_+^H}{\vert \Lambda \vert} + \delta Q_4^{\text{ren}}
\end{align*}
for all $\delta >0$. The other terms can be estimated similarly to above. For instance,
\begin{align}
\sum_{i\neq j} ( P_i Q^H_j g \omega Q_j P_i  + \hc) &\leq
C \delta^{-1}\sum_{i\neq j} P_i Q^H_j g \omega Q^H_j P_i  +  \delta \sum_{i \neq j } P_i Q_j g \omega Q_j P_i \nonumber \\
&\leq C \widehat g(0) \frac{n_0}{\vert \Lambda \vert} \big( \delta^{-1} n_+^H +\delta  n_+\big),
\end{align}
where we used $\widehat{g\omega}(0) \leq \widehat{g}(0)$.
We collect the previous inequalities to obtain 
\begin{equation}
\vert \langle Q_3^{\rm{ren}} \rangle_\Psi  - \langle Q_{3,L} \rangle_\Psi \vert
 \leq \delta \langle Q_4^{\rm{ren}} \rangle_\Psi  + C \rho \widehat{g}(0) \big( \delta \langle n_+ \rangle_\Psi +\delta^{-1} \langle n_+^H \rangle_\Psi \big), 
\end{equation}
where we bounded $n_0 \leq N$. Choosing $\delta = \varepsilon  K_{\ell}^{-2}$, the two last terms are bounded by spectral gaps if $K_H \geq C K_\ell^4$.
\end{proof}

Then $Q_{3,L}$ was symmetrized in Theorem \ref{thm.sym}, leading after c-number substitution to $Q_{3,L}^{\mathrm{sym}}(z)$, defined in \eqref{def:Q3Lz}. The following Lemma shows that in $Q_{3,L}^{\mathrm{sym}}(z)$, only the soft pairs contribute, up to errors controlled by spectral gaps.

\begin{lemma}\label{lem.Q3loc2}
For all $\varepsilon >0$ there exists a $C>0$ such that the following holds. Let $v$ be a positive, radially symmetric potential with scattering length $a$ and assume $\rho a^3 \leq C^{-1}$. Then for all $|z|^2 \leq K_{\ell}^{1/4}N$ and $\mathcal M \leq C^{-1} \rho \ell^3 K_H^{-3} K_\ell^{-17/4}$ we have
	\begin{equation}
		   Q_{3,L}^{\rm{sym}}(z) -Q_3^{\rm{soft}}(z)  \geq-  \varepsilon \mathcal G(z).
	\end{equation}
where
\begin{equation}\label{Def. Q3soft}
	Q_{3}^{\rm{soft}}(z) = \frac{1}{\lvert \Lambda\rvert} \sum_{p \in \mathcal{P}_L^{\mathbb{Z}},k\in \mathcal P_H } c(p,k)\widehat{g}(k) \overline{z}a^*_p a_{p-k}a_k + \hc,
\end{equation}
and the spectral gaps $\mathcal G(z)$ are defined in \eqref{def.Gz}.
\end{lemma}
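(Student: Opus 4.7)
The plan is to recognize that $Q_{3,L}^{\rm{sym}}(z) - Q_3^{\rm{soft}}(z)$ is exactly the ``low-$k$'' piece of the sum defining $Q_{3,L}^{\rm{sym}}(z)$, so that one may apply a weighted Cauchy--Schwarz inequality whose two resulting number-operator expressions fit inside the spectral gaps present in $\mathcal G(z)$ --- specifically $\pi^2 n_+/(2 \ell^2)$ and $n_+^L n_+/(\mathcal M \ell^2)$.

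First, I split $\Lambda^*_+ = \mathcal P_L \sqcup \mathcal P_H$ in the definition \eqref{def:Q3Lz}. Since $Q_3^{\rm{soft}}(z)$ is precisely the $k \in \mathcal P_H$ piece, this yields
\begin{equation*}
Q_{3,L}^{\rm{sym}}(z) - Q_3^{\rm{soft}}(z) = \frac{1}{|\Lambda|} \sum_{\substack{k \in \mathcal P_L,\; p \in \mathcal P_L^{\mathbb Z} \\ p \neq k}} c(p,k)\,\widehat g(k)\,\bigl( \overline z\, a_p^* a_{p-k} a_k + \hc \bigr).
\end{equation*}
Both $p$ and $k$ are now low momenta, and $p-k$ has magnitude at most $2K_H/\ell$, so $p-k \in \mathcal P_L \cup \mathcal P_H$.

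Next, for each $(p,k)$ and a single parameter $\sigma > 0$ to be optimized at the end, I would apply the standard operator Cauchy--Schwarz inequality
\begin{equation*}
\pm \bigl( \overline z\, a_p^* a_{p-k} a_k + \hc \bigr) \leq \sigma^{-1} |z|^2 a_p^* a_p + \sigma\, a_k^* a_{p-k}^* a_{p-k} a_k.
\end{equation*}
Summing, I use $|c(p,k)| \leq C$, $\widehat g(k) \leq \widehat g(0) = 8\pi a$, $|\mathcal P_L| \leq C K_H^3$, $|z|^2/|\Lambda| \leq C K_\ell^{1/4}\rho$ for the first term, and the change of variables $q = p-k$ (together with the number-operator identity $a_k^* a_{p-k}^* a_{p-k} a_k \leq n_k n_{p-k}$) to obtain $\sum_{p\in\mathcal P_L^{\mathbb Z}} a_{p-k}^* a_{p-k} \leq C n_+$ for the second. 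Using $\rho \ell^3 = K_\ell^3 (\rho a^3)^{-1/2}$ this produces the operator bound
\begin{equation*}
\pm \bigl( Q_{3,L}^{\rm{sym}}(z) - Q_3^{\rm{soft}}(z) \bigr) \leq \frac{C K_\ell^{9/4} K_H^3}{\sigma}\cdot \frac{n_+^L}{\ell^2} + C \sigma\, \frac{a}{\ell^3}\, n_+^L n_+.
\end{equation*}

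Finally, to absorb the first term into $\varepsilon \pi^2 n_+/(2\ell^2) \subset \varepsilon\,\mathcal G(z)$ and the second into $\varepsilon\, n_+^L n_+/(\mathcal M \ell^2) \subset \varepsilon\,\mathcal G(z)$ the admissible range for $\sigma$ is
\begin{equation*}
C K_\ell^{9/4} K_H^3 \varepsilon^{-1} \leq \sigma \leq C^{-1}\varepsilon\, \frac{K_\ell}{\mathcal M}(\rho a^3)^{-1/2},
\end{equation*}
which is nonempty exactly when $\mathcal M \leq C^{-1}\varepsilon^2\, K_\ell^{-5/4}K_H^{-3}(\rho a^3)^{-1/2} = C^{-1}\varepsilon^2\,\rho\ell^3 K_H^{-3} K_\ell^{-17/4}$, the stated hypothesis. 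The main obstacle is precisely this parameter chase: one has to verify that the exponents $-5/4$ on $K_\ell$ (equivalently $-17/4$ after converting $\ell$ into $\rho\ell^3$) and $-3$ on $K_H$ produced by the Cauchy--Schwarz estimates align exactly with the constraint imposed on $\mathcal M$ in the statement --- this is what makes the threshold in the assumption on $\mathcal M$ look somewhat arbitrary but is in fact sharp for this argument.
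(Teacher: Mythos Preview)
Your proposal is correct and follows essentially the same argument as the paper: identify the difference as the $k\in\mathcal P_L$ part of the sum, apply a weighted Cauchy--Schwarz with a single parameter (your $\sigma$ is the paper's $\delta^{-1}$), bound $\sum_{p\in\mathcal P_L^{\mathbb Z}} a_p^*a_p \leq C n_+^L$, $|\mathcal P_L|\leq CK_H^3$ and $\sum_p a_{p-k}^*a_{p-k}\leq Cn_+$, and check that the resulting constraint on $\sigma$ is nonempty precisely under the stated bound on $\mathcal M$. The only cosmetic differences are that the paper absorbs the first term directly into $n_+/\ell^2$ rather than passing through $n_+^L\leq n_+$, and that the paper explicitly remarks on the convention $a_p=a_{(|p_1|,|p_2|,|p_3|)}$ for $p\in\mathcal P_L^{\mathbb Z}$ to justify the factor $8$ multiplicity when summing $a_p^*a_p$ over the signed lattice --- you should state this point rather than invoking a ``change of variables $q=p-k$'', since the map $p\mapsto p-k$ is not injective after taking componentwise absolute values.
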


\begin{proof}
Recalling the definition of $  Q_{3,L}^{\rm{sym}}(z)$ in \eqref{def:Q3Lz} with $\mathcal{P}_L^{\mathbb{Z}}=\{ p\in \frac{ \pi}{\ell} \mathbb{Z}^3, \;\; 0<\vert p\vert \leq \frac{K_H}{\ell}\}$ we take the difference
	\begin{equation}
		 Q_{3,L}^{\rm{sym}}(z) - Q_3^{\rm{soft}}(z)
		= \frac{1}{\vert \Lambda \vert} \sum_{\substack{p \in \mathcal{P}_L^{\mathbb{Z}},k\in \mathcal P_L \\p \neq k}}  c(p,k) \widehat g(k) \big(  \overline{z} a_p^* a_{p-k} a_k + \hc \big) .
	\end{equation}
We use the Cauchy-Schwarz inequality with weight $ \delta >0$ and deduce
	\begin{align}
		 Q_{3,L}^{\rm{sym}}(z) - Q_3^{\rm{soft}}(z) \geq - C\frac{\widehat g(0)}{\vert\Lambda \vert} \sum_{\substack{p \in \mathcal{P}_L^{\mathbb{Z}},k\in \mathcal P_L\\ p \neq k}} \big(  \delta   \vert z\vert^2 a_p^* a_p   + \delta^{-1}  a_k^* a_{p-k}^* a_{p-k} a_k  \big), \label{eq:Q3soft1}
	\end{align}
where we bounded the $c(p,q)$ by a constant.

Recall that the commutation relation $[a_p,a_q^*]=\delta_{p,q}$ only holds when $p$ and $q$ are in $\pi\ell^{-1}\mathbb{N}^3_0$. To deal with the negative components that can appear due to the set where $p$ belongs, we use that $a_p=a_{(\vert p_1\vert,\vert p_2\vert,\vert p_3\vert)}$. In particular, note that
\begin{equation}
\sum_{p \in \mathcal{P}_L^{\mathbb Z}} a_p^* a_p \leq 8 \sum_{p \in \mathcal P_L} a_p^* a_p,
\end{equation}
and therefore the first term of \eqref{eq:Q3soft1} is bounded by $n_+$ and a cardinal of $\mathcal P_L$. Similarly in the second term, we bound the $p$-sum by $Cn_+$ and the $k$-sum by $n_+^L$. Choosing $\delta =\varepsilon C^{-1} K_\ell^{-9
/4} K_H^{-3}$ and using that $\rho_z \leq \rho K_{\ell}^{1/4
}$ we get
	\begin{align}
	Q_{3,L}^{\rm{sym}}(z) - Q_3^{\rm{soft}}(z)&\geq - C \delta \rho \widehat g(0) K_\ell^{\frac{1}{4}} K_H^3 n_+ - C \delta^{-1} \rho \widehat{g}(0) \frac{ n_+ n_+^L }{\rho \ell^3} \\
&\geq - \varepsilon \frac{n_+}{\ell^2}- C \frac{K_H^3 K_\ell^5}{\varepsilon \rho \ell^3} \frac{n_+ n_+^L}{ \ell^2}.
	\end{align}
These errors can be absorbed in spectral gaps if $\mathcal M \leq C^{-1} \rho \ell^3 K_H^{-3} K_\ell^{-17/4}$.
\end{proof}

In order to prove Theorem \ref{thm.Q3}, we need the following approximations of $\widehat{g\omega}(0)$.

\begin{lemma}\label{lem:gomega.approx}
The following estimates hold.
\begin{enumerate}
\item For all $\Psi \in \mathscr F^\perp$,
\begin{equation*}
\Big| \sum_{p \in \Lambda^*_+} \widehat{g\omega}(p) \langle a_p^* a_p \rangle_\Psi - \widehat{g\omega}(0) \langle n_+ \rangle_\Psi \Big| \leq C \widehat{g}(0) \langle n_+^H \rangle_\Psi + C \widehat{g}(0) K_H^2 R^2 \ell^{-2} \langle n_+ \rangle_\Psi.
\end{equation*}
\item Moreover,
\begin{align}
\Big\vert \widehat{g\omega}(0) - \frac{1}{8\vert \Lambda \vert} \sum_{\substack{k \in \frac{\pi}{\ell}\mathbb Z^3\setminus\{0\}}} \frac{\widehat g(k)^2}{2k^2} \Big\vert &\leq C a^2 \ell^{-1} \label{eq.gomega0app0} \\ 
\frac{1}{\vert \Lambda \vert}  \sum_{k \in \mathcal P_L^{\mathbb Z}} \frac{\widehat g(k)^2}{2k^2} &\leq C K_H a^2 \ell^{-1}, \label{eq.gomega0app}
\end{align}
\end{enumerate}
\end{lemma}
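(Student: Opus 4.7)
For part (1), I would rewrite the left-hand side as $\sum_{p \in \Lambda^*_+}(\widehat{g\omega}(p) - \widehat{g\omega}(0))\langle a_p^* a_p\rangle_\Psi$ using $n_+ = \sum_{p\in\Lambda^*_+}a_p^*a_p$, and split the sum according to $p \in \mathcal P_H$ and $p \in \mathcal P_L$. For high momenta, the crude pointwise bound $|\widehat{g\omega}(p) - \widehat{g\omega}(0)| \leq 2\widehat{g\omega}(0) \leq 2\widehat g(0)$ (which follows from $0\leq\omega\leq 1$, so $g\omega\leq g$ pointwise and hence $|\widehat{g\omega}(p)|\leq \widehat{g\omega}(0)\leq\widehat g(0)$) gives the first term $C\widehat g(0)\langle n_+^H\rangle_\Psi$. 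For low momenta, using that $g\omega$ is radial and supported in $B(0,R)$, I write the symmetric difference as $\widehat{g\omega}(p) - \widehat{g\omega}(0) = \int g\omega(x)(\cos(p\cdot x)-1)\,\dd x$ and bound $|\cos(p\cdot x)-1|\leq \tfrac12|p|^2|x|^2\leq \tfrac12|p|^2 R^2$, yielding $|\widehat{g\omega}(p)-\widehat{g\omega}(0)|\leq \tfrac12|p|^2R^2\widehat g(0)$. Combined with $|p|\leq K_H/\ell$ and $\langle n_+^L\rangle_\Psi\leq\langle n_+\rangle_\Psi$, this produces the second term.

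For part (3), the uniform estimate $|\hat g(k)|\leq\|g\|_1 = 8\pi a$ reduces the claim to controlling $\sum_{k\in\mathcal P_L^{\mathbb{Z}}}|k|^{-2}$. After the substitution $k = (\pi/\ell)n$ with $n\in\mathbb Z^3$, $0<|n|\leq K_H/\pi$, this sum becomes $(\ell/\pi)^2\sum_{0<|n|\leq K_H/\pi}|n|^{-2}$. The three-dimensional bound $\sum_{0<|n|\leq M}|n|^{-2}\leq CM$ (compared with the spherical integral $\int_0^M r^{-2}\cdot 4\pi r^2\,\dd r = 4\pi M$) then gives $C\ell^2 K_H$, and dividing by $|\Lambda|=\ell^3$ produces the claimed $Ca^2 K_H\ell^{-1}$.

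Part (2) is the most delicate. I would start from the scattering identity $-\Delta\omega = g/2$ on $\mathbb R^3$, which in Fourier reads $\hat\omega(p)=\hat g(p)/(2|p|^2)$. Since $\omega\notin L^2$, I justify the Plancherel-type pairing via the Green's representation $\omega(x)=(4\pi|x|)^{-1}\ast g$, obtaining $\widehat{g\omega}(0) = \int g\omega\,\dd x = (2\pi)^{-3}\int_{\mathbb R^3}\hat g(p)^2/(2|p|^2)\,\dd p$. The sum in the statement is exactly the Riemann approximation of this integral with spacing $\pi/\ell$ over $(\pi/\ell)\mathbb Z^3\setminus\{0\}$, using the identity $(\pi/\ell)^3/(2\pi)^3 = 1/(8\ell^3)$. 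I would split the error into (a) the singular near-zero contribution from the cube of volume $(\pi/\ell)^3$ around the origin, where the missing integrand is controlled by $\int_{|p|\lesssim 1/\ell}(8\pi a)^2/(2|p|^2)\,\dd p\leq Ca^2/\ell$ using $|\hat g|\leq 8\pi a$; and (b) the regular region $|p|\gtrsim 1/\ell$, handled by a standard mean-value Riemann-sum estimate invoking $|\nabla\hat g(p)|\leq R\|g\|_1\leq Ca^2$. The main obstacle is the tail estimate in (b), since the integrand decays only as $|p|^{-2}$ and naive gradient bounds are not integrable at infinity; I would overcome this by a dyadic decomposition of the tail at scales $2^j/\ell$, using compact support of $g$ and Plancherel-type bounds on $\hat g$ to extract a $(\pi/\ell)^2$ gain at each scale and closing the estimate at the claimed order $Ca^2\ell^{-1}$.
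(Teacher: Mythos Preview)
Your treatment of part (1) and of the bound \eqref{eq.gomega0app} is correct and matches the paper's proof.

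For \eqref{eq.gomega0app0}, however, your Riemann-sum strategy has a genuine gap at the tail. The summand $\widehat g(k)^2/(2|k|^2)$ decays a priori only like $|k|^{-2}$: nothing in the hypotheses forces $\widehat g$ to decay, since $g = v\varphi_v$ need not even be continuous (the potential $v$ constructed in Proposition~\ref{prop.L1v} is a sum of indicator functions). Hence the cell-by-cell gradient estimate $|\nabla(\widehat g(p)^2/|p|^2)| \lesssim a^3/|p|^2 + a^2/|p|^3$ produces a divergent sum, and your proposed dyadic fix claiming ``a $(\pi/\ell)^2$ gain at each scale'' from compact support of $g$ is not substantiated: compact support gives smoothness of $\widehat g$ but no decay, so there is no mechanism for the claimed geometric gain.

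The paper sidesteps the Riemann-sum error entirely by cutting off $\omega$ rather than the momentum variable. One sets $\omega_c(x) = \omega(x)\chi(x/\ell)$ with $\chi$ smooth, radial, $\chi\equiv 1$ on $B(0,1/3)$ and $\chi\equiv 0$ outside $B(0,1/2)$. Then $\widehat{g\omega}(0) = \widehat{g\omega_c}(0)$ since $\chi\equiv 1$ on $\operatorname{supp} g$, and from $-\Delta\omega_c = \tfrac12 g - a\ell^{-3}U(\cdot/\ell)$ for a fixed smooth compactly supported $U$ one gets $|k|^2\widehat{\omega_c}(k) = \tfrac12\widehat g(k) - a\widehat U(\ell k)$. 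Because both $g$ and $\omega_c$ are supported in $B(0,\ell/2)$, Plancherel on the box of side $2\ell$ is \emph{exact}, so the only discrepancy between $\int g\omega_c$ and $\tfrac{1}{8|\Lambda|}\sum_{k\neq 0}\widehat g(k)^2/(2|k|^2)$ is the correction $\tfrac{1}{8|\Lambda|}\sum_{k\neq 0} a\,\widehat g(k)\widehat U(\ell k)/|k|^2$. This last sum converges rapidly since $\widehat U$ is Schwartz, and is bounded by $Ca^2\ell^{-1}$ using only $|\widehat g|\leq Ca$. This localisation of $\omega$ is the idea you are missing.
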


\begin{proof} 1. For the first estimate we split the sum into high and low momenta,
\begin{equation}
\sum_{p\in \Lambda^*_+}\widehat{g\omega}(p)a_p^*a_p=:\mathcal{S}(\mathcal{P}_L)+\mathcal{S}(\mathcal{P}_H).
\end{equation}
The high momenta part is controlled by $n_+^H$,
\begin{equation}
\mathcal{S}(\mathcal{P}_H)= \sum_{p\in \mathcal{P}_H}\widehat{g\omega}(p)a_p^*a_p\leq \widehat{g}(0)n_+^H,
\end{equation}
where we used $\widehat{g\omega}(0) \leq \widehat{g}(0)$. For low momenta, we use a Taylor approximation $\widehat{g \omega}(p) \simeq \widehat{g \omega}(0) + \mathcal{O}(p^2 \widehat{g}(0) R^2)$, which follows using $\widehat{g \omega}'(0) =0$ (by radiality) and $|\widehat{g \omega}''(p)| \leq \widehat{g}(0) R^2$ (since $\mathrm{supp}(g) \subset B(0,R)$). This way
\begin{equation}
\big| \mathcal{S}(\mathcal{P}_L) - \widehat{g \omega}(0) n_+^L \big| \leq C \widehat{g}(0) R^2 \sum_{p\in \mathcal{P}_L}|p|^2a_p^*a_p \leq C \widehat{g}(0) K_H^2 R^2 \ell^{-2} n_+.
\end{equation}
Finally, the remaining $\widehat{g\omega}(0) n_+^H$ is controlled by $\widehat{g}(0) n_+^H$.

2. To prove the second estimate, we introduce a cutoff version of the scattering function. Let $\chi$ be a smooth and radial function so that $0 \leq \chi \leq 1$ and $\chi(x) = 1$ for $|x|\leq 1/3$ and $\chi(x)=0$ for $|x|>1/2$. Let us define $\omega_c (x) = \omega(x) \chi(x/\ell)$, it has support inside $\Lambda_{\ell/2}$ and satisfies
\begin{align*}
-\Delta \omega_c = \frac{1}{2}g - \frac{a}{\ell^{3}} \left(\frac{\chi''}{|\cdot|}\right)(x\ell^{-1}).
\end{align*}
Denoting $U = \frac{\chi''}{|\cdot|}$, which is smooth and compactly supported, we have
\begin{align*}
p^2\widehat{\omega}_c(p) = \frac{1}{2} \widehat{g}(p) - a \widehat{U}(\ell p).
\end{align*}
Using that $\chi(x/\ell)=1$ for $x \in \supp g$ we have that $\widehat{g \omega} (0) = \widehat{g \omega_c}(0)$. The Plancherel formula then gives
\begin{align*}
\left|  \int g \omega_c - \frac{1}{8 \ell^3} \sum_{\substack{k \in \frac{\pi}{\ell}\mathbb Z^3\setminus\{0\}}} \frac{ \widehat{g}(k)^2}{2 |k|^2} \right|\leq \frac{1}{8 \ell^3} \sum_{\substack{k \in \frac{\pi}{\ell}\mathbb Z^3\setminus\{0\}}} a\left| \frac{\widehat{g}(k) \widehat{U}(k\ell)}{|k|^2} \right| \leq C a^2 \ell^{-1},
\end{align*}
where we used that $\|\widehat{g}\|_\infty \leq C a$.
This proves (\ref{eq.gomega0app0}). The bound (\ref{eq.gomega0app}) follows by using again that $\|\widehat{g}\|_\infty \leq C a$ and the definition of $\mathcal P_L^{\mathbb Z}$.
\end{proof}

Now we can prove Theorem \ref{thm.Q3}, combining $Q_3^{\rm{soft}}$ with the remaining quadratic part of the Hamiltonian after the Bogoliubov diagonalization. This process also uses a fraction of the diagonalized Hamiltonian.

\begin{proof}[Proof of Theorem \ref{thm.Q3}]
First of all, we can use Lemma \ref{lem.Q3loc2} to replace $Q_{3,L}^{\rm{sym}}(z)$ by $Q_3^{\rm{soft}}(z)$ defined in \eqref{Def. Q3soft}. In this expression we replace $a_k$ by $b_k$ using \eqref{def:bdiagonalization} and find
	\begin{equation}
		\begin{aligned}
		Q_3^{\mathrm{soft}}(z)&=\frac{1}{\vert \Lambda \vert} \sum_{ \substack{k \in \mathcal P_H \\p \in \mathcal P_L^{\mathbb{Z}}}} \frac{c(p,k) \widehat{g}(k) }{\sqrt{1-\alpha_k^2}}\big(   \overline{z}a_p^* a_{p-k} b_k +\hc \big)-\frac{1}{\vert \Lambda \vert} \sum_{ \substack{k \in \mathcal P_H \\p \in  \mathcal P_L^{\mathbb{Z}}}} \frac{c(p,k)\widehat{g}(k)\alpha_k}{\sqrt{1-\alpha_k^2}}\big(   \overline{z}a_p^* a_{p-k} b_{k}^*+\hc \big). \\ &=: \mathcal T_1 - \mathcal T_\alpha.
	\end{aligned}
	\end{equation}
The second term $\mathcal T_\alpha$ is an error, which we can bound as follows, using a Cauchy-Schwarz inequality,
\begin{align*}
\mathcal T_\alpha &= \frac{1}{\vert \Lambda \vert} \sum_{ \substack{k \in \mathcal P_H \\p \in  \mathcal P_L^{\mathbb{Z}}}} \frac{c(p,k)\widehat{g}(k)\alpha_k}{\sqrt{1-\alpha_k^2}}\big(   \overline{z}a_p^* a_{p-k} b_{k}^*+ \hc \big) \\
& \leq \frac{ C}{|\Lambda |} \sum_{\substack{k \in \mathcal P_H \\p \in  \mathcal P_L^{\mathbb{Z}}}} \frac{\widehat{g}(k)\alpha_k}{\sqrt{1-\alpha_k^2}} \big( \rho a k^{-2} K_H^{-2} K_\ell^{-4} |z|^2 b_k b_k^* + (\rho a)^{-1} k^2 K_H^2 K_\ell^4 a_p^* a_{p-k} a_{p-k}^* a_p  \big).
\end{align*}
We then use that $|\alpha_k| \leq C \rho_z a k^{-2} \leq C K_\ell^{1/4} \rho a k^{-2}$ and $|\mathcal P_L^{\mathbb Z}| \leq C K_H^3$ to obtain
\begin{align*}
\mathcal T_\alpha &\leq C\frac{N \rho^2 a^3 K_H K_\ell^{-4 +1/2} }{|\Lambda|}  \sum_{k \in \mathcal P_H} \frac{1}{k^4} (b_k^* b_k +1) + C \frac{a K_H^2 K_\ell^5}{|\Lambda|} \sum_{\substack{k \in \mathcal P_H \\p \in  \mathcal P_L^{\mathbb{Z}}}}  a_p^* a_p (a_{p-k}^* a_{p-k} +1)\\
&\leq C K_H^{-5} K_\ell^{5/2} \sum_{k \in \mathcal P_H} k^2 b_k^* b_k + C N \rho a \sqrt{\rho a^3} K_\ell^{-1} + C \mathcal M \frac{a K_H^2 K_\ell^5}{\ell} \frac{n_+^L n_+}{\mathcal M \ell^2},
\end{align*}
where in the last inequality we used $k^6 \geq K_H^6 \ell^{-6}$. Since $C K_H^{-5} K_\ell^{5/2} \leq K_H^{-1}$ and $k^2 \leq C D_k$, the first term is absorbed in a fraction of the Bogoliubov Hamiltonian, and the last term is absorbed in a fraction of the spectral gaps if $\mathcal M \leq C^{-1} \rho \ell^3 K_\ell^{-7} K_H^{-2}$, which is guaranteed by the assumptions on $\mathcal{M}$ and $K_H \geq K_{\ell}^4$.
We can now focus in $\mathcal T_1$. Using the remaining fraction of the diagonalized Hamiltonian, we complete a square to find
\begin{equation}\label{eq. completing the square}
\begin{aligned}
		\mathcal T_1 +(1-2 K_H^{-1}) \sum_{k\in \mathcal P_H} D_k b_k^* b_k&=\sum_{k\in \mathcal P_H}(1-2 K_H^{-1})D_k(b_k+ A_k)^*(b_k+A_k)-\sum_{k\in \mathcal P_H}(1-2K_H^{-1})D_k A_k^*A_k,
\end{aligned}
\end{equation}
where
	\[A_k=\frac{z\widehat{g}(k)}{\vert \Lambda\vert (1-2K_H^{-1})D_k\sqrt{1-\alpha_k^2}}\sum_{p\in \mathcal{P}^{\mathbb{Z}}_L}c(p,k) a_{p-k}^*a_p.\]
The first term in \eqref{eq. completing the square} is positive and can be dropped for a lower bound. We are left with a term in $A_k^*A_k$, which we can rewrite in normal order as
\begin{equation}\label{eq. the Ak's}
\begin{aligned}
	(1- 2 K_H^{-1}) \sum_{k\in \mathcal{P}_H}D_k A_k^*A_k=	\sum_{k\in \mathcal{P}_H}\frac{\rho_z\widehat{g}(k)^2}{ \vert\Lambda\vert  (1-2K_H^{-1})D_k(1-\alpha_k^2)}\Big(&\sum_{p,s\in \mathcal{P}_L^{\mathbb{Z}}} c(p,k) c(s,k) a_p^*[a_{p-k},a_{s-k}^*]a_s \\ &+\sum_{p,s\in \mathcal{P}_L^{\mathbb{Z}}}c(p,k) c(s,k) a_p^* a_{s-k}^*a_{p-k}a_s\Big).
\end{aligned}
\end{equation}
We call the two terms of above $\mathcal{T}_{c}$ for the commutator term and $\mathcal{T}_{0}$ for the other one, so that
\begin{equation}
\mathcal T_1 +(1-2 K_H^{-1}) \sum_{k\in \mathcal P_H} D_k b_k^* b_k \geq - \mathcal T_{c} - \mathcal{T}_0.
\end{equation}
We start by estimating the main term $\mathcal{T}_{c}$, and then we bound the error term $\mathcal{T}_0$.

\textbf{Commutator term $\mathcal{T}_{c}$.}
Recall that the commutation $[a_p,a_q^{*}]=\delta_{p,q}$ only applies when $p,q\in \frac{\pi}{\ell}\mathbb{N}_0^3$. In the above commutators, due to the sets on which we sum, this may not be the case. We can however use that $a_p =a_{p^+ }$ where $p^+=(|p_1|,|p_2|,|p_3|)$, and deduce that
\begin{equation}\label{eq.commu}
 [a_{p-k},a_{s-k}^{*}] \neq 0 \Leftrightarrow \Big( \, p_j = s_j \quad \text{or} \quad 2k_j = p_j + s_j, \quad \forall j = 1,2,3.  \Big) 
\end{equation}
For $p$, $s \in \mathcal P_L$, the second case in \eqref{eq.commu} implies $|k_j| \leq K_H \ell^{-1}$. Therefore, if $p \neq s$,
\[ [a_{p-k},a_{s-k}^{*}] \neq 0 \Rightarrow \Big( \, |k_j| \leq K_H \ell^{-1} \quad \text{for some } j \, \Big)\]
and we deduce
\begin{align}\nonumber
\mathcal{T}_c &=\sum_{k\in \mathcal{P}_H}\frac{\rho_z\widehat{g}(k)^2}{\vert\Lambda\vert  (1-2 K_H^{-1})D_k(1-\alpha_k^2)}\sum_{p,s\in \mathcal{P}_L^{\mathbb{Z}}} c(p,k)c(s,k) a_p^*[a_{p-k},a_{s-k}^*]a_s \\
	 &\leq \sum_{k\in \mathcal{P}_H}\frac{\rho_z\widehat{g}(k)^2}{ \vert\Lambda\vert (1-2 K_H^{-1})D_k(1-\alpha_k^2)}\sum_{p\in \mathcal{P}_L^{\mathbb{Z}}} c(p,k)^2 a_p^* a_p  + \sum_{k\in \mathcal{P}_H}\frac{C \rho_z\widehat{g}(k)^2 \one_{ \lbrace \vert k_1\vert\leq K_H\ell^{-1}\rbrace} }{\vert\Lambda\vert (1-2 K_H^{-1})D_k(1-\alpha_k^2)}\sum_{p\in \mathcal{P}_L^{\mathbb{Z}}} a_p^* a_{s_{p,k}} ,
\end{align}
where the components of $s_{p,k}$ are either equal to $p_j$ or $2k_j - p_j$. In any case, we can always bound the last $p$-sum by $n_+$ using a Cauchy-Schwarz inequality. We also use $C D_k(1-\alpha_k^2) \geq k^2$ to get
\begin{equation} \label{eq.Tc1}
\mathcal T_c \leq \big( 1 + CK_H^{-1} \big) \frac{1}{|\Lambda |} \sum_{k\in \mathcal{P}_H}\frac{\rho_z\widehat{g}(k)^2}{k^2} \sum_{p\in \mathcal{P}_L^{\mathbb{Z}}}c(p,k)^2 a_p^* a_p + \frac{C}{|\Lambda|} \sum_{k\in \mathcal{P}_H} \frac{\rho_z \widehat{g}(k)^2}{k^2} \one_{ \lbrace \vert k_1\vert\leq K_H\ell^{-1}\rbrace} n_+.
\end{equation}
We recall that the normalization coefficients $c(p,k) \in [\frac{1}{\sqrt{8}},\sqrt{8}]$ are defined in \eqref{def. of cqk}. We write them as $c(p,k) = \frac{c_{p-k}}{c_k c_p}$ with the notation $c_k = c_{k_1} c_{k_2} c_{k_3}$. Note that the $c_p$'s (defined in \eqref{def:neubasis}) are such that 
\begin{equation}\label{eq.sumcomplete}
\sum_{p \in \mathcal P_L^{\mathbb Z}} c_p^{-2} a_p^* a_p = \sum_{p \in \mathcal P_L} a_p^* a_p = n_+^L.
\end{equation}
We can also bound $c_{p-k}^2 \leq 8$ and we deduce
\begin{equation}\label{eq.Tcwithcoeff}
\mathcal T_c \leq \big( 1 + CK_H^{-1} \big) \frac{1}{|\Lambda |} \sum_{k\in \mathcal{P}_H}\frac{8 \rho_z\widehat{g}(k)^2}{c_k^2 k^2} n_+^L + \frac{C}{|\Lambda|} \sum_{k\in \mathcal{P}_H} \frac{\rho_z \widehat{g}(k)^2}{k^2} \one_{ \lbrace \vert k_1\vert\leq K_H\ell^{-1}\rbrace} n_+.
\end{equation}
Similarly as in \eqref{eq.sumcomplete}, we can complete the $k$-sum to $\mathcal P_H^{\mathbb Z}$ up to an extra factor $c_{k}^{-2}$. Moreover, $c_k^4=8^2$ unless at least one of the components $k_j$ vanishes. Therefore, the terms for which $c_k^4 \neq 8^2$ can be controlled by the last term in \eqref{eq.Tcwithcoeff}, and we obtain
\begin{equation}
\mathcal T_c \leq \big( 1 + CK_H^{-1} \big) \frac{1}{|\Lambda |} \sum_{k\in \mathcal{P}_H^{\mathbb{Z}}}\frac{\rho_z\widehat{g}(k)^2}{8 k^2} n_+^L + \frac{C}{|\Lambda|} \sum_{k\in \mathcal{P}_H} \frac{\rho_z \widehat{g}(k)^2}{k^2} \one_{ \lbrace \vert k_1\vert\leq K_H\ell^{-1}\rbrace} n_+.
\end{equation}
In the first term we use point 2. of Lemma \ref{lem:gomega.approx} to replace the $k$-sum by $2 \widehat{g\omega}(0) \leq C \rho a \leq C \ell^{-2} K_\ell^{2}$. The second term is bounded in Lemma~\ref{lem:Q3tech} below, and we get
\begin{equation}\label{ine:q2}
\mathcal{T}_c \leq 2 \rho_z \widehat{g\omega}(0)n_+ + C\Big( K_\ell^2 K_H^{-1} + (\rho a^3)^{1/2}K_H^2 \Big)K_\ell^{1/4} \ell^{-2} n_+. 
\end{equation}
By point 1. of Lemma \ref{lem:gomega.approx}, the first term of above is precisely the remaining quadratic term we want to cancel in Theorem \ref{thm.Q3}. The second one is absorbed in spectral gaps when $K_H \geq K_\ell^4$ and $K_H^2 K_\ell^{1/4} \leq C^{-1} (\rho a^3)^{- \frac{1}{2}}$. 

\textbf{The error term $\mathcal{T}_0$.}
It only remains to control $\mathcal{T}_0$. We use similar bounds as for $\mathcal{T}_c$ and a Cauchy Schwarz inequality and find
\begin{equation}
\begin{aligned}
\mathcal{T}_0 &= \sum_{k \in \mathcal P_H} \frac{\rho_z \widehat{g}(k)^2}{ |\Lambda| (1-2K_H^{-1}) D_k (1-\alpha_k^2)}\sum_{p,s\in\mathcal{P}_L^{\mathbb{Z}}} c(p,k)c(s,k) a_p^*a_{s-k}^*a_{p-k}a_s\\
&\leq \frac{C}{\vert \Lambda\vert} \sum_{k\in\mathcal{P}_H}\frac{\rho_z \widehat{g}(0)^2}{ k^2}\sum_{p,s\in\mathcal{P}_L^{\mathbb{Z}}}a_p^*a_{s-k}^*a_{s-k}a_p\leq C \frac{  \widehat{g}(0) K_{ \ell}^3}{|\Lambda| K_H^2} \sum_{k\in\mathcal{P}_H} \sum_{p,s\in\mathcal{P}_L^{\mathbb{Z}}}a_p^*a_{s-k}^*a_{s-k}a_p,
\end{aligned}
\end{equation}
where in the second inequality we used $|k| \geq K_H \ell^{-1}$ and $\rho_z \leq 2 \rho$. The $k$-sum can be bounded by $n_+$, the $p$-sum by $n_+^L$, and remains the cardinal of $\mathcal P_L^{\mathbb Z}$, i.e.
\begin{equation}
\mathcal{T}_0 \leq \frac{C}{|\Lambda|} \widehat g(0) K_{\ell}^3 K_H n_+ n_+^L.
\end{equation}
This is absorbed in spectral gaps under the condition $\mathcal M \leq C^{-1}\rho \ell^3 K_\ell^{-5} K_H^{-1}$, for $C$ large enough.
\end{proof}

\begin{lemma}\label{lem:Q3tech}
Under the assumptions of Theorem \ref{thm.Q3} we have
\[\frac{1}{|\Lambda|} \sum_{k\in \mathcal{P}_H} \frac{\rho \widehat{g}(k)^2 }{k^2}\one_{\{\vert k_1\vert\leq K_H\ell^{-1}\}}  \leq C \ell^{-2} (\rho a^3)^{1/2}K_H^2.\]
\end{lemma}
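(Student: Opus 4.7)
The plan is to exploit two complementary bounds on $\widehat{g}$ together with the radial symmetry of $g$ to reduce the constrained 3D lattice sum to a one-dimensional radial integral.

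First I would derive a decay estimate $|\widehat{g}(k)| \leq C \ell^2/(a^3 |k|^2)$. Since $g$ is radial and compactly supported in $B(0,R)$ with $R \leq C_0 a$,
\[
\widehat{g}(k) \;=\; \frac{4\pi}{|k|} \int_0^R r\, g(r)\, \sin(|k| r) \dd r .
\]
Integrating by parts and bounding the resulting cosine integral (together with the boundary contribution) gives
\[
|\widehat{g}(k)| \;\leq\; \frac{C}{|k|^2} \Bigl( \int_0^R g(r) \dd r \;+\; \int_0^R r\,|g'(r)|\dd r \;+\; R\,|g(R^-)| \Bigr) .
\]
Because $g = v\varphi_v$ with $v$ non-increasing, $\|v\|_\infty \leq \ell^2/a^4$, and $\varphi_v$ non-decreasing with $\|\varphi_v\|_\infty \leq 1$, the total variation of $g$ on $[0,R]$ is at most $2\|v\|_\infty$, and $\|g\|_\infty \leq \|v\|_\infty$. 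Each of the three terms is therefore bounded by $CR\|v\|_\infty \leq C\ell^2/a^3$, yielding the claimed decay. Combined with the trivial bound $|\widehat{g}(k)| \leq \widehat{g}(0) = 8\pi a$, the two estimates cross over at $|k| = \ell/a^2$.

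Next I would replace the 3D sum by a radial integral. A Riemann-sum approximation, together with the elementary fact that the portion of the sphere $\{|k|=R\}$ with $|k_1|\leq K_H/\ell$ and $k_j \geq 0$ has area at most $CRK_H/\ell$ for $R > K_H/\ell$, yields
\[
\sum_{k \in \mathcal P_H,\, |k_1|\leq K_H/\ell} \frac{\widehat{g}(k)^2}{|k|^2} \;\leq\; C K_H \ell^2 \int_{K_H/\ell}^{\infty} \frac{|\widehat{g}(R)|^2}{R} \dd R .
\]
Splitting this 1D integral at $R = \ell/a^2$ and applying the two bounds piecewise gives the estimates $\int_{K_H/\ell}^{\ell/a^2} Ca^2/R \dd R \leq C a^2 |\log(\rho a^3)|$ and $\int_{\ell/a^2}^{\infty} C \ell^4/(a^6 R^5) \dd R \leq Ca^2$, so the total is $\leq C K_H a^2 \ell^2 |\log(\rho a^3)|$.

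Multiplying by $\rho/|\Lambda|=\rho \ell^{-3}$ and inserting $\ell = K_\ell (\rho a)^{-1/2}$, the obtained bound compares to the target $C\ell^{-2}(\rho a^3)^{1/2} K_H^2$ with ratio $C K_\ell |\log(\rho a^3)|/K_H$; by the standing hypothesis $K_H \geq K_\ell^4$ this ratio is at most $C(\rho a^3)^{3\eta} |\log(\rho a^3)|$, bounded by an absolute constant for $\rho a^3 \leq C^{-1}$. The main technical obstacle is the integration-by-parts decay of $\widehat{g}$, which requires the monotonicity of both $v$ and $\varphi_v$ to control the total variation of $g$; the boundary contribution $R|g(R^-)|$ has the same order $\ell^2/a^3$ thanks to $R \leq C_0 a$ and $\|g\|_\infty \leq \ell^2/a^4$, so it is handled in the same stroke.
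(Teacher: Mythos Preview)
Your overall strategy is sound and reaches the stated bound, but it differs from the paper's proof and contains one unjustified step.

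The paper also splits the sum at a threshold $|k|=K_0/\ell$. For low momenta it uses, as you do, $|\widehat g|\leq Ca$ together with the slab geometry, obtaining $C\rho a^2 \ell^{-1} K_H \log K_0$. For the high-momentum tail, however, it does \emph{not} use any pointwise decay of $\widehat g$; instead it bounds $k^{-2}\leq (K_0/\ell)^{-2}$ and invokes Plancherel with $\|g\|_2^2\leq \|g\|_\infty\|g\|_1\leq C\ell^2 a^{-3}$, both factors coming straight from Proposition~\ref{prop.L1v}. Optimising in $K_0$ finishes the proof. This route needs only $L^1$ and $L^\infty$ information on $g$.

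Your decay estimate $|\widehat g(k)|\leq C\ell^2/(a^3|k|^2)$ rests on the claim that $v$ is non-increasing, but Proposition~\ref{prop.L1v} does not assert this, and the potential actually constructed in its proof is \emph{not} monotone: it equals the constant $\min(g_S(x_0),M)$ on $[0,R_S-\varepsilon]$, jumps \emph{up} to $V(R_S)$ on $[R_S-\varepsilon,R_S]$, and then follows the non-increasing $V$. The upward jump can be of order $\|v\|_\infty$. The total-variation bound you need can still be recovered from this explicit structure (one checks $\mathrm{TV}(v)\leq C\|v\|_\infty$ despite the jump, and then $\mathrm{TV}(g)\leq C\|v\|_\infty$ since $\varphi_v$ is monotone and bounded by $1$), but that requires going back into the construction rather than using only the stated properties. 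The paper's $L^2$ argument sidesteps this entirely and is a line shorter.
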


\begin{proof}
We first remove the very high momenta from the sum, for $|k| > K_0 \ell^{-1}$, with $K_0>0$ to be chosen later,
\begin{equation}
\frac{1}{|\Lambda|} \sum_{|k| > \frac{K_0}{\ell} } \frac{\rho \widehat{g}(k)^2 }{k^2}\one_{\{\vert k_1\vert\leq K_H\ell^{-1}\}} 
 \leq \frac{C \rho \ell^2 \| g \|_2^2 }{K_0^2} \leq C \frac{\rho \ell^{4}}{K_0^{2}a^{3}} = C \ell^{-2} K_\ell^6 (\rho a^3)^{-2} K_0^{-2}
\end{equation}
where in the last inequality we used that $\|g\|_2^2 \leq \|g\|_\infty \|g\|_1 \leq C \ell^2 a^{-3}$. We now deal with the rest
\begin{align*}
\frac{1}{|\Lambda|} \sum_{|k| \leq \frac{K_0}{\ell} } \frac{\rho \widehat{g}(k)^2 }{k^2}\one_{\{\vert k_1\vert\leq K_H\ell^{-1}\}} \leq C \rho a^2 \ell^{-1} K_H \log K_0 = C \ell^{-2} (\rho a^3)^{1/2}  K_H K_\ell^{-1} \log K_0.
\end{align*}
Choosing $K_0 = K_\ell^3 (\rho a^3)^{-5/4}$ and using that $\log K_0 \leq K_H$ concludes the proof.
\end{proof}


\appendix
\section{Estimates on the LHY terms}

\begin{lemma}\label{lem.integral}
	There exists a $C>0$ such that, if $|z|^2 \leq K_\ell^{1/4} N \leq C \rho\ell^3 K_\ell^{1/4}$, we have 
	\begin{align*}
		\rho_z^2 \widehat{g \omega}(0) + \frac{1}{\vert \Lambda \vert} \sum_{p \in \Lambda_+^*} \Big( &\sqrt{\tau(p)^2 + 2 \rho_z \widehat g(p) \tau(p)} - \tau(p)- \rho_z \widehat g(p) \Big) = 8\pi (\rho_z a)^{5/2} \frac{128}{15 \sqrt \pi} + \mathcal O((\rho a)^{5/2} K_\ell^{-1/4}), 
	\end{align*}
We recall that $\tau(p)$ is defined in (\ref{eq:tau}).
\end{lemma}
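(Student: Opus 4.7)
The strategy is to regularize the divergent Bogoliubov sum by combining it with the $\rho_z^2 \widehat{g\omega}(0)$ term, then approximate the resulting convergent quantity by the classical LHY integral over $\mathbb{R}^3$, which is evaluated explicitly.

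I would first add and subtract the counterterm $\frac{(\rho_z\widehat{g}(p))^2}{2\tau(p)}$ inside the $p$-sum and regroup as
$$\text{LHS} = \Big[\rho_z^2 \widehat{g\omega}(0) - \frac{1}{|\Lambda|}\sum_{p \in \Lambda^*_+} \frac{(\rho_z \widehat{g}(p))^2}{2\tau(p)}\Big] + \frac{1}{|\Lambda|}\sum_{p \in \Lambda^*_+} f(p),$$
with the summable integrand
$$f(p) := \sqrt{\tau(p)^2 + 2\rho_z \widehat{g}(p)\tau(p)} - \tau(p) - \rho_z \widehat{g}(p) + \frac{(\rho_z \widehat{g}(p))^2}{2\tau(p)}.$$
The bracket is estimated using Lemma~\ref{lem:gomega.approx}(2): writing $\widehat{g\omega}(0)$ as a lattice sum over $\frac{\pi}{\ell}\mathbb{Z}^3\setminus\{0\}$, reducing to the Neumann $\mathbb{N}_0^3$-sum via the reflection weights $c_k^2\in\{1,2,4,8\}$ (which equal $8$ on the bulk where all $k_i \neq 0$), and replacing $\tau$ by $p^2$ in the denominator, one shows the bracket is $\mathcal{O}(\rho_z^2 a^2 \ell^{-1}) = \mathcal{O}((\rho a)^{5/2} K_\ell^{-1/2})$, well within tolerance. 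The three sources of error — the $\mathcal{O}(a^2/\ell)$ remainder in Lemma~\ref{lem:gomega.approx}, the boundary sublattices (some $k_i=0$, handled by direct 1D/2D summation of $a^2/k^2$), and the $|\tau-p^2|\leq K_H/\ell^2$ shift — are each dominated by that bound after using $\rho_z \leq C \rho K_\ell^{1/4}$ and $\ell \sqrt{\rho a} = K_\ell$.

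For the remaining sum of $f(p)$, I would first replace $\widehat{g}(p)$ by $\widehat{g}(0)=8\pi a$ (with error $\mathcal{O}(R^2 a p^2) = \mathcal{O}(a^3 p^2)$) and $\tau(p)$ by $p^2$ (the shift $\leq K_H/\ell^2$ is negligible on the bulk region $|p|^2 \leq K_0 \rho a$, since there $p \in \mathcal{P}_L$ and $K_0 \ll K_H^2/K_\ell^2$). Next, I would replace the Riemann sum on $\frac{\pi}{\ell}\mathbb{N}_0^3\setminus\{0\}$ by $(2\pi)^{-3}\int_{\mathbb{R}^3}$, incurring a standard Riemann-sum error $\mathcal{O}((\rho a)^{5/2}/K_\ell)$ owing to the characteristic momentum scale $\sqrt{\rho_z a}$ being much larger than the lattice spacing $\pi/\ell$. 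The tail $|p|^2 > K_0 \rho a$ contributes $\mathcal{O}((\rho a)^{5/2}/\sqrt{K_0})$ to both sum and integral since $f \sim (\rho_z a)^3/p^4$ there. What remains is the universal LHY integral
$$\int_{\mathbb{R}^3} \Big[\sqrt{p^4 + 16\pi \rho_z a\, p^2} - p^2 - 8\pi \rho_z a + \tfrac{(8\pi \rho_z a)^2}{2p^2}\Big]\tfrac{\dd p}{(2\pi)^3},$$
which after the substitution $p = \sqrt{16\pi\rho_z a}\,q$ reduces to a dimensionless constant. The textbook evaluation of that constant produces exactly $8\pi(\rho_z a)^{5/2}\cdot\frac{128}{15\sqrt{\pi}}$.

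The main obstacle is bookkeeping: $\rho_z^2\widehat{g\omega}(0)$ and the counterterm sum are individually divergent (or strongly cutoff-dependent) in the continuum limit, so the cancellation has to be performed at the level of summands with careful attention to the asymmetry between $\mathbb{Z}^3$ and $\mathbb{N}_0^3$, the spectral-gap shift in $\tau(p)$, and the momentum dependence of $\widehat{g}(p)$, all while keeping every error bounded by $(\rho a)^{5/2} K_\ell^{-1/4}$.
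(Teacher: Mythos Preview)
Your proposal is correct and follows essentially the same strategy as the paper: introduce the counterterm $\tfrac{(\rho_z\widehat g(p))^2}{2\tau(p)}$, absorb it into $\rho_z^2\widehat{g\omega}(0)$ via Lemma~\ref{lem:gomega.approx}(2), then replace $\tau(p)\to p^2$ and $\widehat g(p)\to 8\pi a$ in the regularized summand (the paper organizes this via $G(t)=\sqrt{1+2t}-1-t+t^2/2$ and a cutoff $K=K_\ell^2$), and finally pass from the Riemann sum to the LHY integral. Your treatment of the $\mathbb Z^3$ versus $\mathbb N_0^3$ lattice and the reflection weights is in fact more explicit than the paper's, which simply cites \eqref{eq.gomega0app0} and then converts $\sum_{\Lambda_+^*}\to 8\int_{\mathbb R_+^3}$ at the end.
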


\begin{proof}
Using (\ref{eq.gomega0app0}) and that $|p^{-2} - \tau_p^{-1}| \leq C p^{-4} \ell^{-2} (1 + K_H \one_{p\in \mathcal P_H})$, we can rewrite the left-hand side of the above equation as
\begin{align*}
		\sum_{p \in \Lambda_+^*} \Big( &\sqrt{\tau(p)^2 + 2 \rho_z \widehat g(p) \tau(p)} - \tau(p)- \rho_z \widehat g(p) + \frac{(\rho_z \widehat{g}(p))^2}{2\tau(p)}\Big) + \mathcal O(|\Lambda| (\rho a)^{5/2} K_\ell^{-1/2}),
\end{align*}
where the error can be absorbed in $\mathcal E$. To estimate the sum, we define $G(t) = \sqrt{1+2t} - 1 - t + t^2/2 \geq 0$, which is such that  $x G(y/x) = \sqrt{x^2 + 2 xy} - x -y - x/(2y)$. 
Let us introduce a cut-off $1<  K < K_H$ which we will choose at the end. Using that $G(t)\leq C t^3$, we have for $\rho_z \leq K_\ell^{1/4}\rho$,
\begin{align*}
\sum_{|p| > K \ell^{-1}} \tau(p) G\Big(\frac{\rho_z \widehat{g}(p)}{\tau(p)}\Big) + \sum_{|p| > K \ell^{-1}} p^2 G\Big(\frac{8 \pi\rho_z a}{p^2}\Big) \leq C K_\ell^{3/4} (\rho a)^3 \sum_{|p| > K \ell^{-1}} \frac{1}{p^4} \leq C \ell^3 (\rho a)^{5/2} K_\ell^{7/4} K^{-1},
\end{align*}
where we recall that the sums are over $p \in 2\pi \ell^{-1} \mathbb{Z}^3$.
Let us now deal with $|p| \leq K \ell^{-1}$. Note that $G(t) \leq C t^2$ and $|G'(t)| \leq C(1+ t)$ for some $C>0$ and all $t\geq0$, so that
\begin{align*}
\left| (\tau(p)-p^2) G\Big(\frac{\rho_z \widehat{g}(p)}{\tau(p)}\Big)\right|
	&\leq C K_\ell^{1/2} (\rho a)^2 \ell^{-2} p^{-4} \\
p^2 \left| G\Big(\frac{\rho_z \widehat{g}(p)}{\tau(p)}\Big)- G\Big(\frac{8\pi a \rho_z}{p^2}\Big) \right| & \leq  C K_\ell^{1/4} (\rho a) p^2 (1+ K_\ell^{1/4} \rho a p^{-2}) (p^{-4}\ell^{-2} + R^2) 
\end{align*}
where we used that $|\tau(p)-p^2| \leq C \ell^{-2}$ for $|p| \leq K \ell^{-1}$ and that $|\widehat{g}(p) - \widehat{g}(0)| \leq R^2 \widehat{g}(0)|p|^2$. We obtain that
\begin{align*}
\left|\sum_{|p| \leq K \ell^{-1}} \tau(p) G\Big(\frac{\rho_z \widehat{g}(p)}{\tau(p)}\Big) - p^2 G\Big(\frac{8 \pi\rho_z a}{p^2}\Big) \right| 
	&\leq C K_\ell^{1/4} (\rho a)  \left( \ell^{-2} R^2 K^5+  \rho a R^2 K_\ell^{1/4} K^3 + K + K_\ell^{9/4} \right) \\
	&\leq C \ell^3 (\rho a)^{5/2} K_\ell^{-\frac{11}{4}} \left( \rho a^3 K^5 K_\ell^{-2} + \rho a^3 K^3 K_\ell^{1/2} + K + K_\ell^{9/4} \right)
\end{align*}
where we used that $R \leq C a$. Choosing $K = K_\ell^2$ and using that $K_\ell \leq C (\rho a^3)^{1/10}$ we can absorb the error terms in $\mathcal E$. Next, we approximate the sum by the integral, recalling that $\Lambda^* = \pi \ell^{-1} \mathbb{N}_0^{3}$, one easily checks that
\begin{align*}
\left| \frac{\pi^3}{\ell^3} \sum_{p \in \Lambda_+^*}  p^2 G\Big(\frac{8 \pi\rho_z a}{p^2}\Big)  - 8 \int_{\mathbb{R}_+^{3}} p^2 G\Big(\frac{8 \pi\rho_z a}{p^2}\Big) dp \right| \leq C (\rho_z a)^3 \ell^{-1} \leq C K_\ell^{-1/4} (\rho a)^{5/2}.
\end{align*}
Finally, it is a standard result that $\int_{\mathbb{R}^{3}} p^2 G\Big(\frac{8 \pi\rho_z a}{p^2}\Big) = - 64 \pi^4 \frac{128}{15\sqrt{\pi}} (\rho_z a)^{5/2}$ (see for instance \cite{FS}).
\end{proof}

Recall that
\begin{equation}
\tilde{D}_p(z) = \begin{cases}
D_p(z) &\text{if} \quad p \notin \mathcal P_H,\\
K_H^{-1} D_p(z) &\text{if} \quad p \in \mathcal P_H.
\end{cases},\qquad D_p(z):=\sqrt{\tau(p)^2+2\tau(p)|z|^2 \ell^{-3}\widehat{g}(p)}.
\end{equation}

\begin{lemma}\label{lem.thermal.approx}
If $\nu < 2 \eta < 1/5$ and $|z|^2 \leq C K_\ell^{1/4} \rho \ell^3 $, then we have
\[ T\sum_{p \in \Lambda^*_+}\log(1-e^{-\frac{1}{T}\tilde{D}_p(z)}) \geq T\sum_{p \in \Lambda^*_+}\log(1-e^{-\frac{1}{T}\omega_p(z)})-  C |\Lambda| (\rho a)^{3}, \]
where $\omega_p(z) = \sqrt{p^4 + 16 \pi a |z|^2\ell^{-3} p^2}$.
\end{lemma}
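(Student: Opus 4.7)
The proof compares $\tilde D_p(z)$ pointwise to $\omega_p(z)$, exploiting the monotonicity and concavity of $f(x):=T\log(1-e^{-x/T})$. The key initial observation is that $\tau(p)\leq p^2$, $\widehat g(p)\leq\widehat g(0)=8\pi a$ (since $g\geq 0$), and $K_H\geq 1$, hence $\tilde D_p\leq D_p\leq \omega_p$ for every $p$. Monotonicity then gives $f(\tilde D_p)\leq f(\omega_p)$, so it suffices to bound from above the nonnegative quantity
\begin{equation*}
\mathcal E := T\sum_{p\in\Lambda^*_+}\bigl[f(\omega_p) - f(\tilde D_p)\bigr]
\end{equation*}
by $C|\Lambda|(\rho a)^3$. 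I split $\Lambda^*_+ = \mathcal P_L\cup\mathcal P_H$ and treat each region separately.

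For $p\in\mathcal P_H$ I exploit the $K_H^{-1}$ factor. Since $\tau(p)\geq p^2/2\geq K_H^2/(2\ell^2)$ on $\mathcal P_H$, one has $\tilde D_p\geq p^2/(2K_H)$, and therefore
\begin{equation*}
\tilde D_p/T \geq K_H/(2T\ell^2) \geq c\, K_\ell^{3-\nu/\eta},
\end{equation*}
which grows without bound as $\rho a^3\to 0$ under the hypothesis $\nu<2\eta$ (so $\nu/\eta<2<3$). Thus both $|f(\omega_p)|$ and $|f(\tilde D_p)|$ are bounded by $Ce^{-p^2/(4K_HT)}$, and a Gaussian-type estimate of $T\sum_{|p|\geq K_H/\ell}e^{-p^2/(4K_HT)}$ shows the high-momentum contribution is exponentially small in a positive power of $K_\ell$, hence negligible.

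For $p\in\mathcal P_L\setminus\{0\}$, $\tilde D_p=D_p$, and the integral representation $T[f(\omega_p)-f(D_p)]=\int_{D_p}^{\omega_p}(e^{s/T}-1)^{-1}\,ds$ combined with the elementary estimate $(e^{s/T}-1)^{-1}\leq T/s$ yields
\begin{equation*}
T[f(\omega_p)-f(D_p)] \leq T\log(\omega_p/D_p) = \tfrac{T}{2}\log(\omega_p^2/D_p^2).
\end{equation*}
Using $\tau(p)=|p|^2-\pi/(2\ell^2)$ on $\mathcal P_L$ and the Taylor bound $|\widehat g(p)-8\pi a|\leq Ca^3|p|^2$ (valid because $R\leq C_0 a$), a direct computation gives
\begin{equation*}
0 \leq \omega_p^2 - D_p^2 \leq C\bigl(p^2/\ell^2 + \rho_z a/\ell^2 + \rho_z a^3 p^4\bigr).
\end{equation*}
Together with $D_p^2\geq \tau(p)^2\geq p^4/4$ on $\mathcal P_L\setminus\{0\}$ and $\log(1+x)\leq x$, this produces a per-mode bound of the form $CT\bigl[1/(p^2\ell^2)+\rho_z a/(p^4\ell^2)+\rho_z a^3\bigr]$.

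Summing these bounds via standard lattice estimates $\sum_{p\in\mathcal P_L\setminus\{0\}}|p|^{-2}\sim\ell^2 K_H$, $\sum_p |p|^{-4}\sim\ell^4$, $|\mathcal P_L|\sim K_H^3$, and inserting $\rho_z\leq C\rho K_\ell^{1/4}$, gives a total low-momentum error of order $CT(K_H + \rho a K_\ell^{1/4}\ell^2+\rho a^3 K_\ell^{1/4}K_H^3)$. The hardest part is converting this into the required form $CK_\ell^3(\rho a)^{3/2}$ under the parameter choices $K_H=K_\ell^5$ and $T\leq \rho a K_\ell^{\nu/\eta}$ with $\nu<2\eta$; in particular, for the finitely many smallest modes $|p|\sim\pi/\ell$ where $(\omega_p^2-D_p^2)/D_p^2=O(1)$, the crude bound $T/D_p$ must be replaced by the sharper exponential bound $(e^{D_p/T}-1)^{-1}\leq Ce^{-D_p/T}$ in the regime $D_p\gtrsim T$, or equivalently one uses $\log(\omega_p/D_p)\leq C/(\ell |p|)^2$ from Taylor expansion when $\ell |p|\gg 1$ to ensure the summation stays within the required budget.
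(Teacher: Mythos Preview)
Your high-momentum ($\mathcal P_H$) treatment is fine and matches the paper's. The problem is in the low-momentum part, and it is a genuine gap.

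By replacing $(e^{s/T}-1)^{-1}$ with the uniform bound $T/s$ you throw away the exponential suppression for modes with $|p|^2\gtrsim T$. Your resulting first error term is $\sum_{p\in\mathcal P_L} T/(p^2\ell^2)\sim T K_H$. With the paper's parameter choice $K_H=K_\ell^5$ and $T\leq \rho a(\rho a^3)^{-\nu}$ this gives
\[
\frac{TK_H}{|\Lambda|(\rho a)^3}\;\leq\;\frac{(\rho a^3)^{-\nu-2\eta}}{(\rho a)^{1/2}},
\]
which is \emph{not} bounded: for fixed $\rho a^3$ one can make $(\rho a)^{1/2}$ arbitrarily small by taking $a$ large. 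Your second error term $T\rho a K_\ell^{1/4}\ell^2=TK_\ell^{9/4}$ fails the same test. So the summation does not stay within the stated budget $C|\Lambda|(\rho a)^3$.

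Your closing paragraph misidentifies where the trouble lies. For the smallest modes $|p|\sim\pi/\ell$ one has $D_p\sim\ell^{-2}$ while $\omega_p\sim\sqrt{\rho_z a}\,/\ell$, hence $(\omega_p^2-D_p^2)/D_p^2\sim\rho_z a\,\ell^2\sim K_\ell^{9/4}$, not $O(1)$; and in any case $D_p/T\sim 1/(T\ell^2)\ll 1$ there, so no exponential improvement is available. The divergent contribution actually comes from the \emph{upper} end of $\mathcal P_L$, the many modes with $\sqrt{T}\lesssim |p|\leq K_H/\ell$, each contributing $\sim T/(p^2\ell^2)$ in your estimate but exponentially little in reality.

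The paper avoids this by keeping the Bose factor: it first bounds $|D_p-\omega_p|$ on $\mathcal P_L$ by $C\sqrt{\rho a}\,K_\ell^{1/8}/(|p|\ell^2)$ and then uses concavity of $x\mapsto\log(1-e^{-x})$ to get
\[
f(\omega_p)-f(D_p)\;\leq\;\frac{|\omega_p-D_p|}{e^{D_p/T}-1}\;\leq\;\frac{C\sqrt{\rho a}\,K_\ell^{1/8}}{|p|\ell^2\,(e^{p^2/(4T)}-1)}.
\]
The factor $(e^{p^2/(4T)}-1)^{-1}$ effectively truncates the sum at $|p|\sim\sqrt{T}\ll K_H/\ell$, so the $K_H$ never appears. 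To repair your argument you would have to reinstate precisely this exponential cutoff, which amounts to the paper's method.
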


\begin{proof}
First note that $D_p(z) \geq p^2/4$ which gives
\begin{align*}
T \sum_{p \in \mathcal P_H}\log(1-e^{-\frac{1}{T}\tilde{D}_p(z)})  \geq - T \sum_{p \in \mathcal P_H} e^{-p^2/(4T K_H)} \geq - C T^{5/2} \ell^3 K_H^{3/2} e^{-C K_H/(T\ell^2)},
\end{align*}
which is smaller than the expected error since $K_{H}/(T \ell^2)$ is big. Similarly, we can neglect the term $T \sum_{p \in \mathcal P_H}\log(1-e^{-\frac{1}{T}\omega_p(z)})$.
Now note that there is a constant $C>0$ such that, for all $p \in \mathcal P_L$,
\begin{equation}\label{eq.Dp-omegap}
\big| D_p - \omega_p \big| \leq C \Big( 1+ \frac{\sqrt{|z|^2\ell^{-3} a}}{|p|} \Big) |p^2 - \tau(p)| \leq C \frac{\sqrt{\rho a} K_\ell^{\frac{1}{8}}}{|p|} \frac{1}{\ell^2}
\end{equation}
Moreover, denoting $G(x) = \log( 1-e^{-x})$, we have $G'(x) = \frac{1}{e^{x}-1}$ and a Taylor expansion gives
\begin{align*}
T\sum_{p \in \mathcal P_L}\log(1-e^{-\frac{1}{T}{D}_p(z)}) - T\sum_{p\in \mathcal P_L}\log(1-e^{-\frac{1}{T}\omega_p(z)}) 
	&\geq -C \frac{\sqrt{\rho a} K_\ell^{1/8}}{\ell^2}\sum_{p \in \mathcal P_L} \frac{ 1 }{|p|(e^{\frac{p^2}{4T}}-1)} \\
	&\geq - C\frac{\sqrt{\rho a} K_\ell^{1/8}}{\ell^2} T^{2} \ell^{3} \log(T^{1/2}\ell) \\
	&\geq - C |\Lambda| \log (\rho a^3) (\rho a^{3})^{7/2+15\eta/8 - 2\nu}
\end{align*}
and the result follows from the assumptions on $\eta$ and $\nu$.
 \end{proof}

\section[Reduction to small boxes]{Reduction to small boxes: proof of Theorem \ref{TL.thm}}\label{sec:redtoboxes}

In this section we prove Theorem \ref{TL.thm} in the thermodynamic limit using the lower bound on small boxes from \Cref{main.thm}. We follow the proof of \cite[Theorem 1.1]{HHNST}. It applies to our case, but we want to keep track of our parameters. The limit (\ref{eq:def_f}) is independent of the choice of $N$ and $L$, we can therefore choose $L$ such that $L/\ell$ is an integer and define $M = (L/\ell)^3$. We can now divide $\Lambda_L$ into $M$ translates of $\Lambda_\ell$. From \cite[Eq. (9.6)]{HHNST}, we obtain immediately that, for all $\mu \in \mathbb{R}$,
\begin{align} \label{eq:GC-to-C}
F(L,N) \geq - T M \log \sum_{n=0}^N e^{-\frac{1}{T} (F(\ell,n) - \mu n)} + \mu \rho L^3.
\end{align}

For $n \leq 20 \rho \ell^3$, we use \Cref{main.thm} to obtain
\begin{equation*}
 F(\ell,n) - \mu n \geq F_{\rm{Bog}}(\ell, n) - \mu n - C \ell^3 (\rho a)^{5/2} (\rho a^3)^{\eta/4},
 \end{equation*}
where we recall that
\begin{align*} 
F_{\mathrm{Bog}}(\ell,n) = 4\pi \rho_{n,\ell}^2 a\ell^3\Big( 1+ \frac{128}{15 \sqrt \pi} \sqrt{ \rho_{n,\ell} a^3} \Big) + T \sum_{p \in \Lambda_+^*} \log \Big( 1- e^{- \frac{1}{T}\sqrt{p^4 + 16\pi a \rho_{n,\ell} p^2}}\Big).
\end{align*}
It is easy to check that it is convex in $n$ for $\rho a^3$ small enough using to the following lemma.
\begin{lemma}\label{lem:convexity2}
There exists $C >0$ such that for all $\rho \geq 0$, $\ell,T>0$,
\begin{align}
0 \leq \frac{\partial}{\partial \rho} T \sum_{p \in \Lambda_+^*} \log(1-e^{-T^{-1} \sqrt{p^4 + 16\pi a \rho p^2}}) \leq C T^{5/2} \ell^3, \label{eq:lem:convexity2_1}\\
0 \leq - \frac{\partial^2}{\partial \rho^2} T \sum_{p \in \Lambda_+^*} \log(1-e^{-T^{-1} \sqrt{p^4 + 16\pi a \rho p^2}}) \leq C T^{5/2} \ell^3. \label{eq:lem:convexity2_2}
\end{align}
\end{lemma}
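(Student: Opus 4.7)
The plan is to reduce the lemma to sign and size properties of two elementary building blocks, followed by a routine Riemann-sum comparison. I would set $h(x):=\log(1-e^{-x})$ on $(0,\infty)$ and $\omega_p(\rho):=\sqrt{p^4+16\pi a\rho p^2}$. Direct differentiation gives $h'(x)=(e^x-1)^{-1}>0$ and $h''(x)=-e^x(e^x-1)^{-2}<0$, so $h$ is increasing and concave. On the dispersion side, $\omega_p^2$ is affine in $\rho$, hence $\omega_p$ is concave in $\rho$, with
\[
\omega_p'(\rho)=\frac{8\pi a p^2}{\omega_p}\ge 0,\qquad \omega_p''(\rho)=-\frac{(8\pi a p^2)^2}{\omega_p^3}\le 0.
\]

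Applying the chain rule termwise then gives
\[
\partial_\rho\bigl[T h(\omega_p/T)\bigr]=h'(\omega_p/T)\,\omega_p'(\rho)\ge 0,
\]
\[
\partial_\rho^2\bigl[T h(\omega_p/T)\bigr]=T^{-1}h''(\omega_p/T)\,\omega_p'(\rho)^2+h'(\omega_p/T)\,\omega_p''(\rho)\le 0,
\]
and summing over $p\in\Lambda_+^*$ yields in one stroke the two non-negativity statements in \eqref{eq:lem:convexity2_1} and \eqref{eq:lem:convexity2_2} (non-negativity of $\partial_\rho$, and non-positivity of $\partial_\rho^2$, i.e.\ non-negativity of $-\partial_\rho^2$).

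For the quantitative upper bounds, I would rescale $p=\sqrt{T}\,q$ and introduce the dimensionless parameter $\alpha:=a\rho/T$. The first-derivative summand then takes the form
\[
\partial_\rho\bigl[T h(\omega_p/T)\bigr]=\frac{8\pi a\,q^2}{\sqrt{q^4+16\pi\alpha q^2}\,\bigl(e^{\sqrt{q^4+16\pi\alpha q^2}}-1\bigr)},
\]
and a completely analogous dimensionless expression (with an additional factor of $q^2/\sqrt{q^4+16\pi\alpha q^2}$ inside) appears for $\partial_\rho^2$. Both summands are monotone decreasing in each coordinate of $q$, so a Riemann-sum comparison bounds the Neumann lattice sum over $\Lambda_+^*=\pi\ell^{-1}\mathbb N_0^3\setminus\{0\}$ by $C\bigl(T^{1/2}\ell/\pi\bigr)^3$ times the integral of the same profile over $\mathbb R_+^3$, plus lower-dimensional boundary pieces. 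The resulting integrals depend only on $\alpha\ge 0$ and are uniformly bounded: near $q=0$ the integrand is $O\bigl((e^{q^2}-1)^{-1}\bigr)$, integrable against the three-dimensional volume element $q^2\,dq$, and for large $q$ the exponential factor provides decay.

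The main delicate point is the lattice-to-integral comparison at small momenta, where the summand is largest. The exclusion $p=0\notin\Lambda_+^*$ together with coordinatewise monotonicity reduces this to a routine shift argument, and boundary contributions from momenta with some vanishing coordinate are absorbed into the final constant. This delivers the stated bounds.
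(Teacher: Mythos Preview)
Your proposal is correct and follows essentially the same route as the paper: compute $h',h'',\omega_p',\omega_p''$ to get the signs, then control the size of each summand and pass to a Riemann sum. The only cosmetic difference is that the paper bounds each rescaled summand pointwise by a $\rho$-independent profile (namely $Ce^{-p^2}$ and $Ce^{-p^2/2}(p^2+1)^2$ in the rescaled variable) before summing, whereas you keep the $\alpha$-dependent profile, invoke coordinatewise monotonicity for the lattice-to-integral comparison, and then bound the integral uniformly in $\alpha$; the paper's shortcut spares you the monotonicity check, which for the second-derivative summand is not entirely immediate.
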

Choosing $\mu$ so that $\frac{\partial}{\partial n}F_{\mathrm{Bog}}(\ell,\rho \ell^3) = \mu$, it follows that 
\begin{align} \label{eq:est_F_prop:convex}
F_{\mathrm{Bog}}(\ell,n) - \mu n \geq F_{\mathrm{Bog}}(\ell,\rho\ell^3) - \mu \rho \ell^3
\end{align}
for all $n\geq 1$ and one can easily check from (\ref{eq:lem:convexity2_1}) that
\begin{align} \label{eq:est_mu}
\left| \mu(\rho,a,T,\ell) - 8\pi a\rho \right| \leq C \rho a (\rho a^3)^{1/2-5\nu/2}
\end{align}
which gives
\begin{equation}\label{eq:n<20rhol3}
 F(\ell,n) - \mu n \geq F_{\rm{Bog}}(\ell, \rho \ell^3) - \mu \rho \ell^3 - C \ell^3 (\rho a)^{5/2} (\rho a^3)^{\eta/4}. 
 \end{equation}
For $n> 20 \rho \ell^3$, denoting $n_0 = \lfloor 5\rho\ell^3 \rfloor $ and $r_0 = n - \left\lfloor \frac{n}{n_0} \right\rfloor n_0$, we use the subadditivity of the free energy $F(\ell,n)$ to obtain
\begin{align*}
F(\ell,n) -\mu n 
	&\geq \left \lfloor \frac{n}{n_0} \right\rfloor (F(\ell, n_0) -  \mu n_0) + (F(\ell, r_0) - \mu r_0) \\
	&\geq \left \lfloor \frac{n}{n_0} \right\rfloor (F_{\mathrm{Bog}}(\ell, n_0) -  \mu n_0) + (F_{\mathrm{Bog}}(\ell, r_0) - \mu r_0) - C  \left \lfloor \frac{n}{n_0} \right\rfloor \ell^3 (\rho a)^{5/2} (\rho a^3)^{\eta/4},
\end{align*}
where we used \Cref{main.thm} to obtain the second inequality, since $n_0, r \leq 20 \rho \ell^3$.
To estimate the first term, we use the bound (\ref{eq:est_mu}) on $\mu$ and that, for $T\leq C \rho a (\rho a^3)^{-\nu}$,
\begin{align*}
F_{\mathrm{Bog}}(\ell, n_0) - \mu n_0 \geq 4 \pi a  \big( \frac{n_0}{\ell^3} \big)^2 \ell^3 - 8\pi a \rho n_0 - C  a \rho^2\ell^3 (\rho a^3)^{1/2-5\nu/2} \geq 10 \pi a \rho^2 \ell^3,
\end{align*} 
for $\rho a^3$ small enough.
Therefore, estimating $F_{\mathrm{Bog}}(\ell, r_0) $ with (\ref{eq:est_F_prop:convex}) we obtain for $n>n_0$,
\begin{align}
F(\ell,n) - \mu n 
	&\geq F_{\mathrm{Bog}}(\ell, \rho \ell^3) - \mu \rho\ell^3 + 10 \left  \lfloor \frac{n}{n_0} \right\rfloor \pi a \rho^2 \ell^3 \nonumber\\
	&\geq F_{\mathrm{Bog}}(\ell, \rho \ell^3) - \mu \rho\ell^3 + n a \rho. \label{eq:n>20rhol3}
\end{align}
Hence, from (\ref{eq:GC-to-C}) and the bounds \eqref{eq:n<20rhol3} and \eqref{eq:n>20rhol3}, we obtain
\begin{align*}
\frac{1}{L^3}F(L,N) - \frac{1}{\ell^3}F_{\mathrm{Bog}}(\ell,\rho\ell^3) 
	&\geq - \frac{T}{\ell^3} \log \left(\sum_{n\leq 20 \rho \ell^3} e^{C \frac{1}{T}\ell^3 (\rho a)^{5/2} (\rho a^3)^{\eta/4}} + \sum_{n> 20 \rho \ell^3} e^{-n \frac{\rho a}{T}} \right) \\
	&\geq - (\rho a)^{5/2} (\rho a^3)^{-\nu} K_\ell^{-3} \log (n_0 e^{C \ell^3 \frac{(\rho a)^{5/2}}{T} (\rho a^3)^{\eta/4}} + e^{-n_0\frac{\rho a}{T}}/(1-e^{-\frac{\rho a}{T}})) \\
	&\geq - C (\rho a)^{5/2} (\rho a^3)^{\eta/4}+C\frac{T}{\ell^3}\log(\rho a^3).
\end{align*}
where we used that $e^{-n_0\frac{\rho a}{T}}/(1-e^{-\frac{\rho a}{T}})\leq 1$ and that $n_0 \simeq (\rho a^3)^{-1/2} K_\ell^3$. Now from \cite[Lemma 9.1]{HHNST}, we obtain
\begin{align*}
&\left| \frac{T}{\ell^3} \sum_{p \in \frac{\pi}{\ell} \mathbf N_0^3} \log \Big( 1- e^{- \frac{1}{T}\sqrt{p^4 + 16\pi a \rho p^2}}\Big)  - \frac{T^{5/2}}{(2\pi)^3} \int_{\mathbb{R}^{3}} \log\left(1-e^{-\sqrt{p^4 + \tfrac{16 \pi \rho a}{T} p^2}}\right)\dd p \right| \\ 
&\leq C (T\ell^2)^{-1/2} T^{5/2} \leq C (\rho a)^{5/2}(\rho a^3)^{- 2 \nu} K_\ell^{-1}.
\end{align*}
From which Theorem \ref{TL.thm}  follows  since $\nu < \eta/3$.
\qed

\begin{proof}[Proof of Lemma \ref{lem:convexity2}]
Let us define $w_p(\rho) = \sqrt{p^4 + 16\pi \rho a T^{-1} p^2}$ and $G(x) = \log (1-e^{-x})$. For all $p$ we have
\begin{align*}
w_p'(\rho) = \frac{8\pi a T^{-1} p^2}{w_p(\rho)}, \qquad w_p''(\rho) = - \frac{\left(8\pi a T^{-1} p^2\right)^2}{w_p(\rho)^3}
\end{align*}
and $G'(x) = 1/(e^{x}-1)$, $G''(x) = - e^x / (e^x-1)^2$. We deduce that
\begin{align*}
 0 \leq \frac{\partial}{\partial \rho} G(w_p(\rho)) = \frac{8\pi a T^{-1} p^2}{w_p(\rho)^2} \frac{w_p(\rho)}{e^{w_p(\rho)}-1} \leq C e^{-p^2}
\end{align*}
and that
\begin{align*}
0\geq \frac{\partial^2}{\partial \rho^2} G(w_p(\rho)) &= - \frac{\left(8\pi a T^{-1} p^2\right)^2}{w_p(\rho)^4} \frac{w_p^2(\rho)}{(e^{w_p(\rho)}-1)^2} \left(e^{w_p(\rho)} + \frac{e^{w_p(\rho)}-1}
	{w_p(\rho)}\right) \\
		&\geq - C e^{-w_p(\rho)}(w_p(\rho) + 1)^2 \geq - C e^{-p^2/2}(p^2 + 1)^2,
\end{align*}
where we used that $w_p(\rho)^2 \geq 16\pi a T^{-1} p^2$, that $(e^{x}-1)/x \leq e^{x}$ and that $x/(e^{x}-1)) \leq C e^{-x}(x+1)$. The result follows by estimating the associated Riemann sum.
\end{proof}

\paragraph{Acknowledgements}
This work was partially funded by the Deutsche Forschungsgemeinschaft (DFG, German Research Foundation) – Project-ID 470903074 – TRR 352.
S.F., L.J., L.M., and M.O. were partially supported by the grant 0135-00166B from the Independent Research Fund Denmark and by the Villum Centre of Excellence for the Mathematics of Quantum Theory (QMATH) with Grant No.10059. M.O. was partially supported by the European Union’s HORIZON-MSCA-2022-PF-01 grant agreement, project number: 101103304 (UniBoGas). Furthermore, S.F., L.M. and M.O. were partially funded by the European Union. The author T.G. was partially funded by the grant HE ERC MATCH-G.A. n. 101117299-CUP: D13C23002820006, Macroscopic properties of interacting bosons: a unified approach to the Thermodynamic Challenge, as well as the mathematics area of GSSI. Views and opinions expressed are however those of the authors only and do not necessarily reflect those of the European Union or the European Research Council. Neither the European Union nor the granting authority can be held responsible for them.

\bibliographystyle{plain}


\end{document}